\title{Stochastic Timed Games Revisited}
\author[1]{S. Akshay\thanks{Partly supported by DST-INSPIRE Faculty Grant [IFA12-MA-17].}}
\author[2]{Patricia Bouyer\thanks{Partly supported by ERC project EQualIS (308087).}}
\author[1]{Shankara Narayanan Krishna \thanks{Partly supported by CEFIPRA project AVeRTS.}} 
\author[1]{\quad Lakshmi Manasa}
\author[3]{Ashutosh Trivedi}
\affil[1]{
  Department of Computer Science \& Engineering, IIT Bombay, India\\
  \url{{akshayss,krishnas,manasa}@cse.iitb.ac.in}} 
\affil[2]{LSV, CNRS \& ENS Cachan, Universit\'e Paris-Saclay, France\\
  \url{bouyer@lsv.fr}}
\affil[3]{University of Colorado Boulder, USA\\
  \url{ashutosh.trivedi@colorado.edu}}
\authorrunning{Akshay, Bouyer, Krishna, Manasa and Trivedi}
\newcommand\pat[1]{{\color{magenta} #1}}
\begin{document}
\maketitle

\begin{abstract}
Stochastic timed games (STGs), introduced by Bouyer and Forejt,
naturally generalize both continuous-time Markov chains and timed
automata by providing a partition of the locations between those
controlled by two players (Player Box and Player Diamond) with
competing objectives and those governed by stochastic laws. Depending
on the number of players---$2$, $1$, or $0$---subclasses of stochastic
timed games are often classified as $2\frac{1}{2}$-player,
$1\frac{1}{2}$-player, and $\frac{1}{2}$-player games where the
$\frac{1}{2}$ symbolizes the presence of the stochastic ``nature''
player. For STGs with reachability objectives it is known that
$1\frac{1}{2}$-player one-clock STGs are decidable for qualitative
objectives, and that $2\frac{1}{2}$-player three-clock STGs are
undecidable for quantitative reachability objectives. This paper
further refines the gap in this decidability spectrum. We show that
quantitative reachability objectives are already undecidable for
$1\frac{1}{2}$ player four-clock STGs, and even under the time-bounded
restriction for $2\frac{1}{2}$-player five-clock STGs. We also obtain
a class of $1\frac{1}{2}$, $2\frac{1}{2}$ player STGs for which the
quantitative reachability problem is decidable.

\end{abstract}

\section{Introduction}
\label{sec:intro}
Two-player zero-sum games over finite state-transition graphs are a
natural framework for controller synthesis for discrete event systems.
In this setting two players---say Player Box and Player Diamond (after
necessity and possibility operators)---represent the controller and
the environment, and control-program synthesis corresponds to finding
a winning (or optimal) strategy of the controller for some given
performance objective. Finite graphs, however, often do not
satisfactorily model real-time safety-critical systems as they
disregard not only the continuous dynamics of the physical environment
but also the presence of stochastic behavior. Stochastic behavior in
such systems stems from many different sources, e.g., faulty or
unreliable sensors or actuators, uncertainty in timing delays, the
random coin flips of distributed communication and security protocols.

Timed automata~\cite{AD94} were introduced as a formalism to model
asynchronous real-time systems interacting with a continuous physical
environment. Timed automata and their two-player
counterparts~\cite{AMPS98} provide an intuitive and semantically
unambiguous way to model non-stochastic real-time systems, and a
number of case-studies~\cite{UP01} demonstrate their application in
the design and analysis of real-time systems. On the other hand,
classical formalisms (discrete-time and continuous-time) Markov
decision processes (MDPs) and stochastic games~\cite{Put94,FV97}
naturally model analysis and synthesis problems for stochastic
systems, and have been applied in control theory, operations research,
and economics.

For the formal analysis of stochastic real-time systems, a number of
recent works considered a combination of stochastic features with
timed automata, e.g. probabilistic timed automata~\cite{KNSS02},
continuous probabilistic timed automata~\cite{KNSS00} and stochastic
timed automata~\cite{journal-sta}. Probabilistic timed automata,
respectively continuous probabilistic and stochastic timed automata
can be considered as generalizations of timed automata with the
features of discrete-time Markov decision processes, respectively
continuous-time Markov chains~\cite{BHHK03} (or even generalized
semi-Markov processes~\cite{BKK+11b}).  Stochastic timed
games~\cite{icalp09} form the most general formalism for studying
controller-synthesis for stochastic real-time systems. These games can
be considered as interactions between three players---Player Box,
Player Diamond and the stochastic player (Nature)---such that Player
Box and Player Diamond are adversarial and choose their delay and
action so as to maximize and minimize probability to reach a given set
of target states, while the stochastic player plays according to a
given probability distribution. A key verification problem in this
setting is that of games with reachability objectives, where the goal
of Player Diamond is to reach a set of target states, while the goal
of the Player Box is to avoid it.
 
  \noindent{\bf Related Work.} Probabilistic timed
    automata~\cite{KNSS02} and games~\cite{FKNT10} can be considered
    as subclasses of stochastic timed games where all of the locations
    controlled by stochastic players are \emph{urgent} (no time delay
    allowed), while the decision-stochastic timed automata
    of~\cite{BBG14} can be seen as a subclass of $1\frac{1}{2}$-player
    STGs where the locations of the rational players are urgent. The
    quantitative reachability problem for probabilistic timed automata
    is known to be decidable~\cite{KNSS02} with any
    number of clocks, while the best known decidability result for the
    quantitative reachability problem for $1\frac{1}{2}$-player STGs
    is using a single clock. $\frac{1}{2}$-player STGs, also called
    stochastic timed automata (STA)~\cite{journal-sta}, have also
    received considerable attention: an abstraction based on the
    region abstraction has been proposed, which allows to solve the
    qualitative reachability problem under a \emph{fairness}
    assumption on the STA (several subclasses of STAs have been proven
    to be fair).
    For quantitative reachability, the only decidability result is for
    a subclass of single-clock STA~\cite{qest08}, but a recent
    approximability result has been shown in~\cite{BBBC16} for the
    class of \emph{fair STA}.

Other variants of stochastic timed automata have been studied in
  the past. The model in~\cite{KNSS00} uses ``countdown clocks''
  (which decrease from a set value) unlike the more timed-automata
  style of clock variables used in our model. The model in~\cite{BDHK06} (which is also called stochastic timed automata; we shall refer to them here as Modest-STA) is very general and encompasses most models with time and probabilities (and in particular the STA of~\cite{journal-sta}). However, Modest-STA is more aimed at capturing general languages (and providing a tool-set to simulate their runs) and less with decidability issues, and hence is orthogonal to our approach.

\noindent{\bf Contributions.}
The scope of this paper is to investigate decidability of
  the reachability problem in
  STGs as defined in~\cite{icalp09}, for which
  the decidability picture is far from complete. In~\cite{icalp09},
  the authors showed the decidability of qualitative reachability
  problem on $1$-clock $1\frac{1}{2}$-player STGs, and the
  undecidability of quantitative reachability problem on STGs (with
  $2\frac{1}{2}$-players). This leaves a wide gap in the decidability
  horizon of STGs.  In this paper, we study $1\frac{1}{2}$,
  $2\frac{1}{2}$-player games and contribute to a better understanding
  of the decidability status of STGs with
  quantitative reachability objectives.
\begin{table}[t]
  \begin{center}
    \begin{tabular}{|c|c|c|c|}
      \hline
      Model &  & Qualitative Results & Quantitative Results \\
      \hline
      \multirow{2}{*}{$\frac{1}{2}$ player}&  1 clock & Dec.
      \cite{lics08} & Dec. (some restrictions) \cite{qest08}\\
      \cline{2-4}
      &  $n$ clocks & \begin{tabular}{@{}c@{}} Open in
          general \\ Dec. (fair)~\cite{journal-sta} \end{tabular} & \begin{tabular}{@{}c@{}} Open in
          general \\ Approx. (fair)~\cite{BBBC16} \end{tabular} \\
      \hline
      \multirow{2}{*}{$1 \frac{1}{2}$ player}&  1 clock & Dec. \cite{icalp09}& \textcolor{green!40!black}{\bf{Dec. (Initialized, Theorem \ref{thm:quant-dec}) }} \\
      \cline{2-4}
      &  $n$ clocks & Open & \begin{tabular}{@{}c@{}}
        \textcolor{green!40!black}{\bf{Undec. (Theorem
            \ref{thm:undec-one})}} \\ \textcolor{green!40!black}{Conj: \textbf{Undec.}} (Time
        bounded) \end{tabular}\\
      \hline
      \multirow{2}{*}{$2 \frac{1}{2}$ player}&  1 clock & \textcolor{green!40!black}{Conj: \textbf{Dec.}}
      & \textcolor{green!40!black}{\bf{Dec. (Initialized, Corollary~\ref{cor:dec})}} \\
      \cline{2-4}
      &  $n$ clocks & Open &
      \begin{tabular}{@{}c@{}} Undec~\cite{icalp09} \\
      \textcolor{green!40!black}{\bf{Undec. (Time bounded, Theorem
          \ref{thm:undec-two})}} \end{tabular}\\
      \hline
    \end{tabular}
\end{center}
    \caption{
     Results in bold are contributions from
      this paper.  ``Conj'' are conjectures.}
    \label{tab:res}
    \vspace{-2em}
  \end{table}

Table~\ref{tab:res} summarizes the results presented in this paper. We show that the quantitative reachability problem is already undecidable for $1\frac{1}{2}$-player games for systems with 4 or more clocks and  for $2\frac{1}{2}$-player games the quantitative reachability problem remains undecidable even under the time-bounded restriction with 5 or more clocks. Another key contribution of this paper is the characterization of a previously unexplored subclass of stochastic timed games for which we recover decidability of quantitative reachability game for $1\frac{1}{2}$ (and even $2\frac{1}{2}$)-player stochastic timed games.   We call a 1-clock stochastic timed game \emph{initialized} if (i) all the transitions from non-stochastic states to stochastic states reset the clock, and (ii) in every bounded cycle, the clock is reset. The definition can be generalized to multiple clocks using the notion of strong reset 
where one resets all the clocks together. 
For some of the  gaps in this spectrum, we provide our best conjectures  as justified in the Discussion section:--the undecidability of time-bounded quantitative reachability for $1\frac{1}{2}$-player STG, and the decidability of qualitative reachability  of 1-clock $2\frac{1}{2}$-player STG. Due to lack of space, details of some proofs can be found in 
the Appendix.

\section{Stochastic Timed Games}
\label{sec:defn}
\label{sec:prelims}
We use standard notations for the set of reals ($\Real$),
  rationals ($\mathbb{Q}$), and integers ($\Int$), and add subscripts
  to indicate additional constraints (for instance $\Rplus$ is for the
  set of non-negative reals).
Let $\pclocks$ be a finite set of real-valued variables called
\emph{clocks}. A \emph{valuation} on $\pclocks$ is a function $\nu :
\pclocks \to \Rplus$.  We assume an arbitrary but fixed ordering on
the clocks and write $x_i$ for the clock with order $i$.  This allows
us to treat a valuation $\nu$ as a point $(\nu(x_1), \nu(x_2), \ldots,
\nu(x_n)) \in \Rplus^{|\pclocks|}$.  Abusing notations slightly, we
use a valuation on $\pclocks$ and a point in $\Rplus^{|\pclocks|}$
interchangeably.  For a subset of clocks $X \subseteq \pclocks$ and
valuation $\nu \in \V$, we write $\nu[X{:=}0]$ for the valuation where
$\nu[X{:=}0](x) = 0$ if $x \in X$, and $\nu[X{:=}0](x) = \nu(x)$
otherwise. For $t\in\Rplus$, write $\nu+t$ for the valuation defined
by $\nu(x)+t$ for all $x\in X$.  The valuation $\zero \in \V$ is a
special valuation such that $\zero(x) = 0$ for all $x \in \pclocks$.
A clock constraint over $\pclocks$ is a subset of
$\Rplus^{|\pclocks|}$ defined by a (finite) conjunction of constraints
of the form $x \bowtie k,$ where $k \in \Int_{\ge 0}$, $x \in
\pclocks$, and $\mathord{\bowtie} \in \{<,\leq, =, >, \geq\}$. We
write $\rect(\pclocks)$ for the set of clock constraints.
For a constraint
$g \in \rect(\pclocks)$, and a valuation $\nu$, we write $\nu \models
g$ to represent the fact that valuation $\nu$ satisfies constraint
$g$ (defined in a natural way).

A timed automaton (TA)~\cite{AD94} is a tuple $\Aa=(L, \pclocks, E,
\Ii)$ such that (i) $L$ is a finite set of locations, (ii) $\pclocks$
is a finite set of clocks, (iii) $E \subseteq L \times \rect(\pclocks)
\times 2^{\pclocks} \times L$ is a finite set of edges, (iv) $\Ii : L
\rightarrow \rect(\pclocks)$ assigns an invariant to each location. A
state $s$ of a timed automaton is a pair $s=(\ell, \nu) \in L \times
\Rplus^{|\pclocks|}$ such that $\nu \models {\Ii}(\ell)$ (the clock
valuation should satisfy the invariant of the location).  If $s=(\ell,
\nu)$, and $t \in \Rplus$, we write $s+t$ for the state $(\ell,
\nu+t)$.  A transition $(t,e)$ from a state $s=(\ell, \nu)$ to a state
$s'=(\ell', \nu')$ is written as $s \xrightarrow{t,e} s'$ if $e=(\ell,
g, C, \ell') \in E$, such that $\nu+t \models g$, and for every $0
\leq t' \leq t$ we have $\nu+t' \models \Ii(\ell)$ and
$\nu'=\nu+t[C{:=}0](x)$. A run is a finite or infinite sequence of
transitions $\rho=s_0 \xrightarrow{t_1,e_1}s_1 \xrightarrow{t_2,e_2}
s_2 \dots $ of states and transitions. An edge $e$ is enabled from $s$
whenever there is a state $s'$ such that $s \xrightarrow{0,e} s'$.
Given a state $s$ of $\Aa$ and an edge $e$, we define $I(s,e)=\{t \in
\Rplus \mid s \xrightarrow{t,e} s'\}$ for some $s'$ and
$I(s)=\bigcup_{e \in E}I(s,e)$.  We say that $\Aa$ is non-blocking iff
for all states $s$, $I(s) \neq \emptyset$.  Now we are ready to
introduce stochastic timed games.
\begin{definition}[Stochastic Timed Games~\cite{icalp09}]
  A \emph{stochastic timed game (STG)} is a tuple $\mathcal{G} =
  (\mathcal{A},(L_{\Box},L_{\Diamond},L_{\bigcirc}),\omega,\mu)$ where
\begin{itemize}
\item $\Aa {=} (L, \pclocks, E, \Ii)$ is a timed automaton;
\item $L_{\Box}, L_{\Diamond}$, and $L_{\bigcirc}$ form a partition of
  $L$ characterizing the set of locations controlled by players $\Box$
  and $\Diamond$ and the stochastic player, respectively;
\item $\omega: E(L_{\bigcirc}) \to \Int_{>0}$ assigns some
    positive weight to each edge originating from $L_{\bigcirc}$
    (notation $E(L_{\bigcirc})$);
\item
  $\mu$ is a function assigning a measure over $I(s)$ to all states $s$ $\in$
  $L_{\bigcirc}\times \V$ satisfying the properties that $\mu (s)(I(s))=1$ and
  for Lebesgue measure $\lambda$, if $\lambda(I(s))>0$ then 
  for each measurable set $B \subseteq I(s)$ we have $\lambda(B) = 0$ if and
  only if $\mu(s)(B) = 0$.
 \end{itemize}
\end{definition}
The timed automaton $\Aa$ is said equipped with uniform distributions
over delays if for every state $s$, $I(s)$ is bounded, and $\mu(s)$ is
the uniform distribution over $I(s)$.  The timed automaton $\Aa$ is
said equipped with exponential distributions over delays whenever, for
every state $s$, either $I(s)$ has Lebesgue measure zero, or
$I(s){=}\Rplus$ and for every location $l$, there is a positive
rational $\alpha_l$ such that $\mu(s)(I(s)){=}\int_{t \in I} \alpha_l
e^{-\alpha_l t} dt$.  
For $s \in L_{\bigcirc}\times \V$, both delays and discrete moves will
be chosen probabilistically: from $s$, a delay $t$ is chosen following
the probability distribution over delays $\mu(s)$.  Then, from state
$s+t$, an enabled edge is selected following a discrete probability
distribution that is given in a usual way with the weight function
$w$: in state $s + t$, the probability of edge $e$ (if enabled),
denoted $p(s+t)(e)$ is $w(e) / \sum_{e'} \set{w(e') \mid e'~ \mbox{is
    enabled in}~s+t}$.  This way of probabilizing behaviours in timed
automata has been presented in \cite{journal-sta}.

If $L_\Box {=} \emptyset$ then the STGs are called $1\frac{1}{2}$ STGs
or $1 \frac{1}{2}$-player STGs while STGs with $L_\Box {=} L_\Diamond
{=} \emptyset$ are called $\frac{1}{2}$ STGs or $\frac{1}{2}$-player
STGs or STAs.  We often refer to $ l {\in} L_{\bigcirc}$ as stochastic
nodes, $l \in L_{\Box}$ as box (or $\Box$) nodes and $l \in
L_{\Diamond}$ as diamond (or $\Diamond$) nodes.

Fix a STG $\mathcal{G} =
(\mathcal{A},(L_{\Box},L_{\Diamond},L_{\bigcirc}),\omega,\mu)$ with
$\Aa=(L, \pclocks, E, \Ii)$ for the rest of this section.

\smallskip
\noindent{\bf Strategies, Profiles, and Runs.}
A strategy for Player $\Box$ (resp. $\Diamond$) is a function that
maps a finite run $\rho= s_0 \xrightarrow{t_0,e_0}
s_1\xrightarrow{t_1,e_1} \dots s_n$ to a pair $(t,e)$ such that $s_n
\xrightarrow{t,e} s'$ for some state $s'$, whenever $s_n=
(\ell_n,\nu_n)$ and $\ell_n \in L_\Box$ (resp. $\ell_n \in
L_\Diamond$).  In this work we focus on deterministic strategies,
  though randomized strategies could also make sense; nevertheless
  understanding the case of deterministic strategies is already
  challenging.
A strategy profile is a pair $\Lambda=(\lambda_\Diamond,\lambda_\Box)$
where $\lambda_\Diamond, \lambda_{\Box}$ respectively are strategies
of players $\Diamond$ and $\Box$.  In order to measure probabilities
of certain sets of runs, the following measurability condition is
imposed on strategy profiles $\Lambda=(\lambda_\Diamond,
\lambda_\Box)$: for every finite sequence of edges $e_1, \dots, e_n$
and every state $s$, the function $\kappa_s : (t_1, \dots, t_n)
\rightarrow (t,e)$ defined by $\kappa_s(t_1, \dots, t_n)=(t,e)$ iff
$\Lambda(s \xrightarrow{t_1,e_1} s_1 \xrightarrow{t_2,e_2} s_2 \dots
\xrightarrow{t_n,e_n} s_n)=(t,e)$, should be measurable.

Given a finite run $\rho$ ending in state $s_0$, and a strategy
profile $\Lambda$, define $Runs(\mathcal{G}, \rho, \Lambda)$
(resp. $Runs^\omega(\mathcal{G}, \rho, \Lambda)$) to be the set of all
finite (resp. infinite) runs generated by $\Lambda$ after prefix
$\rho$; that is, the set of all runs of the automaton satisfying the
following condition: If $s_i=(\ell_i, \nu_i)$ and $\ell_i \in
L_{\Diamond}$ (resp. $\ell_i \in L_{\Box}$), then $\lambda_{\Diamond}$
(resp.  $\lambda_{\Box}$) returns $(t_{i+1}, e_{i+1})$ when applied to
$\rho \xrightarrow{t_1,e_1} s_1 \xrightarrow{t_2,e_2} \dots
\xrightarrow{t_i,e_i} s_i$.  Given a finite sequence $e_1,\dots,e_n$
of edges, a symbolic path $\pi_\Lambda(\rho,e_1\dots e_n)$ is defined
as
\[
\pi_\Lambda(\rho,e_1\dots e_n)=\{\rho' \in Runs(\mathcal{G}, \rho,
\Lambda) \mid \rho'= \rho \xrightarrow{t_1,e_1} s_1
\xrightarrow{t_2,e_2} s_2 \dots \xrightarrow{t_n,e_n} s_n,
~\mbox{with}~t_i \in \Rplus\}.
\]
When $\Lambda$ is clear, we simply write $\pi(\rho,e_1\dots e_n)$.

\smallskip
\noindent{\bf Probability Measure of a Strategy Profile.}
Given a strategy profile $\Lambda=(\lambda_{\Diamond},
\lambda_{\Box})$, and a finite run $\rho$ ending in $s=(\ell, \nu)$, a
measure $\mathcal{P}_{\Lambda}$ can be defined on the set
$Run(\mathcal{G}, \rho, \Lambda)$, following~\cite{icalp09}: First,
for the empty sequence $\epsilon$,
$\mathcal{P}_{\Lambda}(\pi(\rho,\epsilon))=1$, and
\begin{itemize} 
\item If $\ell \in L_{\Diamond}$ (resp. $\ell \in L_\Box$), and
  $\lambda_{\Diamond}(\rho) =(t,e)$ (resp. $\lambda_{\Box}(\rho)
  =(t,e)$), then $\mathcal{P}_{\Lambda}(\pi(\rho, e_1\dots e_n))$
  equals $0$ if $e_1 \neq e$ and equals
  $\mathcal{P}_{\Lambda}(\pi(\rho \xrightarrow{t,e} s', e_2\dots
  e_n))$, otherwise.
\item If $\ell {\in} L_{\bigcirc}$ then
  $\mathcal{P}_{\Lambda}(\pi(\rho, e_1\dots e_n)) {=} \int_{t\in I(s,e_1)}
  p(s+t)(e_1)\ \cdot\ \mathcal{P}_{\Lambda}(\pi(\rho \xrightarrow{t,e_1} s',
  e_2\dots e_n))\ d\mu(s)(t)$
  where $s \xrightarrow{t,e_1} s'$ for every $t \in I(s,e_1)$.
\end{itemize}
The cylinder generated by a symbolic path is defined as follows: an
infinite run $\rho''$ is in the cylinder generated by
$\pi_{\Lambda}(\rho, e_1, \dots, e_n)$ denoted
$\mathsf{Cyl}(\pi_{\Lambda}(\rho, e_1, \dots, e_n))$ if $\rho'' \in
Runs^{\omega}(\stg, \rho, \Lambda)$ and there is a finite prefix
$\rho'$ of $\rho''$ such that $\rho' \in \pi_{\Lambda}(\rho, e_1,
\dots, e_n)$.  It is routine to extend the above measure
$\mathcal{P}_{\Lambda}$ to cylinders, and thereafter to the generated
$\sigma$-algebra; extending~\cite{journal-sta}, one can show this is
indeed a probability measure over $Runs^{\omega}(\stg, \rho,
\Lambda)$.

 \smallskip
 \noindent{\bf Example.}
An example of a STG is shown in the adjoining
figure. In this example all the locations belong to stochastic player
(this is an $\frac{1}{2}$ STG) and there is only one clock named
$x$.  \begin{wrapfigure}[5]{o}[-5pt]{.4\textwidth}
  \centering
   \scalebox{0.7}{                                                                                                                                                       
   \begin{tikzpicture}[->,thick]                                                                                                                                          
     \node[initial, cir, initial text ={},label=below: $x \leq 1$] at (0,0) (A) {A} ;                                                                                             \node[cir,label=below: $x \leq 2$] at (4,0) (B) {B}; 
     \node[cir] at (6,0) (D) {D};
     \path (A) edge node [above]{$x\leq 1,\ e_1$} node [below]{$x:\equal0$}(B);
     \path(A) edge [loop above] node [above]{$x \leq 1, e_3$}(A); 
     \path (B) edge node [above]{$x\geq 1,\ e_2$} (D);
     \path (B) edge[bend left] node [below]{$x\leq 2,\ e_4$}(A);
   \end{tikzpicture}                                                                                                                                                      
   }                                                                                                                                                                     
   \label{ex1}                                                                                                                                                        
  \end{wrapfigure}
 We explain here the method for computing
probabilities.  We assume uniform distribution over delays at all
states, and initial state $s_0 = (A,0)$.  Let $d\mu_{(A,0)}$ be the
uniform distribution over $[0,1]$ and $d\mu_{(B,0)}$ uniform
distribution over $[0,2]$.  Then
  $\mathcal{P}(\pi((A,0),e_1e_2))$ equals
  \[
  \int_0^1 \frac{\mathcal{P}(\pi((B,0),e_2))}{2}d\mu_{(A,0)}(t)
  =\int_0^1 \frac{1}{2}(\int_1^2\frac{1}{2}d\mu_{(B,0)}(u))\
  d\mu_{(A,0)}(t) = \frac{1}{2}\int_0^1 (\int_1^2 \frac{1}{2}
  \frac{1}{2} du))\ dt)=\frac{1}{8}.
  \]
 \smallskip
 \noindent{\bf Reachability Problem.}
 We study the reachability
problem for STGs, stated as follows.  Given a STG $\mathcal{G}$ with a
set $T$ of target locations, an initial state $s_0$ and a threshold
$\bowtie p$ with $p \in [0,1] \cap \mathbb{Q}$, decide whether there
is a strategy $\lambda_\Diamond$ for Player~$\Diamond$ such that for
every strategy $\lambda_\Box$ for Player~$\Box$,
$\mathcal{P}_{\Lambda}(\{\rho \in Run(\mathcal{G},s_0,\Lambda) \mid
\rho\ \text{visits}\ T\}) \bowtie p$, with
$\Lambda=(\lambda_\Diamond,\lambda_\Box)$.  There are two categories of
reachability questions:
\begin{enumerate}
\item \textbf{Quantitative reachability}: The constraint on
  probability involves $0 < p< 1$.
\item \textbf{Qualitative reachability}: The constraint on probability
  involves $p \in \{0,1\}$.
\end{enumerate}
The key results of the paper are the following:
\begin{theorem}
The quantitative reachability problem is 
\begin{enumerate}
\item Undecidable for $1\frac{1}{2}$ STGs with 4 or more clocks;
\item Undecidable for $2 \frac{1}{2}$ STGs with 5 or more clocks even
  under the time-bounded semantics;
\item Decidable for $1\frac{1}{2}$ and $2 \frac{1}{2}$
  \emph{initialized} STGs with one clock.
\end{enumerate}
\end{theorem}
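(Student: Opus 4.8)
The plan is to treat the three parts separately: parts (1) and (2) are undecidability results, proved by reductions from the halting problem of deterministic two-counter (Minsky) machines, whereas part (3) is a decidability result obtained through a finite abstraction of the continuous dynamics. For part (1) I would fix a two-counter machine $M$ and construct a $1\frac{1}{2}$-player STG $\mathcal{G}_M$ together with a rational threshold $p$ such that Player $\Diamond$ has a strategy guaranteeing reachability probability $\bowtie p$ if and only if $M$ halts. A configuration with counter values $(c_1,c_2)$ is encoded by a clock valuation, e.g.\ storing $2^{-c_1}$ and $2^{-c_2}$ so that increment is halving, decrement is doubling, and a zero-test is a comparison of the encoding clock against $1$. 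Each instruction of $M$ becomes a module in which $\Diamond$ chooses timed delays realizing the corresponding arithmetic, using auxiliary clocks for the standard halving/doubling gadgets; this is why four clocks suffice---two holding the counter encodings and two scratch clocks to copy and rescale values. The crucial device is a stochastic node placed after each delay move of $\Diamond$: the weights $\omega$ and the delay distribution $\mu$ are tuned so that the probability of eventually reaching the target is maximized exactly when $\Diamond$'s delay matches the value dictated by a faithful simulation and strictly drops under any deviation. Aggregating the contributions along a run, a faithful halting computation attains probability $p$ (matching $\bowtie$), while any cheating strictly lowers it, so the threshold comparison captures halting.

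For part (2), the $2\frac{1}{2}$-player time-bounded result, I would reuse the same counter encoding while addressing two additional difficulties. First, under the time-bounded semantics the total elapsed time is bounded (say by $1$), yet a halting run may take arbitrarily many steps; I would therefore rescale time so that the $k$-th simulated step consumes a geometrically shrinking slice of the budget, letting unboundedly many steps fit within bounded time, at the cost of one extra clock tracking the global deadline. Second, Player $\Box$ is introduced as an adversary that may \emph{challenge} a claimed step: at designated points $\Box$ can branch into a verification gadget (again realized by a stochastic node) that rewards $\Box$ precisely when $\Diamond$ has cheated, forcing $\Diamond$ to simulate faithfully against every possible challenge. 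Combining the geometric time-compression, the challenge mechanism, and the arithmetic gadgets is what raises the clock count to five.

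For part (3) the key is that the \emph{initialized} restriction makes the region abstraction sound for probabilities. With a single clock, the clock is reset on every edge entering a stochastic location, so from any stochastic state the valuation is $\zero$ and the delay distribution $\mu$ depends only on the location, not on the history; moreover, requiring that every bounded cycle resets the clock prevents unbounded information being carried in the clock across a cycle. I would build the region graph of $\mathcal{A}$ and show that the reachability value from each (location, region) pair satisfies a finite system of equations: a maximization at $\Diamond$-regions, a minimization at $\Box$-regions, and a weighted integral---which, thanks to the reset, evaluates to a fixed computable quantity---at $\bigcirc$-regions. This collapses $\mathcal{G}$ to a finite simple stochastic game, an MDP in the $1\frac{1}{2}$-player case, whose quantitative reachability value is computable and comparable to the rational threshold $p$; the $2\frac{1}{2}$-player statement (Corollary~\ref{cor:dec}) follows from the same abstraction applied to finite stochastic games.

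The hardest part I expect to be the probabilistic correctness of the undecidability reductions: designing the stochastic gadgets so that the reachability probability is a strictly monotone, exactly computable function of the faithfulness of the simulation, and proving that no combination of delay deviations by $\Diamond$ (and, in part (2), no interplay with $\Box$'s challenges) can beat the threshold unless $M$ genuinely halts. For part (3) the delicate point is establishing soundness of the abstraction---verifying that the initialization condition is exactly what precludes the accumulation of clock information that would otherwise obstruct a finite encoding, so that the integral terms collapse to constants and the value equations are genuinely finite and solvable.
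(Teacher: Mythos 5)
The genuine gap is in your part (1). You reduce from the \emph{halting} problem and claim that ``a faithful halting computation attains probability $p$ \dots while any cheating strictly lowers it.'' In a $1\frac{1}{2}$-player game there is no $\Box$ player to catch cheating, so verification must be performed by the stochastic player; in the paper's construction, after each simulated instruction the stochastic node continues the simulation with probability $\frac{1}{2}$ and enters a verification gadget (GetProb) with probability $\frac{1}{2}$, where the target is reached with probability $\frac{1}{2}(1-4\epsilon^2)\le\frac{1}{2}$, $\epsilon$ being Diamond's deviation. Because probability mass leaks into verification at every step, a faithful simulation of $k$ instructions accumulates only the finite sum $\sum_{i=1}^{k}\left(\frac{1}{2}\right)^i\cdot\frac{1}{2}=\frac{1}{2}(1-2^{-k})<\frac{1}{2}$, a value that depends on $k$: there is no fixed threshold that a faithful \emph{halting} run ``attains.'' Only the infinite sum along a faithful \emph{non-halting} run equals $\frac{1}{2}$ exactly. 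This is precisely why the paper reduces from the non-halting problem (which it stresses is the crucial difference from the earlier $2\frac{1}{2}$-player proof): Diamond can reach the target with probability $\frac{1}{2}$ iff $\mathcal{M}$ does not halt, since halting---and every deviation---yields probability strictly below $\frac{1}{2}$. A halting-based design would have to make the value of halting runs independent of their length and simultaneously rule out that an accumulation of small deviations faking a halt beats the threshold; that is exactly the obstruction the paper's non-halting reduction is built to avoid.

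Your parts (2) and (3) do follow essentially the paper's route: for (2), a halting reduction where Player $\Box$ may branch into probabilistic check gadgets and the $k$-th instruction consumes a geometrically shrinking time slice (the paper uses $\frac{1}{2^k}$, with global bound $\Delta=5$); for (3), a region construction that, under the initialized restriction, collapses to a finite MDP (resp.\ finite stochastic game). Two caveats. In (2) you cannot literally ``reuse the same counter encoding'': with a static encoding $2^{-c_1}$ the delays needed per instruction do not shrink with $k$, so unboundedly many steps cannot fit in bounded time; the paper re-encodes configurations as $\frac{1}{2^{k+c_1}3^{k+c_2}}$ precisely so that the encoding itself scales down with the step number. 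In (3) the integrals at stochastic nodes do not collapse to ``fixed computable quantities'' for arbitrary distributions: the paper must additionally impose exponential distributions with location-dependent rates, so that all edge probabilities of the resulting MDP are rational functions in $e^{-1/q}$, which is what makes comparison with a rational threshold decidable.
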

Mentioned restrictions (time-bounded semantics and initialized)
  will be introduced when needed.
In Section~\ref{undec}, we deal with the quantitative reachability
problem, where we show strengthened undecidability results. In
Section~\ref{sec:quant-dec}, we explore a new model of STGs with a
single clock and an initialized restriction to recover decidability
for the quantitative reachability problem. In Section \ref{sec:discuss}, we 
discuss the intrinsic difficulties and challenges ahead, summarize our key contributions and conjectures.

\section{Undecidability Results for Quantitative Reachability}
\label{undec}

In this section, we focus on the quantitative reachability problem for
STGs. We strengthen the existing undecidability result, which holds
for $2 \frac{1}{2}$ STGs~\cite{icalp09}, in two distinct
directions. First, we show the undecidability of the quantitative
reachability problem 
in $1 \frac{1}{2}$ STGs, improving from $2 \frac{1}{2}$. Second, we
show the undecidability of the quantitative reachability problem for
$2 \frac{1}{2}$ STGs even in the time-bounded setting.

For both results, given a two-counter machine, we construct
respectively, $1 \frac{1}{2}$ and $2 \frac{1}{2}$ STGs whose building
blocks are the modules for the instructions in the two-counter
machine. The objective of player $\Diamond$ is linked to a faithful
simulation of various increment, decrement and zero-test instructions
of the two-counter machine by choosing appropriate delays to adjust
the clocks to reflect changes in counter values. 
 However, the two proofs
differ in how this verification is done and even in the problem from
which the reduction is done, i.e., halting/non-halting for two-counter
machines. This results in two quite different and non-trivial
reductions as described in Subsection~\ref{one} and
Subsection~\ref{two} respectively. 

\subsection{Quantitative reachability for $1 \frac{1}{2}$ STGs}
\label{one}
As mentioned above, in the case of $1 \frac{1}{2}$ STGs we improve the
corresponding result of~\cite{icalp09} for $2\frac{1}{2}$ STGs. But
unlike in~\cite{icalp09}, we reduce from the \emph{non-halting
  problem} for two-counter machines to the existence of a winning
strategy for Player $\Diamond$ with the desired objective. This
crucial difference makes it possible for the probabilistic player to
verify the simulation performed by player $\Diamond$.

\begin{theorem} \label{thm:undec-one} 
  The quantitative reachability problem is undecidable for
  $1\frac{1}{2}$ STGs with $\ge 4$ clocks.
\end{theorem}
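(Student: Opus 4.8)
The plan is to reduce from the non-halting problem for two-counter (Minsky) machines. Given a deterministic two-counter machine $M$, I would build a $1\tfrac12$ STG $\mathcal{G}_M$ together with a target set $T$ and a threshold, so that Player $\Diamond$ has a strategy guaranteeing reachability probability $\bowtie p$ if and only if $M$ does \emph{not} halt. The encoding of a configuration with counter values $(c_1,c_2)$ will use clock values: I would devote clocks to carry the counters through fractional offsets, e.g.\ encode $c_1,c_2$ by valuations of the form $1-1/2^{c_1}$ and $1-1/3^{c_2}$ (or by $2^{-c_1}, 3^{-c_2}$), so that increment halves (resp.\ thirds) a gap and decrement doubles it. With four clocks available, I would keep two clocks for the two counters, plus auxiliary clocks that are reset during each instruction gadget to measure the elapsed time and enforce that $\Diamond$'s chosen delays realize exactly the intended arithmetic transformation.

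The core idea is to use the stochastic player (Nature) as a \emph{verifier} of $\Diamond$'s simulation. In each instruction module, Player $\Diamond$ proposes delays that are supposed to implement the increment, decrement, or zero-test faithfully; the correctness of a proposed delay is a measure-zero event (an exact equality between clock values), which a deterministic player cannot be trusted to achieve. So I would design each module so that, after $\Diamond$ commits to a delay, control passes to a stochastic location from which, with a carefully chosen weighted branching and uniform delay distributions, the run reaches $T$ with a probability that is \emph{maximized exactly when $\Diamond$ cheated} and is strictly smaller (by a computable amount) when $\Diamond$ simulated faithfully. Integrating the uniform density over the delay interval turns the exact-equality condition into an explicit probability value, and summing the geometric contributions of the verification branches across all steps yields a closed-form target probability $p$. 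Because the reduction is from non-halting, a faithful infinite simulation must keep the cumulative reach-probability at or below the threshold forever, whereas any halting computation—or any cheating step—pushes it across $p$; this is precisely the asymmetry that lets Nature detect faithfulness without a second rational player.

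The principal technical obstacle is controlling the probabilities so that the single threshold $\bowtie p$ cleanly separates the two cases. I need the per-step verification to contribute a probability that is (i) the same regardless of the current counter values, so that the bookkeeping does not drift, and (ii) strictly sensitive to cheating, so that one faulty step is detectable against the infinite-horizon accumulation. The delicate point is that the reach-probability of an infinite faithful run is itself an infinite sum (or product), and I must ensure this sum converges to exactly the chosen rational threshold $p$ while any deviation is bounded away from it. Managing this requires choosing the edge weights $\omega$ and the invariants so that the integrals $\int p(s+t)(e)\,d\mu(s)(t)$ evaluate to rational constants independent of the encoded configuration—this is where the clock arithmetic and the uniform-distribution normalization must be balanced against each other with care. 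A secondary subtlety is fitting the whole construction into four clocks: resets must be scheduled so that at the entry of each module the invariants match the previous module's exit, and so that the verification gadget can recover the pre-instruction and post-instruction clock values it needs to compare; I expect reusing clocks across the increment/decrement/zero-test gadgets via strategic resets to be the most error-prone part of the formalization.
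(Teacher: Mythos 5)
You follow the same overall route as the paper---reduction from the non-halting problem, counters encoded as clock values of the form $2^{-c}$ with auxiliary clocks, Nature entered with probability $\frac{1}{2}$ at each step as the verifier, per-step verification probabilities independent of the counter values, and a geometric sum giving a rational threshold---but you invert the one inequality on which the soundness of this scheme rests, and the inverted version fails. You ask for gadgets in which the probability of reaching $T$ is \emph{maximized when $\Diamond$ cheats} and strictly smaller under faithful simulation, so that faithful infinite simulation stays at or below $p$ while ``any halting computation---or any cheating step---pushes it across $p$.'' This makes $\Diamond$ a minimizer with respect to the threshold, and the halting direction of your reduction then needs a \emph{floor}: when the machine halts, \emph{every} strategy of $\Diamond$ must yield probability strictly greater than $p$. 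No such floor can hold. A minimizing $\Diamond$ is not confined to choosing good or bad delays inside your intended protocol: it can withhold cooperation altogether---loop forever in its own locations, let a guard expire so the run blocks, or never take the edges leading into or through the verification gadget. Any such gadget necessarily contains $\Diamond$-controlled locations between Nature's branching and the targets (in the paper's GetProb gadget nearly every intermediate location belongs to $\Diamond$), so the advertised contribution above $p$ is realized only \emph{if $\Diamond$ plays through the gadget}; a minimizer simply stops, and the run reaches $T$ with probability close to $0 \le p$ whether or not the machine halts. Nature cannot punish this, because its only sanction is to route probability mass to $T$, which is precisely what a minimizer avoids. (Your sketch also silently requires the halt location to be a target; otherwise even fully faithful play of a halting computation stays below $p$, breaking that case too.)

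The paper's construction has the opposite orientation, and that is what makes it robust: the GetProb gadget reaches $T$ with probability $\frac{1}{2}(1-4\epsilon^2)\le\frac{1}{2}$, so faithful play ($\epsilon=0$) \emph{maximizes} each contribution, and the question posed is whether $\Diamond$ can attain probability $\frac{1}{2}$. This yields a \emph{ceiling} valid against all behaviours: cheating, stalling, blocking, or abandoning the protocol can only lose probability, hence every strategy gives at most $\frac{1}{2}$, with equality exactly when every step is faithful and there are infinitely many steps, i.e., exactly when the machine does not halt. Incidentally, the premise that apparently led you to the inversion is itself mistaken: faithfulness is an exact equality, but it is not a measure-zero event from $\Diamond$'s perspective, since $\Diamond$ chooses its delays deterministically rather than at random; the paper's gadget therefore can and does reward exactness with the maximal probability, converting any deviation $\epsilon\neq 0$ into a quadratic loss via the product $\frac{1}{2}(1+2\epsilon)\cdot\frac{1}{2}(1-2\epsilon)$ of two complementary stochastic branchings. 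Flipping your gadget to this form and posing the threshold as reaching $T$ with probability $\ge \frac{1}{2}$ (equivalently $=\frac{1}{2}$) is exactly the repair needed.
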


Let ${\cal M}$ be a two-counter machine. Our reduction uses a
$1\frac{1}{2}$ player STG ${\cal G}$ with four clocks and uniform
distributions over delays, and a set of target locations $T$ such that
player $\Diamond$ has a strategy to reach $T$ with probability
$\frac{1}{2}$ iff ${\cal M}$ does not halt. Each instruction
(increment, decrement and test for zero value) is specified using a
module. The main invariant in our reduction is that upon entry into a
module, we have that $x_1=\frac{1}{2^{c_1}}, x_2=\frac{1}{2^{c_2}}$,
$x_3=x_4=0$, where $c_1$ (resp. $c_2$) is the value of counter $C_1$
(resp. $C_2$) in $\mathcal{M}$.

\begin{figure} [!h]                                                                                                                                                           \begin{center}                                                                                                                                                         
  \scalebox{0.65}{                                                                                                                                                       
    \begin{tikzpicture}[->,thick]                                                                                                                                          
      \node[initial, dia, initial text ={$x_1=\frac{1}{2^{c_1}}$}] at (-5,0) (A) {$\ell_i$} ;
      \node[dia] at (-3,0) (B) {B}; 
      \node[cir,label=above:$x_4\equal 0$] at (0,0) (C) {C};
      \node[dia] at (3,0) (D) {D};
      \node[dia] at (5,0) (F) {$\ell_j$};
      \node[dashed,rectangle,draw=red,fill=yellow] at (0,-2)(E){GetProb}; 
      
      \path (A) edge node [above]{$x_1=1$} node [below]{$\{x_1,x_4\}$} (B);
      \path (A) edge[loop below] node [below]{$x_2=1,\{x_2\}$} (A);
      \path (B) edge[loop below] node [below]{$x_2=1,\{x_2\}$} (B);
      \path (B) edge node [above]{$0{<}x_1,x_3{<}1$} node [below]{$\{x_4\}$} (C);
      \path (C) edge node[below]{$\{x_1\}$} (D);
      \path (C) edge node[left]{$\{x_2\}$} (E);
      \path (D) edge[loop above] node [above]{$x_2=1,\{x_2\}$} (D);
      \path (D) edge node [above]{$x_3=1$} node [below]{$\{x_3,x_4\}$} (F);
    \end{tikzpicture}
  }
  
  \scalebox{0.65}{
    \begin{tikzpicture}[->,thick]
      \node[cir,initial,initial where=below, initial text={}, label=left:$x_4 \leq 2$] at (0,0) (E0) {E0};
      \node[cir,accepting] at (-9,-4) (T1) {T1};
      \node[cir,accepting] at (-9, 4) (T2) {T2};
      \node[cir,accepting] at (9,-4) (T3) {T3};
      \node[cir,accepting] at (9,4) (T4) {T4};
      
      \node[cir] at (-10,-2) (R1) {R1};                  
      \node[cir] at (-10, 2) (R2) {R2};                  
      \node[cir] at (10, -2) (R3) {R3};                  
      \node[cir] at (10, 2) (R4) {R4};                  

      \node[cir,label=left:$x_4 \leq 2$] at (8,0) (P1) {P1};
      \node[cir,label=right:$x_4 \leq 2$] at (-8, 0) (P2) {P2};                                                 
      
      \node[dia] at (-6, 2) (G1) {G1};
      \node[dia] at (-6, -2) (H1) {H1};
      \node[dia] at (-4, 2) (G) {G};
      \node[dia] at (-4,-2) (H) {H};

      \node[dia] at (-2, -2) (E1) {E1};
      \node[dia] at (-2, 2) (E2) {E2};
      
      \node[dia] at (2,-2) (E3) {E3};
      \node[dia] at (2,2) (E4) {E4};

      \node[dia] at (4,-2) (I) {I};
      \node[dia] at (4,2) (J) {J};
      
      \node[dia] at (6,-2) (I1) {I1};
      \node[dia] at (6, 2) (J1) {J1};

      \path (E0) edge node [left]{$x_1 \geq 1 \wedge x_4 \leq 1$} (E1);
      \path (E0) edge node [left]{$x_3 \geq 2 \wedge x_4 \leq 2$} (E2);
      \path (E0) edge node [right]{$x_1 \leq  1$} (E3);
      \path (E0) edge node [right]{$x_4 \geq 1 \wedge x_3 \leq 2$} (E4);

      \path (E2) edge node [above]{$x_4=2$} node [below]{$\{x_2,x_4\}$} (G);
      \path (E1) edge node [above]{$x_4=2$} node [below]{$\{x_2,x_4\}$} (H);
      \path (E4) edge node [above]{$x_4=2$} node [below]{$\{x_2,x_4\}$} (J);
      \path (E3) edge node [above]{$x_4=2$} node [below]{$\{x_2,x_4\}$} (I);
      \path (G) edge[loop above] node [above]{$x_3=3,\{x_3\}$} (G);
      \path (H) edge[loop below] node [below]{$x_3=3,\{x_3\}$} (H);
      \path (J) edge[loop above] node [above]{$x_3=3,\{x_3\}$} (J);
      \path (I) edge[loop below] node [below]{$x_3=3,\{x_3\}$} (I);
      
      \path (G) edge node [above]{$x_1=3$} node [below]{$\{x_1,x_2\}$} (G1);
      \path (H) edge node [above]{$x_1=3$} node [below]{$\{x_1,x_2\}$} (H1);      
      \path (I) edge node [above]{$x_1=3$} node [below]{$\{x_1,x_2\}$} (I1);      
      \path (J) edge node [above]{$x_1=3$} node [below]{$\{x_1,x_2\}$} (J1);
      
      \path (J1) edge[bend left] node [left]{$x_4=1$} node [right, near start]{$\{x_2,x_4\}$} (P1);
      \path (I1) edge[bend right] node [left]{$x_4=1$} node [right, near start]{$\{x_2,x_4\}$} (P1);
      \path (G1) edge[bend right] node [left, near start]{$x_4=1$} node [right]{$\{x_2,x_4\}$} (P2);
      \path (H1) edge[bend left] node [left, near start]{$x_4=1$} node
            [right]{$\{x_2,x_4\}$} (P2);
            
      \path(P2) edge node[near end, right]{$x_1 \leq  1$}(T1);
      \path(P2) edge node[near end, right]{$x_4 \geq 1 \wedge x_3 \leq 2$}(T2);
      \path(P1) edge node[near end, left]{$x_1 \geq 1 \wedge x_4 \leq 1$}(T3);
      \path(P1) edge node[near end, left]{$x_3 \geq 2 \wedge x_4 \leq 2$}(T4);
      
      \path (P1) edge[bend left=10] node [right=0.2]{$x_1\leq 1$} (R3);
      \path (P1) edge node [right, near end]{$x_4\geq 1 \wedge x_3\leq 2$} (R4);
      \path (P2) edge node [left]{$x_1\geq 1\wedge x_4\leq 1$} (R1);
      \path (P2) edge node [left]{$x_3\geq 2 \wedge x_4\leq 2$} (R2);

    \end{tikzpicture}                                                                                                                                                      
  }                                                                                                                                                                     
  \caption{The Increment $c_1$ module on the top  and the GetProb
    gadget below}                                                                                                                                       
  \label{inc-c1}                                                                                                                                                        
\end{center}                                                                                                                                                         
\end{figure}
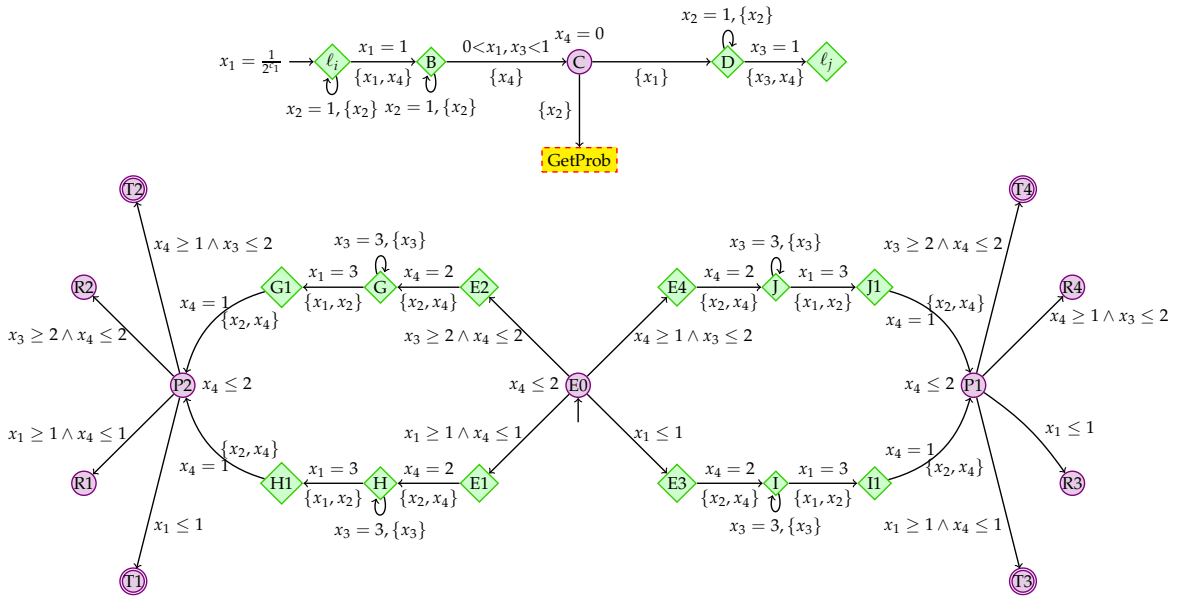 

\newcommand\Red[1]{{\textcolor{red}{#1}}}     

We outline the simulation of an increment instruction <<~$\ell_i$ :
increment counter $C_1$, goto $\ell_j$~>> in Figure \ref{inc-c1}
(top). The module is entered with values $x_1=\frac{1}{2^{c_1}},
x_2=\frac{1}{2^{c_2}}$, $x_3=x_4=0$. A time $1-\frac{1}{2^{c_1}}$ is
spent at location $\ell_i$, so that at location $B$ we have $x_1=0$,
$x_2=\frac{1}{2^{c_2}}+1-\frac{1}{2^{c_1}}$ (or
$\frac{1}{2^{c_2}}-\frac{1}{2^{c_1}}$, if $c_2>c_1$ -- we write
  in all cases $\frac{1}{2^{c_2}}+1-\frac{1}{2^{c_1}}~\text{mod}~1$),
$x_3=1-\frac{1}{2^{c_1}}$, $x_4=0$.
An amount of time $t\in (0,\frac{1}{2^{c_1}})$ is spent at $B$, which
is decided by Player~$\Diamond$. We rewrite this as
$t=\frac{1}{2^{c_1+1}}\pm \epsilon$ for $-\frac{1}{2^{c_1+1}}<
\epsilon <\frac{1}{2^{c_1+1}}$. This is because, ideally we want $t$
to be $\frac{1}{2^{c_1+1}}$ and want to consider any deviation as an
error.

Now at $C$, we have $x_1=t$,
$x_2=\frac{1}{2^{c_2}}+1-\frac{1}{2^{c_1}}+t~\text{mod}~1$,
$x_3=1-\frac{1}{2^{c_1}}+t$, $x_4=0$. The computation proceeds to $D$
with probability $\frac{1}{2}$, and the location $\ell_j$
corresponding to the next instruction $\ell_j$ is reached with $x_1=
\frac{1}{2^{c_1}}-t$, $x_2= \frac{1}{2^{c_2}}$, $x_3=x_4=0$. On the
other hand, with probability $\frac{1}{2}$, the gadget $GetProb$ is
reached. The gadget $GetProb$ has 4 target locations $T1,T2,T3,T4$,
which we will show are reached with probability $\frac{1}{2}$ from the
start location $E0$ of $GetProb$ iff $t= \frac{1}{2^{c_1+1}}$. Thus,
in this case when $t=\frac{1}{2^{c_1+1}}$, we reach $\ell_j$ with the
values $x_1=\frac{1}{2^{c_1+1}}$, $x_2=\frac{1}{2^{c_2}}$, $x_3=x_4=0$
which implies that $c_1$ has been incremented correctly according to
our encoding. We now look at the gadget $GetProb$.

\begin{lemma}
  \label{lem:proof-gadget} For any value $\epsilon\in
  (-\frac{1}{2^{c_1+1}},\frac{1}{2^{c_1+1}})$, the probability to
  reach a target location in $GetProb$ from $E0$ is
  $\frac{1}{2}(1-4\epsilon^2)$ ($\le \frac{1}{2}$). Further this
  probability is equal to $\frac{1}{2}$ iff $\epsilon=0$.
 \end{lemma}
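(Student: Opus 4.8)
The plan is to track the four clocks through the gadget and reduce the whole computation to two successive ``stochastic comparisons'', each realized by a uniform delay over $[0,2]$ at a circle node. Writing $a = 1/2^{c_1}$ and $t = a/2 + \epsilon$, I would introduce the shorthand $\beta = a/2 - \epsilon$, so that $t + \beta = a$ and $t - \beta = 2\epsilon$. Since the edge $C\to E0$ resets $x_2$, the configuration on entering $E0$ is $x_1 = t$, $x_3 = 1 - a + t = 1 - \beta$, $x_2 = x_4 = 0$, and for $\epsilon \in (-a/2, a/2)$ one has $0 < 1-t < 1 < 1+\beta < 2$. The first step is the analysis at the circle node $E0$: the invariant $x_4 \le 2$ forces the uniformly chosen delay $d$ into $[0,2]$, and I would verify that exactly one guard is satisfied on each sub-interval, namely $E3$ $(x_1{\le}1)$ on $[0,1-t]$, $E1$ $(x_1{\ge}1, x_4{\le}1)$ on $[1-t,1]$, $E4$ $(x_3{\le}2, x_4{\ge}1)$ on $[1,1+\beta]$, and $E2$ $(x_3{\ge}2)$ on $[1+\beta,2]$. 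Since $E1,E2$ lead (through $H,G$) to $P2$ and $E3,E4$ lead (through $I,J$) to $P1$, the run reaches $P2$ with probability $\frac{t+(1-\beta)}{2} = \frac{1+2\epsilon}{2}$ and $P1$ with probability $\frac{(1-t)+\beta}{2} = \frac{1-2\epsilon}{2}$.

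The crux of the argument is a \emph{resynchronization}: the randomly chosen delay $d$ must be erased. On every branch the edge $E_i \to \{G,H,I,J\}$ fires at $x_4 = 2$, so the total time elapsed since $E0$ is exactly $2$ independently of $d$, and all four of $G,H,I,J$ are entered with the same configuration $x_1 = t+2$, $x_3 = 3-\beta$, $x_2 = x_4 = 0$. I would then argue that the self-loop on $x_3 = 3$ is taken exactly once (this normalizes $x_3$, subtracting $3$), and that the subsequent exit on $x_1 = 3$ (resetting $x_1,x_2$) followed by the $x_4 = 1$ edge into $P1$/$P2$ restores, at \emph{both} $P1$ and $P2$, precisely the original $E0$ configuration $x_1 = t$, $x_3 = 1-\beta$, $x_2 = x_4 = 0$. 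The key computation here is that after the loop $x_1 = 2+a$ and $x_3 = 0$, so exiting at $x_1 = 3$ leaves $x_3 = 1-a$ and $x_4 = 1-t$, and the final $x_4 = 1$ edge then yields $x_3 = 1-a+t = 1-\beta$ as required.

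It then remains to repeat the interval analysis at $P1$ and $P2$ on this clean, in-range configuration. At $P2$ the target edges are the ``$\le$'' ones ($T1,T2$), which occupy $[0,1-t]\cup[1,1+\beta]$, giving target probability $\frac{1-2\epsilon}{2}$; at $P1$ the target edges are the ``$\ge$'' ones ($T3,T4$), giving $\frac{1+2\epsilon}{2}$. Thus the second comparison \emph{inverts the bias} of the first, and combining the two stages gives
\[
\frac{1+2\epsilon}{2}\cdot\frac{1-2\epsilon}{2} + \frac{1-2\epsilon}{2}\cdot\frac{1+2\epsilon}{2} = \frac{(1+2\epsilon)(1-2\epsilon)}{2} = \frac{1}{2}\left(1-4\epsilon^2\right),
\]
which is $\le \frac12$ with equality exactly when $\epsilon = 0$, as claimed.

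The main obstacle I anticipate is the clock bookkeeping through the chain of resets, and in particular the two facts on which the whole gadget rests: that the random delay $d$ genuinely cancels at the $x_4 = 2$ synchronization (so that the four branches become indistinguishable), and that the $x_3$-loop is taken exactly once so that the $P1$/$P2$ comparisons see a configuration with $x_3 \in (0,1)$ rather than an out-of-range value (for which several edges would be simultaneously enabled and the interval lengths would no longer give the stated probabilities). I would therefore spend the most care on justifying, via the invariants/guards of $G,H,I,J$, that neither zero nor two loops are possible, and on checking the ``exactly one enabled edge per sub-interval'' property at each of the three circle nodes $E0$, $P1$, $P2$.
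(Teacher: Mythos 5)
Your interval analysis at $E0$, the resynchronization computation through the $x_4{=}2$ / $x_3{=}3$ / $x_1{=}3$ / $x_4{=}1$ chain, and the final combination $\frac{1+2\epsilon}{2}\cdot\frac{1-2\epsilon}{2}+\frac{1-2\epsilon}{2}\cdot\frac{1+2\epsilon}{2}=\frac12\left(1-4\epsilon^2\right)$ all coincide with the paper's own proof; in fact the paper merely \emph{asserts} that $P1,P2$ are entered with the same valuation as $E0$, whereas you actually verify it, so on that part your write-up is more complete than the original.

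The step that would fail is precisely the one you single out at the end: justifying, ``via the invariants/guards of $G,H,I,J$, that neither zero nor two loops are possible.'' No such justification exists. These are $\Diamond$-locations carrying no invariant, and at entry (with $x_1=t+2$, $x_3=3-\beta$) both the loop (delay $\beta$, guard $x_3{=}3$) and the exit (delay $1-t$, guard $x_1{=}3$) are legal choices, so Player $\Diamond$ may skip the loop entirely; $\Diamond$ may also loop twice, after which $x_1>3$ makes the exit guard forever unsatisfiable, so the run never reaches a target. Nor can the intended loop-once play be recovered as the \emph{optimal} choice: skipping the loop on the $I/J$ side brings $\Diamond$ to $P1$ with $x_3=4-\beta>3$, where the \emph{target} edge to $T4$ (guard $x_3\ge 2 \wedge x_4\le 2$) is enabled for every delay in $[0,2]$, and the resulting target probability from $P1$ is $\frac{1-t}{4}+\frac{t}{2}+\frac12=\frac{3+t}{4}$, which is strictly larger than the intended $\frac{1+2\epsilon}{2}$ (since $1+\frac{a}{2}-3\epsilon>0$). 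So the value $\frac12(1-4\epsilon^2)$ is the probability under the \emph{intended} play only, and cannot be established either by impossibility of deviation or by optimality. To be fair, this is equally a gap in the paper's proof, which silently assumes the loop-once behaviour at these $\Diamond$ nodes; your proposal has the merit of exposing exactly where that assumption enters, but the justification you sketch for it cannot be carried out as stated.
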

\begin{proof}
  Note that when the start location $E0$ of $GetProb$ is reached, we
  have $x_1=\frac{1}{2^{c_1+1}}+\epsilon$, $x_2=0$, $x_3=
  1-\frac{1}{2^{c_1+1}}+\epsilon$, $x_4=0$. A total of 2 time units
  can be spent at $E0$. It can be seen that transitions to $E3$ and
  $E4$ are
  respectively enabled with the time intervals $[0, 1-
  \frac{1}{2^{c_1+1}}-\epsilon]$ and
  $[1,1+\frac{1}{2^{c_1+1}}-\epsilon]$. Similarly, reaching $E1$ and
  $E2$ are enabled by the time intervals
  $[1-\frac{1}{2^{c_1+1}}-\epsilon,1]$ and
  $[1+\frac{1}{2^{c_1+1}}-\epsilon,2]$. The sum of probabilities of
  reaching either $E3$ or $E4$ is thus $\frac{1}{2}(1-2\epsilon)$.
  Similarly, the sum of probabilities for reaching $E1$ or $E2$ is
  $\frac{1}{2}(1+2\epsilon)$. The locations $P1, P2$ are then reached
  with
  the values $x_1=\frac{1}{2^{c_1+1}}+\epsilon$, $x_2=0$, $x_3=
  1-\frac{1}{2^{c_1+1}}+\epsilon$, $x_4=0$. The probability of
  reaching
  the target locations $T3$ or $T4$ (i.e., through $P1$) from $E0$ is
  hence
  $\frac{1}{2}(1+2\epsilon)\frac{1}{2}(1-2\epsilon)=\frac{1}{4}(1-4\epsilon^2)$,
  while the probability of reaching a target location $T1$ or $T2$
  (i.e., through $P2$) from $E0$ is
  $\frac{1}{2}(1+2\epsilon)\frac{1}{2}(1-2\epsilon)=\frac{1}{4}(1-4\epsilon^2)$.
  Thus, the probability of reaching a target location (one of
  $T1,T2,T3,T4$) in $GetProb$ is,
  $\frac{1}{2}(1-4\epsilon^2)$,
  which is always $\le \frac 1 2$.
   This completes the
  first statement of the lemma. Further, from the expression, we
  immediately have that the probability to reach a target location in
  $GetProb$ from $E0$ is $\frac{1}{2}$ iff $\epsilon=0$.
  \end{proof}
  
  The decrement $c_1$, increment $c_2$ as well as decrement $c_2$
  modules are similar and these as well as the zero test modules can
  be found in the Appendix. 
  
  \begin{lemma} 
    Player~$\Diamond$ has a strategy to reach the (set of)
    target locations in $\cal{G}$ with probability $\frac{1}{2}$ iff
    the two-counter machine does not halt. 
  \end{lemma}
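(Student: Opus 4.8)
The plan is to show that the probability with which Player~$\Diamond$ can force a visit to $T$ equals exactly $\frac{1}{2}$ precisely when an error-free, and hence faithful, simulation can be carried out forever. The starting point is a global decomposition of this probability. Along any play the only target locations are those inside the GetProb gadgets, one such gadget being attached to each simulated instruction. At the $k$-th module the play reaches the stochastic branching location (location $C$ in the increment module of Figure~\ref{inc-c1}), from which it enters the GetProb gadget with probability $\frac{1}{2}$ and continues the simulation with probability $\frac{1}{2}$; hence the gadget of module $k$ is entered with probability $\frac{1}{2^{k}}$. Writing $\epsilon_k$ for the deviation of the delay chosen by Player~$\Diamond$ at module $k$ from its ideal value, Lemma~\ref{lem:proof-gadget} (together with its analogues for the decrement and zero-test modules of the Appendix) gives that a target is then hit with conditional probability $\frac{1}{2}(1-4\epsilon_k^2)$. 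Summing the disjoint cylinder contributions and invoking $\sigma$-additivity of $\mathcal{P}_{\Lambda}$, the total probability to reach $T$ under a strategy producing the error sequence $(\epsilon_k)_k$ is
\[
  P \;=\; \sum_{k} \frac{1}{2^{k+1}}\bigl(1-4\epsilon_k^2\bigr),
\]
the sum ranging over the (finite or infinite) sequence of modules actually traversed.

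Both directions then follow from analysing this expression. Since every factor satisfies $1-4\epsilon_k^2\le 1$ and $\sum_{k\ge 1}\frac{1}{2^{k+1}}=\frac{1}{2}$, we always have $P\le \frac{1}{2}$, with equality iff the run is infinite and $\epsilon_k=0$ for all $k$. For the ``if'' direction, suppose $\mathcal{M}$ does not halt: Player~$\Diamond$ plays the faithful strategy, choosing at each module the exact ideal delay, so every $\epsilon_k=0$; by correctness of the modules this maintains the invariant $x_1=\frac{1}{2^{c_1}},\,x_2=\frac{1}{2^{c_2}},\,x_3=x_4=0$, the simulation runs forever, and $P=\sum_{k\ge 1}\frac{1}{2^{k+1}}=\frac{1}{2}$. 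For the ``only if'' direction I argue the contrapositive: if $\mathcal{M}$ halts, then $P=\frac{1}{2}$ is impossible. Indeed $P=\frac{1}{2}$ forces an infinite run with every $\epsilon_k=0$; by induction the invariant is then preserved exactly at every step, so the run is a faithful simulation of $\mathcal{M}$; but a faithful simulation of a halting machine reaches the halt location after finitely many steps and cannot be infinite, a contradiction. Hence every strategy yields $P<\frac{1}{2}$ when $\mathcal{M}$ halts, and the equivalence follows.

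The main obstacle is making the upper bound $P\le\frac{1}{2}$ robust against \emph{cheating}. A priori, errors made in early modules corrupt the clock valuations, so that later modules are no longer entered on the invariant and Lemma~\ref{lem:proof-gadget} does not apply verbatim. The structural feature that rescues the argument is that each module is inspected with the same conditional probability $\frac{1}{2}$ and contributes the fixed geometric weight $\frac{1}{2^{k+1}}$ independently of the clock values: I would verify that the GetProb gadget returns a target with probability at most $\frac{1}{2}$ for \emph{every} configuration with which it can legally be entered, so that accumulated errors can only subtract from $P$ and never compensate one another (runs that get blocked simply truncate the sum and are likewise bounded by $\frac{1}{2}$). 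The second delicate point is the faithfulness analysis itself---checking, for the increment, decrement and zero-test modules, that an error-free choice of delays exactly increments, decrements or tests the encoded counters and preserves the invariant, and conversely that any deviation registers as a genuine $\epsilon_k\neq 0$ at that module. Together these ensure that the supremum $\frac{1}{2}$ is attained only by an infinite error-free run, which is the precise algebraic shadow of non-halting.
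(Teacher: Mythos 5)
Your proposal is correct and takes essentially the same route as the paper: both decompose the reachability probability as the geometric sum $\sum_{k}\frac{1}{2^{k+1}}\bigl(1-4\epsilon_k^2\bigr)$ over the per-module GetProb contributions given by Lemma~\ref{lem:proof-gadget}, and both conclude that the value $\frac{1}{2}$ is attained exactly when the run is infinite (non-halting) and error-free, while any halt or any error strictly truncates or diminishes the sum. Your explicit verification that the gadget bound $\le\frac{1}{2}$ holds for every configuration with which GetProb can be entered (so that earlier errors cannot be compensated later) is a point the paper's proof leaves implicit, but it does not change the substance of the argument.
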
 

\begin{proof}
  Suppose the two-counter machine halts (say in $k$ steps). Then there
  are two cases: (a) the simulations of all instructions are correct
  in ${\cal G}$. In this case, the target location can be reached in
  either of the first $k$ steps. By Lemma~\ref{lem:proof-gadget}, the
  probability of reaching a target location in the first $k$ steps is
  the summation $\frac{1}{2}.\frac{1}{2}+(\frac{1}{2})^2.\frac{1}{2} +
  (\frac{1}{2})^3.\frac{1}{2}+\dots+(\frac{1}{2})^k.\frac{1}{2} <
  \frac{1}{2}$. (b) Player~$\Diamond$ made an error in the computation
  in the
  first $k$ steps. But then again by Lemma~\ref{lem:proof-gadget}, the
  finite sum obtained is $< \frac{1}{2}$ (since in the error step(s),
  the probability to reach target locations is
  $\frac{1}{2}-4\epsilon^2<\frac{1}{2}$). Thus, if the two-counter
  machine halts, under any strategy of $\Diamond$ player, the
  probability to reach the target locations is $<\frac{1}{2}$.

  On the other hand, suppose the two-counter machine does not halt.
  Then, if Player~$\Diamond$ chooses the strategy which faithfully
  simulates all instructions of the two-counter machine, the
  probability to reach the (set of) target locations is given by the
  infinite sum $\sum_{i=0}^\infty
  (\frac{1}{2})^{i}\frac{1}{2}=\frac{1}{2}$. Any other strategy of 
  Player~$\Diamond$ corresponds to performing at least one error in the
  simulation. In this case, the infinite sum obtained has at least one
  term of the form $(\frac{1}{2})^k(\frac{1}{2}-4\epsilon^2)$, for
  $\epsilon^2 >0$. Clearly, such an infinite sum does not sum to
  $\frac{1}{2}$.  This concludes the proof.
\end{proof}

The previous proof can be changed for other
  thresholds and to use unbounded intervals and exponential
  distributions.

\subsection{Time-bounded quantitative reachability for $2 \frac{1}{2}$ STGs}
\label{two}
In this section, we tackle the \emph{time-bounded} version of the quantitative
reachability problem. This strengthens the definition of reachability
 by considering a given time bound $\Delta$, and
requiring that $\mathcal{P}_{\sigma}(\{\rho \in
Run(\mathcal{G},s_0,\sigma) \mid \rho\ \text{visits}\
T\ \text{within}\ \Delta\ \text{time units}) \bowtie p$. 

In this new framework, we show the undecidability of the quantitative
reachability problem for $2 \frac{1}{2}$ STGs. We reduce from the \emph{halting}
problem for two-counter machines (unlike in the previous section, where our
reduction was from the \emph{non-halting} problem), using Player~$\Box$ to
verify the correctness of the simulation. The complication here is that the
total time spent should be bounded and hence we cannot allow arbitrary time
elapses. We will in fact show a global time bound of $\Delta=5$ for this
reduction. 

\begin{theorem}
  \label{thm:undec-two}
  The time-bounded quantitative reachability
  problem is undecidable for $2\frac{1}{2}$ STGs with  $\ge 5$
  clocks.
\end{theorem}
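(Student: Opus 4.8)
The plan is to reduce from the \emph{halting} problem for two-counter machines to the existence of a winning strategy for Player~$\Diamond$ under a threshold $\ge p$. As in the $1\frac{1}{2}$-player construction of Subsection~\ref{one}, Player~$\Diamond$ is made responsible for simulating the machine: the delays it chooses encode the increment, decrement and zero-test operations, with the current counter values stored in clocks. The two decisive new ingredients are that Player~$\Box$ now drives the verification of $\Diamond$'s moves, and that the entire simulation must be squeezed into a fixed time budget $\Delta=5$.

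First I would fix a \emph{time-convergent} schedule for the modules, so that the $n$-th simulated instruction is performed inside a time window whose length decreases geometrically (of order $2^{-n}$); then even an infinite faithful simulation consumes strictly less than $\Delta=5$ time units in total. This is precisely what lets the time bound carry the non-reachability argument: a halting computation terminates after finitely many windows and enters the target, whereas a non-halting computation yields an infinite run that converges in time but never reaches the halt location, and hence reaches $T$ within $\Delta$ with probability $0$. The fifth clock is introduced to anchor this contraction and to measure global elapsed time against $\Delta$. I would then redesign the counter encoding to be scale-invariant under the contraction, so that $\Diamond$ can still realize each increment/decrement/zero-test by choosing an appropriate delay inside the corresponding shrinking window, exactly as the fixed-scale modules do.

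Next I would layer the verification on top of this simulation, but hand the initiative to Player~$\Box$: after each simulated instruction $\Box$ may either let $\Diamond$ proceed, or \emph{challenge} the last operation by diverting the run into a stochastic gadget analogous to $GetProb$. That gadget is arranged, via the same quadratic-loss mechanism as in Lemma~\ref{lem:proof-gadget}, so that a \emph{faithful} operation is rewarded with probability $q_{\mathrm{good}}$ of reaching $T$, while any \emph{error} (a delay deviating from the intended value by some $\epsilon\neq 0$) yields probability $q_{\mathrm{bad}}<q_{\mathrm{good}}$. Taking the threshold $p=q_{\mathrm{good}}$, the analysis then splits cleanly. If $\mathcal M$ halts, $\Diamond$ simulates faithfully to the halt location and secures value at least $q_{\mathrm{good}}$ against every $\Box$ response (reaching $T$ outright if never challenged, and passing every challenge because it never cheats). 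If $\mathcal M$ does not halt, then for every $\Diamond$ strategy either $\Diamond$ errs at some reachable step---so $\Box$ challenges there and caps the value at $q_{\mathrm{bad}}<p$---or $\Diamond$ stays faithful forever, in which case $T$ is never reached within $\Delta$ and the value is $0<p$. Hence $\Diamond$ wins iff $\mathcal M$ halts.

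The main obstacle I anticipate is reconciling the three competing requirements simultaneously within only five clocks: the windows must contract geometrically to respect $\Delta=5$, the counter encoding must remain faithful and manipulable inside ever-smaller windows, and $\Box$'s stochastic challenge must still separate $q_{\mathrm{good}}$ from $q_{\mathrm{bad}}$ even though the admissible error $\epsilon$ shrinks with the window. The delicate point is to make the arithmetic of the contracting encoding line up with the gadget's probability computation, so that faithful play produces exactly $q_{\mathrm{good}}$ at every scale while the gap $q_{\mathrm{good}}-q_{\mathrm{bad}}$ stays uniformly bounded away from $0$; the remaining module bookkeeping and the routine checks that the induced measures are well defined follow the pattern already established in Subsection~\ref{one}.
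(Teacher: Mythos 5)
Your proposal follows essentially the same route as the paper's proof: a reduction from the \emph{halting} problem, a geometrically contracting schedule in which the $k$-th instruction is simulated inside a window of length $2^{-k}$ (so an entire faithful simulation takes time $<1$ and a constant bound $\Delta=5$ suffices), a counter encoding that shrinks along with the windows, and verification delegated to Player~$\Box$, who after each instruction either lets the simulation proceed or diverts the run into a stochastic check gadget; your case analysis (halting $\Rightarrow$ faithful play meets the threshold against every $\Box$ strategy; non-halting $\Rightarrow$ either some error is challenged, capping the value below the threshold, or faithful play never reaches $T$ within $\Delta$, giving value $0$) is exactly the paper's. Your two deviations are legitimate variants rather than errors: you use a threshold $\geq p$ together with two-sided quadratic-loss gadgets, and a halt location reaching $T$ outright, whereas the paper uses the exact threshold $=\frac{1}{2}$ with \emph{one-sided linear} checks (e.g.\ probability $\frac{1}{2}(1-t)+\frac{1}{4}\cdot\frac{1}{2^k}$, which equals $\frac{1}{2}$ iff $t=\frac{1}{2^{k+1}}$) and a halt gadget reaching $T$ with probability exactly $\frac{1}{2}$; with an exact threshold a one-sided penalty already suffices, while your $\geq$ formulation indeed forces the two-sided penalty you propose.

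The genuine gap is the zero-test. You claim $\Diamond$ can realize ``each increment/decrement/zero-test\dots exactly as the fixed-scale modules do,'' but this fails for the zero-test under \emph{any} contraction-compatible encoding. Because delays in the $k$-th window are at most of order $2^{-k}$, the encoding must entangle the counter values with the step counter $k$ (the paper uses $x=\frac{1}{2^{k+c_1}3^{k+c_2}}$, with clock $z=1-\frac{1}{2^k}$ tracking $k$), and then ``$c_1=0$'' is no longer testable by a clock constraint: it asserts that the exponent of $2$ equals $k$, a history-dependent quantity, unlike the fixed-scale setting where $c_1=0$ is simply $x_1=1$. The paper resolves this with a dedicated guess-and-verify protocol: $\Diamond$ guesses zero/non-zero, and $\Box$ may audit the guess through a chain of widgets (Rem$_k$, Mul $a$, Mul $x_2$, Wid$_{=0}$, Wid$_{>0}$) that first strip the factor $6^{-k}$ by $k$ successive multiplications by $6$ --- each multiplication itself enforced only probabilistically by further check gadgets --- then strip $3^{-c_2}$, and only then compare the residue to $1$; one must additionally redo the time accounting to show that this unfolding, whose length grows with $k$, still completes every audit within constant ($<4$) time so that $\Delta=5$ works. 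None of this machinery, which occupies the bulk of the paper's argument, is supplied by your increment/decrement pattern or by the ``scale-invariant encoding'' as described, so the proposal as written is incomplete precisely at the step where the five-clock, time-bounded construction is hardest.
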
 
\begin{proof}
  Let $\mathcal{M}$ be a two-counter machine. We construct an STG with
  5 clocks such that the two-counter machine $\mathcal{M}$ halts iff
  Player $\Diamond$ has a strategy to reach some desired locations
  with probability $\frac{1}{2}$, whatever Player $\Box$ does, and such
  that the total time spent is bounded by $\Delta=5$ units. 

  The main idea behind the proof is that the total time spent in the
  simulation of the $k^{th}$ instruction will be $\frac{1}{2^k}$. We
  thus
  get a decreasing sequence of times $\frac{1}{2}$, $\frac{1}{4}$,
  $\frac{1}{8}\dots$ for simulating the instructions $1,2 \dots$ and
  so on. In total, we will use five clocks $x_1,x_2,z,a$ and $b$. The clocks
  $x_1$ and  $x_2$ are used encode the counter values (along with the current
  instruction number) such that at the end of the $k^{th}$ instruction, if $k$
  is even the values are encoded in $x_1$ and if $k$ is odd they are encoded in
  $x_2$ as follows: 
  \begin{description}
  \item[$(enc_{x_1})$] $k$ is even and $x_1=\frac{1}{2^{k+c_1}3^{k+c_2}}$,
    $x_2=0$, $z=1-\frac{1}{2^k}$, $a=b=0$; 
  \item[$(enc_{x_2})$] $k$ is odd and $x_2=\frac{1}{2^{k+c_1}3^{k+c_2}}$,
    $x_1=0$, $z=1-\frac{1}{2^k}$, $a=b=0$; 
  \end{description}  
  We start the simulation with $x_1=1,x_2=z=0=a=b$ corresponding to the initial
  instruction ($k=0$) and the fact that the values of $C_1, C_2$ are $0$.
  Moreover, if $x_1=\frac{1}{2^{k+c_1}3^{k+c_2}}$  at the end of the $k$th
  instruction, and if the $(k+1)$th instruction is an increment $C_1$
  instruction,  then at the end of the $(k+1)$th instruction, $x_2=
  \frac{1}{2^{k+c_1+2}3^{k+c_2+1}}$. Clock $z$ keeps a separate track of the
  number of instructions simulated so far, by having a value $1-\frac{1}{2^k}$
  after completing the simulation of $k$ instructions.  Clocks $a$ and $b$ are
  auxiliary clocks that we need for the simulation. We assume uniform
  distribution over delays in probabilistic locations. 
If no weight is written on an edge, it is assumed to be 1. 

 We outline the simulation of a increment instruction <<~$\ell_i$ :
increment counter $C_1$, goto $\ell_j$~>> in Figure \ref{incmod},
assuming this is the $(k+1)$th instruction, where $k$ is even. Thus,
at the end of the $k$ first instructions, we have
$x_1=\frac{1}{2^{k+c_1}3^{k+c_2}}$, $z=1-\frac{1}{2^k}$ and
$a=b=x_2=0$ (the other case of odd $k$, i.e., $(enc_{x_2})$ encoding
is symmetric).  At the end of this $(k+1)$th instruction's simulation,
the value of clock $z$ should be $z = 1 - \frac{1}{2^{k+1}}$ to mark
the end of the $(k+1)^{th}$ instruction. Also, we must obtain
$x_2=\frac{x_1}{2^2\cdot 3}=\frac{x_1}{12}$, marking the successful
increment of $C_1$.

\begin{figure} [h]                                                                                                                                                          
 \centering
   \scalebox{0.7}{                                                                                                                                                       
   \begin{tikzpicture}[->,thick]                                                                                                                                          
    \node[initial, dia,label=above:{},initial text={$a,b,x_2=0$}] at (0,-1) (A) 
    {$\ell_i$} ;  
       \node[dia] at (2,-1) (B) {B};
    \node[box,label=below:{$b= 0$}] at (4,-1) (C) {Check};
       \node[dia] at (9,-1) (D) {$\ell_j$};
         \node[rounded rectangle,fill=gray!20!white] at (2,-3) (chkz) {Check $z$};
         \node[rounded rectangle,fill=gray!20!white] at (6,-3) (chkx) {Check $x_2$};
    \path (A) edge node [midway,above] {$a<1$}(B);      
    \path (A) edge node[midway,below] {$x_2:=0$}(B);
    \path (B) edge node[midway,above] {$a<1$}(C);
    \path (B) edge node[midway,below] {$b:=0$}(C);
    \path (C) edge node[] {}(chkz);
    \path (C) edge node[midway,below] {$x_1,a:=0$}(D);
   
    \path (C) edge node[] {}(chkx);
   \end{tikzpicture}                                                                                                                                                      
  }                                                                                                                                                                     
  \caption{Module for incrementing $C_1$ (after an even number of
    steps)}                                                                                                                             
   \label{incmod}                                                                                                                                                        
\end{figure}
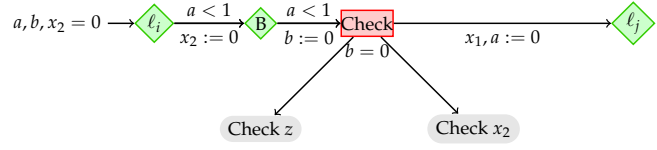 

Player $\Diamond$ elapses times $t_1,t_2$ in locations $\ell_i, B$.
When the player $\Box$ location $Check$ is reached, we have
$a=t_1+t_2=t$ and $x_2=t_2$, $z=1-\frac{1}{2^k}+t_1+t_2$. Player
$\Box$ has three possibilities : (1) to continue the simulation going
to $\ell_{k+2}$, (2) verify that
$t_2=\frac{1}{2^{k+c_1+2}3^{k+c_2+1}}$ by going to the widget `Check
$x_2$' or (3) verify that $t_1+t_2=\frac{1}{2^{k+1}}$ by going to the
widget `Check $z$'. These widgets are given in Figure \ref{check1}.
The probability of reaching a target location in widget `Check $z$' is
$\frac{1}{2}(1-t)+\frac{1}{4}\frac{1}{2^k}=\frac{1}{2}$ iff $t=
\frac{1}{2^{k+1}}$. In widget `Check $x_2$', the transitions from $F1$
to $C1$ and $F1$ to $C2$ are taken with probability $\frac{1}{12}$ and
$\frac{11}{12}$, respectively since the weights of edges connecting
F1,C1 and F1,C2 are respectively 1 and 11.  With this, for
$n=\frac{1}{2^{k+c_1}3^{k+c_2}}$, the probability of reaching a target
location in `Check $x_2$' is
$\frac{1}{2}(1-t_2)+\frac{n}{24}=\frac{1}{2}$ iff $t_2=\frac{n}{12}$.

 \begin{figure} [h]                                                                                                                                                          
   \begin{center}                                                                                                                                                         
   \scalebox{0.7}{                                                                                                                                                       
   \begin{tikzpicture}[->,thick]                                                                                                                                          
                                                                                                                                                                          
    \node[initial, cir,label=above:{$b\equal 0$},initial text={}] at (0,-1) (A) {A0} ;  
    \node[cir,label=below:{$b\leq 1$}] at (2,-1) (B) {B0};
    \node[cir,accepting] at (4,-1) (C) {C0};
    \node[cir] at (2,0.5) (C1) {};
    \path (A) edge node [midway,above] {}(B);
              \path (B) edge node[midway,above] {$a \leq 1$}(C);
              \path (B) edge node[midway,left]{$a>1$}(C1);
    
    \node[cir,label=left:{$b=0$}] at (0,-2) (D) {D0};
    \node[dia] at (1,-2) (D1) {};
    \path (A) edge node[midway,above] {}(D);
    \path (D) edge node[midway,above] {}(D1);
    \node[dia] at (0,-3) (E) {E0};
    \node[cir,label=above:{$b\leq 1$}] at (2,-3) (F) {F0};
    \node[dia] at (2,-4.5) (F1) {};
    \path (D) edge node[midway,above] {}(E);
    \path (E) edge node[midway,above] {$a=1?$}(F);
    \path (E) edge node[midway,below] {$b:=0$}(F);
    \node[cir,accepting] at (4,-3) (G) {G0};
    \path (F) edge node[midway,above] {$z \leq 2$}(G);
    \path (F) edge node[midway,left] {$z > 2$}(F1);
    
   \node[initial, cir,label=above:{$b\equal 0$},initial text={}] at (7,-1) (A1) {A1} ;
    \node[cir,label=below:{$b\leq 1$}] at (9,-1) (B1) {B1};
    \node[cir,label=left:{$b=0$}] at (7,-3) (B2) {F1};
    \path (A1) edge node[midway,above] {}(B1);
    \path (A1) edge node[midway,above] {}(B2);
    \node[cir,accepting] at (11,-1) (B3) {};
    \node[cir] at (9,0.5) (B4) {};
\path (B1) edge node[midway,above] {$x_2 \leq 1$}(B3);
    \path (B1) edge node[midway,left] {$x_2>1$}(B4);
    
\node[dia] at (7,-4) (C1) {C2};
    \path (B2) edge node[midway,left] {11}(C1);
\node[dia] at (8,-3) (C2) {C1};
    \path (B2) edge node[midway,above] {1}(C2);
\node[dia] at (10,-3) (C3) {D1};
    \path (C2) edge node[midway,above] {$a=1?$}(C3);
\path (C2) edge node[midway,below] {$a:=0$}(C3);
\node[cir,label=above:{$b\leq 1$}] at (12,-3) (C4) {E1};
\path (C3) edge node[midway,above] {$x_1=2$}(C4);
\path (C3) edge node[midway,below] {$b:=0$}(C4);
\node[cir,accepting] at (14,-3) (C5) {};
\node[dia] at (12,-4.5) (C6) {};
\path (C4) edge node[midway,above] {$a \leq 1$}(C5);
\path (C4) edge node[midway,left] {$a>1$}(C6);
    \end{tikzpicture}                                                                                                                                                      
  }                                                                                                                                                                     
  \caption{Widgets `Check $z$' (left) and `Check $x_2$' (right)}
   \label{check1}                                                                                                                                                        
  \end{center}                                                                                                                                                           
\end{figure}
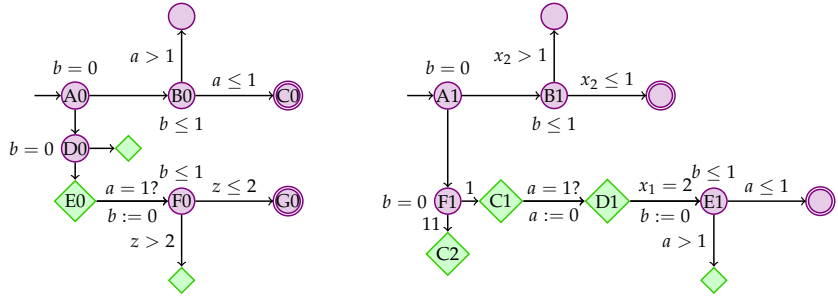 

\noindent{\bf Time elapse for Increment.}
If player $\Box$ goes ahead with the simulation, the time elapse for
the $(k+1)$th instruction is $t_1+t_2=\frac{1}{2^{k+1}}$.  Consider
the case when player $\Box$ goes in to `Check $z$'. The time elapse
till now is $\frac{1}{2}+\dots +\frac{1}{2^{k+1}}$. The time spent in
the `Check $z$' widget is as follows: one unit is spent at location
$B0$, one unit at location $F0$, and $1-t$ units at location
$E0$. Thus, $\leq 3$ units are spent at the `Check $z$'
widget. Similarly, the time spent in the `Check $x_2$' widget is one
unit at $B1$, $1-t$ units at $C1$, $1-n$ units at $D1$ and one unit at
$E1$. Thus a time $\leq 4$ is spent in `Check $x_2$'.
Thus, the time spent till the $(k+1)$th instruction is $\leq
\frac{1}{2}+\dots \frac{1}{2^{k+1}}+4$ if player $\Box$ goes in for a
check, and otherwise it is $\frac{1}{2}+\dots +\frac{1}{2^{k+1}}$.

\noindent{\bf Other increment, decrement, zero-check Instructions.}
The main module corresponding to \emph{increment $C_2$} and decrement
$C_1, C_2$ is the same as in Figure \ref{incmod}. The only change
needed is in the `Check $x_2$' widget. While incrementing $c_2$, we
need $x_2=\frac{x_1}{2\cdot 3^2}=\frac{x_1}{18}$. This is done by
changing the weights on the outgoing edges from $F1$ to $C1$ and $C2$
to $1$ and $17$ respectively. Similarly, while \emph{decrementing
  $C_1$}, we need $x_2=\frac{x_1}{3}$. This is done by changing the
weights on the outgoing edges of $F1$ to $1,2$ respectively. Lastly,
to \emph{decrement $C_2$}, we need $x_2=\frac{x_1}{2}$, and in this
case the weights are 1 each.

The zero check module is a bit more complicated. The broad idea is
that we use a diamond node to guess whether the current clock (say
$C_1$) value is zero and branch into two sides (zero and
non-zero). Then we use a box node on each branch to verify that the
guess was correct. If correct, we proceed with the next instruction,
if not, we check this by going to a special widget. In this widget, we
can reach a target node with probability $\frac{1}{2}$ iff the guess
is correct. The details of this widget and the proof that all these
simulations can be done in time bounded by $\Delta\leq 5$ units
is given in  the Appendix. 
 \end{proof}

\section{Decidability results for quantitative reachability}
\label{sec:quant-dec}

We have seen in the previous section that the quantitative
reachability problem is undecidable in $1\frac{1}{2}$ STGs with $\ge
4$ clocks. In this section we study the \emph{quantitative
  reachability} problem in the setting of $1\frac{1}{2}$ STGs
\textit{with a single clock}. In~\cite{qest08}, the quantitative
reachability problem in $\frac{1}{2}$ STGs with a single clock, under
certain restrictions, was shown to be decidable by reducing it to the
quantitative reachability problem for finite Markov chains. In our
case, we lift this to $1\frac{1}{2}$ STGs with a single clock, under
similar restrictions, by reducing to the quantitative reachability
problem in finite Markov decision processes (MDPs in short).

For the rest of this section, we consider a $1\frac{1}{2}$ STG
$\stg=(\mathcal{A},(L_{\Diamond},L_{\bigcirc}),\omega,\mu)$ with a
single clock denoted $x$. We write $c_{\max}$
  for the maximal constant
appearing in a guard of $\stg$.

We assume w.l.o.g. that target locations belong to player
$\Diamond$ (a slight modification of the construction can be done
  if this is not the case). In the following, when we talk about
regions, we mean the clock regions from the classical region
construction for timed automata~\cite{AD94,LMS04}: since $\stg$
  has a single clock, regions in this case are simply either
  singletons $\{c\}$ with $c \in \mathbb{Z}_{\ge 0} \cap [0;c_{\max}]$, or
  open intervals $(c,c+1)$ with $c \in \mathbb{Z}_{\ge 0} \cap
  [0;c_{\max}-1]$, or the unbounded interval $(c_{\max};+\infty)$.  While region
  automata are standardly finite automata, we build here from $\stg$ a
  \emph{region STG} $\stg_{\mathcal{R}}$, which has only clock
  constraints defined by regions (that is, either $x=c$ or $c<x<c+1$
  or $x>c_{\max}$), and such that each location of $\stg_{\mathcal{R}}$ is
  indeed a pair $(\ell,R)$ where $\ell$ is a location of $\stg$ and
  $R$ a region (region $R$ is for the region which is hit when
  entering the location). While it is not completely standard, this
  kind of construction has been already used
  in~\cite{journal-sta,qest08,icalp09}, and questions asked on $\stg$
  can be equivalently asked (and answered) on $\stg_{\mathcal{R}}$.
  Now, we make the following restrictions on $\stg_{\mathcal{R}}$
  (which yields restrictions to $\stg$), which we denote $(\star)$:
\begin{enumerate}
\item The TA $\mathcal{A}$ is assumed to be structurally
 non-Zeno: any bounded cycle of $\mathcal{A}$ (a cycle
 in which all edges have a non-trivial upper-bound) contains at least
 one location whose associated region is the zero region (i.e., edge
 leading to it, resets the clock).
\item For every state $s = ((\ell,r),\nu)$ of
   $\stg_{\mathcal{R}}$ such that $\ell \in L_\bigcirc$,
   $I(s)=\Rplus$, and $\mu_s$ is an exponential distribution; Furthermore
   the rate of $\mu_s$ only depends on location $\ell$.
\item $\stg_{\mathcal{R}}$ is \emph{initialized}, that is, any edge
 from a non-stochastic location to a stochastic location resets the
 clock $x$.
\end{enumerate}
While the first two assumptions are already made in~\cite{qest08},
even in the $\frac{1}{2}$ player case, the third condition is new. In
the following we denote $\mathbf{0}$ for the region $\{0\}$ and
$\infty$ for the unbounded region $(c_{\max};+\infty)$.

We now show how to obtain an MDP from the STG $\stg_{\Rr}$. The
construction is illustrated on Figure \ref{ex1}.

A node $(\ell,R)$ of $\stg_{\Rr}$ with $\ell \in L_\bigcirc$ is
\emph{deletable} if $R$ is neither the region $0$ nor the region
$\infty$.
In Figure \ref{ex1}, $(B, (0,1))$ and $(A,(0,1))$ in $\stg_\Rr$ are
what we call deletable nodes. Then, we recursively remove all
deletable nodes $\stg_\Rr$ while labelling remaining paths with
(finite) sequences of edges; each surviving edge is labelled by the
probability of the (provably) finitely many sequences of edges
appearing in the label. One can prove that this object is actually an
MDP, which we denote $M_{\stg}$. Target states in $M_\stg$ are defined
as the pairs $(\ell,R)$ where $\ell$ is a target location in $\stg$.
We can prove that:
\begin{lemma}
  \label{lem:quant-main}
  If $\stg$ is an $1 \frac{1}{2}$ player STG with one clock satisfying
  the hypotheses $(\star)$,
  then $M_{\stg}$ is an MDP such that: (a)
for every strategy $\lambda_\Diamond$ of player $\Diamond$ in
    $\stg$, we can construct a strategy $\sigma_\Diamond$ of player
    $\Diamond$ in $M_{\stg}$ such that the probability of reaching a
    target location in $\stg$ is the same as the probability of
    reaching a target state in $M_{\stg}$; and (b)
for every strategy $\sigma_\Diamond$ of player $\Diamond$ in
    $M_\stg$, we can construct a strategy $\lambda_\Diamond$ of player
    $\Diamond$ in $\stg$ such that the probability of reaching a
    target location in $M_\stg$ is the same as the probability of
    reaching a target state in $\stg$.
\end{lemma}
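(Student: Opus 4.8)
The plan is to establish the lemma in three stages: first show that $M_\stg$ is a well-defined finite MDP; then prove that player $\Diamond$ may restrict attention to region-based strategies in $\stg$ without changing the reachability probability; and finally exhibit a measure-preserving correspondence between region-level runs of $\stg_\Rr$ and runs of $M_\stg$ that matches the target events.

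First I would argue that $M_\stg$ is well-defined. Using structural non-Zenoness (hypothesis~1 of $(\star)$), the deletable nodes---stochastic nodes whose region is neither $\mathbf{0}$ nor $\infty$---cannot lie on a cycle of $\stg_\Rr$: such a cycle would be a bounded cycle containing no zero region, contradicting hypothesis~1. Hence the deletable nodes induce a DAG, the recursive deletion terminates, and each surviving edge is labelled by finitely many edge-sequences through deleted nodes. Because every deleted node carries $I(s)=\Rplus$ with an exponential distribution whose rate depends only on the location (hypothesis~2), the probability attached to each such sequence is a well-defined finite integral. Since the deletable subgraph is acyclic and $\mathcal{A}$ is non-blocking, from every surviving uncontrolled node the stochastic computation almost surely exits the deletable region in boundedly many steps, so the out-probabilities sum to one and $M_\stg$ is a genuine finite MDP whose uncontrolled states are the $\mathbf{0}$- and $\infty$-region stochastic nodes and whose controlled states are the $\Diamond$-nodes (which are never deleted).

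The crux, and the step I expect to be the main obstacle, is the region-sufficiency claim underlying part~(a): for every timed strategy $\lambda_\Diamond$ there is a region-based strategy with the same reachability probability. The point is that the only continuous data feeding a probability computation is the delay distribution at a stochastic node, and by initialization (hypothesis~3) every edge from a $\Diamond$-node into a stochastic node resets the clock, so each stochastic phase begins afresh at clock value $0$ independently of the value inherited from the past. Consequently, when a stochastic phase exits to a $\Diamond$-node with a randomly distributed clock value $v$, the reachability probability of the remaining game depends only on the region of $v$: any subsequent $\Diamond$-move either enters a stochastic node, where the reset erases $v$, or moves to another $\Diamond$-node, where the same analysis applies by induction on the bounded number of $\Diamond$-moves before the next reset. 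Hence conditioning on the exact value of $v$ rather than its region cannot improve the probability, and $\lambda_\Diamond$ can be replaced by a region-based strategy $\lambda'_\Diamond$ with identical reachability probability. I would make this precise through an integration argument over the delay densities of the stochastic phases, showing that the integrand depends on the $\Diamond$-decisions only through their region projections; this is exactly where initialization is essential, since without it the inherited $v$ would enter the exponential delay of the next stochastic node and the reduction would fail.

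Finally I would set up the measure-preserving correspondence. The region projection followed by the collapse of deletable nodes sends a run of $\stg_\Rr$ to a run of $M_\stg$, and by construction of the edge labels this map pushes $\mathcal{P}_\Lambda$ forward onto the MDP measure induced by the corresponding strategies; the target event is preserved since targets are $\Diamond$-nodes and hence survive. For part~(a) I take $\sigma_\Diamond$ to be the MDP strategy corresponding to the region strategy $\lambda'_\Diamond$ obtained above, and the correspondence yields equality of the reachability probabilities. For part~(b), the easier direction, I lift any MDP strategy $\sigma_\Diamond$ to a timed strategy $\lambda_\Diamond$ by realizing each prescribed (edge, target-region) decision with a concrete admissible delay---possible by construction of $\stg_\Rr$---so that $\lambda_\Diamond$ reproduces the same region-level behaviour, and the same correspondence gives equality of the probabilities. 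This completes the plan.
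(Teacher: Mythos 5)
Your overall architecture coincides with the paper's: pass to the region STG $\stg_\Rr$, collapse the deletable stochastic nodes into path-labelled edges to obtain $M_{\stg}$, and then set up probability-preserving strategy translations in both directions. On the first stage the paper argues slightly differently from you: it proves that $M_{\stg}$ is an MDP by induction on individual removal operations, checking after each removal (via linearity of the Lebesgue integral) that the outgoing probabilities of every surviving stochastic node still sum to one. Your acyclicity argument is a reasonable alternative, but it is misstated: deletable nodes certainly \emph{can} lie on cycles of $\stg_\Rr$ (in the paper's own example the deletable node $(B,(0,1))$ does); what structural non-Zenoness forbids is a cycle lying \emph{entirely inside} the deletable nodes, and that weaker statement is what gives the DAG and the termination you need. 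This is a repairable imprecision rather than a fatal error.

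The genuine gap is in the step you yourself call the crux, the region-sufficiency claim behind part (a). From ``the reachability probability of the remaining game depends only on the region of $v$'' you infer that conditioning on the exact value of $v$ ``cannot improve the probability,'' and from that you conclude that $\lambda_\Diamond$ can be replaced by a region-based strategy with \emph{identical} reachability probability. The second inference is invalid: ``cannot improve'' is an optimality statement, whereas the lemma demands exact equality for an arbitrary, possibly suboptimal, $\lambda_\Diamond$. A deterministic timed strategy can use the exact, continuously distributed clock value $v$ with which a stochastic phase exits into a $\Diamond$-node as a private randomization device: for instance, play edge $a$ if $v<\frac{1}{2}$ and edge $b$ otherwise, where $a$ and $b$ lead to region-determined continuation values $p_a\neq p_b$. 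The induced probability is then a strict convex combination $\theta p_a+(1-\theta)p_b$ with $\theta\in(0,1)$, which no deterministic region-based strategy---hence no deterministic strategy in $M_{\stg}$---attains. To make part (a) literally true you must either allow randomized (behavioral) strategies in $M_{\stg}$ and construct $\sigma_\Diamond$ so as to reproduce the conditional action probabilities that $\lambda_\Diamond$ induces at each region history (this works precisely because initialization forces every stochastic phase to start at clock value $0$, independently of the history), or weaken the statement to equality of the relevant optimal values, which is all the decidability argument actually needs. To be fair, the paper's own write-up makes the same silent leap---its map $\sigma\mapsto\imath(\sigma)$ is well defined only when $\sigma$'s choices are constant on region-equivalent histories---but since you explicitly flag this step as the main obstacle, the argument you give does not close it.
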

This lemma allows to reduce the quantitative reachability problem from
the $1 \frac{1}{2}$ STG $\stg$ to the MDP $M_{\stg}$. 

\begin{figure} [t]                                                                                                                                                          
  \centering                                                                            \hspace*{-5mm}                                                                           
   \scalebox{0.7}{                                                                                                                                                       
   \begin{tikzpicture}[->,thick]                                                                                                                                          
    \node[initial, cir, initial text ={}] at (-17,0) (A) {A} ;                                                                                                                      
    \node[cir] at (-13,0) (B) {B}; 
    \node[cir] at (-17,-2) (C) {C};  
    \node[dia] at (-13,2) (D) {D};
    \node[dia] at (-13,-2) (E) {E};
    \path (A) edge node [above]{$x < 1$}(B);
    \path (A) edge node [below]{$e_4$}(B);
    \path(C) edge [bend left] node [left]{$\begin{array}{c}e_3\\x < 1\end{array}$}(A); 
    \path (A) edge[bend left] node [right]{$\begin{array}{c}e_1\\x\geq 1\\x:=0 \end{array}$}(C);
    \path (C) edge node [below]{$\begin{array}{c}e_2\\x \geq 1\end{array}$}(E);
        \path (B) edge node [left]{$x < 1$} (D);
         \path (B) edge node [right]{$e_7$} (D);
        \path (D) edge[bend right] node [left]{$e_8,x < 1$}node[below]{$x:=0$} (A);
    \path (B) edge[bend left] node [right]{$\begin{array}{c}e_5\\x\geq 1\\x:=0 \end{array}$}(E);
    \path (E) edge [bend left] node[left]{$\begin{array}{c}e_6\\x < 1\\x:=0\end{array}$}(B);
    
 \node[initial, cir, initial text ={}] at (-9,2) (A0) {A, $\mathbf{0}$} ;                                                                                                                      
    \node[cir] at (-7,2) (B01) {B,(0,1)}; 
   \node[dia] at (-5,2) (D01) {D,(0,1)};
     \node[cir] at (-9,0) (C0) {C,$\mathbf{0}$};
      \node[cir] at (-9,-2) (A01) {A,(0,1)};
      \node[dia] at (-7,0) (E0) {E,$\mathbf{0}$};
      \node[cir] at (-5,0) (B0) {B,$\mathbf{0}$};
      \node[dia] at (-7,-2) (E1) {E,$\infty$};
      
     \path(A0) edge node[above]{$e_4$}(B01);
     \path(B01) edge node[above]{$e_7$}(D01);
     \path(D01) edge[bend right] node[above]{$e_8$}(A0);
     \path(A0) edge node[left]{$e_1$}(C0);
      \path(C0) edge node[right]{$e_3$}(A01); 
      \path(A01) edge[bend left] node[left]{$e_1$}(C0); 
      \path(C0) edge node[above]{$e_2$}(E1); 
      \path(A01) edge node[left]{$e_4$}(B01);
      \path(B01) edge node[right]{$e_5$}(E0); 
        \path(E0) edge node[above]{$e_6$}(B0); 
        \path(B0) edge[bend left] node[above]{$e_5$}(E0); 
      \path(B0) edge node[right]{$e_7$}(D01);

 \node[initial, cir, initial text ={}] at (-1,0) (A0) {A, $\mathbf{0}$} ;                                                                                                                      
   \node[dia] at (-1,2) (D01) {D,(0,1)};
     \node[cir] at (-1,-2) (C0) {C,$\mathbf{0}$};
      \node[dia] at (1,-2) (E0) {E,$\mathbf{0}$};
      \node[cir] at (3,-2) (B0) {B,$\mathbf{0}$};
      \node[dia] at (-3,-2) (E1) {E,$\infty$};
       
       \path(A0) edge[draw=black,line width=1.5pt,->,double=white, bend right] node[right]{$e_4e_7$}(D01);
       \path(D01) edge[draw=black,line width=1.5pt,->,double=white, bend right] node[right]{$e_8$}(A0);
       \path(A0) edge[draw=black,line width=1.5pt,->,double=white] node[right]{$e_4e_5$}(E0);
      \path(A0) edge[draw=black,line width=1.5pt,->,double=white] node[right]{$e_1$}(C0);
      \path(B0) edge[draw=black,line width=1.5pt,->,double=white, bend right] node[left]{$e_7$}(D01);
      \path(B0) edge[draw=black,line width=1.5pt,->,double=white] node[above]{$e_5$}(E0);
      \path(E0) edge[draw=black,line width=1.5pt,->,double=white,bend right] node[below]{$e_6$}(B0);
      \path(C0) edge[draw=black,line width=1.5pt,->,double=white,loop below] node[below]{$e_3e_1$}(C0);
      \path(C0) edge[draw=black,line width=1.5pt,->,double=white] node[above]{$e_2$}(E1);
      \path(C0) edge[draw=black,line width=1.5pt,->,double=white,bend left=50] node[left]{$e_3e_4e_7$}(D01);
\path(C0) edge[draw=black,line width=1.5pt,->,double=white] node[above]{$e_3e_4e_5$}(E0);
                   
                                                         \end{tikzpicture}                                                                                                                                                      
  }                                                                                                                                                                     
 \caption{An initialized  $1 \frac{1}{2}$ player STG $\stg$, its region
   game graph $\stg_\Rr$ and the MDP abstraction $M_{\stg}$. 
 }
   \label{ex1}                                     
\end{figure}
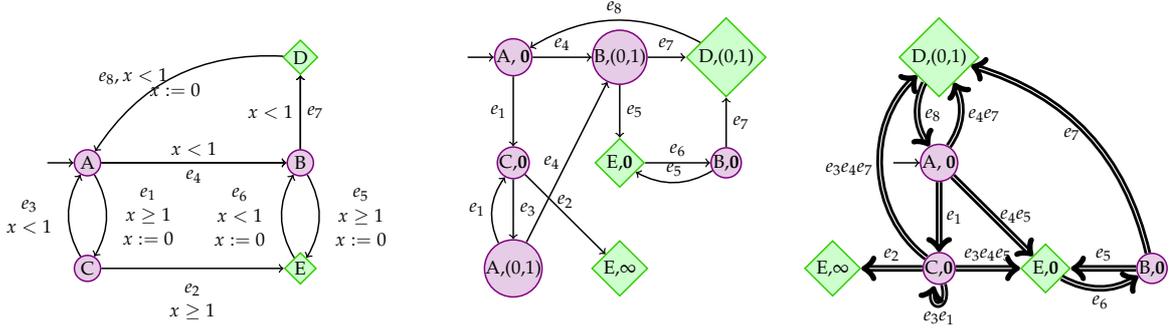

As an example, in Figure \ref{ex1}, we show a $1 \frac{1}{2}$ player
STG $\stg$, its region game graph $\stg_\Rr$ (guards omitted for
readability) and the MDP abstraction $M_{\stg}$. Note that all
$\Diamond$ nodes remain, while only those stochastic nodes with
regions $\mathbf{0}$ and $\infty$ are retained in $M_{\stg}$. The
stochastic nodes $(B,(0,1))$ as well as $(C,(0,1))$ are deleted in
$M_{\stg}$. On deleting nodes from the region graph, the probability
on the edges of $M_{\stg}$ is the probability of the respective paths
from the region graph. For example, the edge from $(A,0)$ to
$(D,(0,1))$ is labelled with $e_4e_7$ by deleting $(B,(0,1))$.

Thus, the remaining thing that has to be addressed now is how to compute the probabilties and compare them with a rational threshold.  The first thing to note is that the edges of the MDP are all labelled with polynomials over exponentials obtained using the delays from the underlying game with rational coefficients. For example, in Figure~\ref{ex1}, in the MDP in the rightmost picture, we obtain:
\newcommand{\cP}{\mathcal{P}}
$\cP (e_1){=}\cP (e_2) {=} \cP(e_5){=} e^{-1}$, $\cP (e_6){=} \cP(e_7)
{=} \cP(e_8){=}1{-}e^{-1}$, 
$\cP (e_4 e_5){=} e^{-1} {-} e^{-2}$, $\cP (e_4 e_7){=} 1{-} 2e^{-1}$, 
$\cP (e_3 e_4 e_7) {=}  2 {-} 5e^{-1} {+} e^{-2}$, $\cP (e_3 e_4 e_5){=} 1{-}
e^{-1} {+} e^{-2}$, and $\cP (e_3 e_1){=} \frac{1}{2} (1{-}e^{-2})$.
It can be seen that we can write each of these probabilities as a
polynomial in $e^{-1}$. More generally, for any MDP with differing
rates (of the exponential distribution) in each state, we get a set of
rational functions in $e^{-\frac{1}{q}}$ for some
$q\in\mathbb{Z}_{>0}$, where $q$ is obtained as a function of the
rates in each state. Thus, using standard algorithms for
MDPs~\cite{BK}, and as done for Markov chains in~\cite{qest08}, we get
that we can compute expressions for the probability of reaching the
targets, and decide the threshold problem.

\begin{theorem}
\label{thm:quant-dec}
Quantitative reachability for 1-clock $1\frac{1}{2}$-player STGs
satisfying $(\star)$ is decidable.
\end{theorem}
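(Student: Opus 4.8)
The plan is to assemble the machinery developed above into a complete decision procedure. By Lemma~\ref{lem:quant-main}, the quantitative reachability problem on the $1\frac{1}{2}$ STG $\stg$ is equivalent to the quantitative reachability problem on the finite MDP $M_{\stg}$: the two clauses of the lemma give value-preserving translations of player $\Diamond$'s strategies in both directions, so a strategy achieving reachability probability $\bowtie p$ exists in $\stg$ if and only if one exists in $M_{\stg}$. It therefore suffices to decide, in $M_{\stg}$, whether player $\Diamond$ can guarantee that the probability of reaching the set of target states satisfies the threshold $\bowtie p$.

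First I would recall that for reachability objectives in finite MDPs, optimal strategies can be taken memoryless and deterministic~\cite{BK}; hence the relevant optimal value---the supremum or the infimum of $\cP(\text{reach } T)$ according to whether $\bowtie$ is a lower or an upper bound---is attained by one of the finitely many memoryless deterministic strategies. Fixing such a strategy collapses $M_{\stg}$ to a finite Markov chain, whose target-reachability probabilities are the unique solution of a linear system whose coefficients are precisely the edge labels of $M_{\stg}$. As observed in the discussion preceding the theorem, each such edge label is a rational function in the single quantity $y = e^{-1/q}$ (for a fixed $q \in \mathbb{Z}_{>0}$ determined by the finitely many exponential rates) with rational coefficients. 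Solving the linear system by Gaussian elimination keeps us inside the field $\mathbb{Q}(y)$, so the reachability probability under each strategy, and hence the optimal value over all of them, is a concretely computable element of $\mathbb{Q}(y)$.

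The remaining, and main, difficulty is to decide the comparison of this value $v \in \mathbb{Q}(y)$ with the rational threshold $p$. Writing $v = f(y)$ for an explicitly computed rational function $f$ with rational coefficients, I would set $g = f - p \in \mathbb{Q}(y)$ and ask for the sign of $g(y)$. The key point is that $y = e^{-1/q}$ is transcendental over $\mathbb{Q}$ (by the Lindemann--Weierstrass theorem, since $-1/q$ is a nonzero algebraic number), so $g(y) = 0$ if and only if $g$ is identically the zero function, a condition decidable by symbolic cancellation of numerator and denominator. When $g \not\equiv 0$ we are guaranteed $g(y) \neq 0$, and since $y$ is a computable real number we can approximate it to sufficient precision to read off the sign of $g(y)$ and thus whether $v \bowtie p$ holds. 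Combining the finite enumeration of memoryless deterministic strategies with this sign test yields an algorithm for the MDP threshold problem, which together with the reduction of Lemma~\ref{lem:quant-main} proves the theorem. The step I expect to require the most care is the validity of the terminating sign test, namely the justification that a nonzero rational function with rational coefficients cannot vanish at the transcendental point $y$, so that the numerical approximation is guaranteed to separate $g(y)$ from $0$ in finitely many steps.
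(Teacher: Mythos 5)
Your proposal is correct and follows essentially the same route as the paper: reduce to the finite MDP $M_{\stg}$ via Lemma~\ref{lem:quant-main}, observe that its edge probabilities are rational functions in $e^{-1/q}$, and decide the threshold comparison symbolically/numerically. The paper leaves the last step to standard MDP results~\cite{BK} and the Markov-chain argument of~\cite{qest08}; your enumeration of memoryless deterministic strategies, solution of the linear system over $\mathbb{Q}(e^{-1/q})$, and the Lindemann--Weierstrass-based sign test are precisely the details those references supply.
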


We can lift this construction to include $\Box$ player nodes, keeping
the same initialized restriction with $\Box$ nodes as well.  Then the
region game graph $\stg_\Rr$ includes $\Box$ nodes in the obvious way,
and we consider strategy profiles of $\Box$ and $\Diamond$. The
question then is to check if $\Diamond$ has a strategy to reach a
target with probability $\sim c$ against all possible strategies of
$\Box$ in $M_{\stg}$. Hence we have that
\begin{corollary}
\label{cor:dec}
Quantitative reachability for 1-clock $2 \frac{1}{2}$ player STGs
satisfying $(\star)$ is decidable.
\end{corollary}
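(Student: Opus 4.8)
The plan is to lift the entire machinery of Theorem~\ref{thm:quant-dec} from the purely stochastic-and-$\Diamond$ setting to the setting that also contains $\Box$ nodes, verifying that each step of the construction survives the addition of an adversarial player. First I would observe that the region-STG construction $\stg \rightsquigarrow \stg_\Rr$ is oblivious to which rational player owns a location: it only refines clock constraints into regions and records the region hit on entry. Hence $\stg_\Rr$ is built exactly as before, now carrying a three-way partition $(L_\Box,L_\Diamond,L_\bigcirc)$ of its locations, and the equivalence between questions asked on $\stg$ and on $\stg_\Rr$ is preserved because strategies of \emph{both} players lift back and forth along the region refinement in the standard way.

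Next I would re-run the deletion procedure that produces $M_\stg$. The key point is that the notion of \emph{deletable} node depends only on stochastic locations (a node $(\ell,R)$ with $\ell\in L_\bigcirc$ and $R\notin\{\mathbf 0,\infty\}$), so the deletion step is untouched: we still remove only stochastic nodes sitting in non-extremal regions, labelling surviving edges with the probabilities of the finitely many edge-sequences collapsed into them, while \emph{all} $\Box$ and $\Diamond$ nodes are retained verbatim. The resulting object is now a two-player MDP, i.e.\ a stochastic game (a $2\frac12$-player game on a finite graph), and the analogue of Lemma~\ref{lem:quant-main} must be stated for strategy \emph{profiles} $(\lambda_\Diamond,\lambda_\Box)$: for every profile in $\stg_\Rr$ there is a profile in $M_\stg$ inducing the same reachability probability, and conversely. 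The proof is the same bijection-of-behaviours argument as in Lemma~\ref{lem:quant-main}, now carried out componentwise on the two players' strategies; the initialized hypothesis $(\star)$, extended so that edges into stochastic locations reset $x$ regardless of whether the source is a $\Box$, $\Diamond$, or stochastic node, is exactly what guarantees that the probability carried by a deleted segment is independent of the history and depends only on the entry region, so that the collapse is well defined in the presence of $\Box$.

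Having reduced to the finite stochastic game $M_\stg$, the final step is to solve the quantitative reachability question ``does $\Diamond$ have a strategy guaranteeing probability $\bowtie p$ against every strategy of $\Box$?''. Here I would invoke the same symbolic-computation argument used for Theorem~\ref{thm:quant-dec}: the edge probabilities are polynomials (more generally rational functions) in $e^{-1/q}$ with rational coefficients, and solving a finite stochastic game with reachability objective reduces to computing the value via the standard fixed-point/linear-programming characterisation for simple stochastic games, whose optimal values are again expressible over this field of exponential constants; comparing the value against the rational threshold $p$ is then decidable. The main obstacle I expect is precisely this last point: unlike the one-player MDP case, where optimal reachability probabilities are obtained by a single linear program, the two-player value is defined by a $\min\max$ fixed point, so I must argue that the value still lies in a field where the threshold comparison is effective and that the existence of an \emph{optimal} (as opposed to merely $\varepsilon$-optimal) $\Diamond$-strategy is guaranteed. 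For finite reachability games this is classical—memoryless optimal strategies exist for both players—so the value is attained, lies in $\mathbb{Q}(e^{-1/q})$, and the comparison with $p$ is decidable, completing the reduction.
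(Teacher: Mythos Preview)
Your proposal is correct and follows essentially the same approach as the paper, which simply remarks that the construction is lifted verbatim to include $\Box$ nodes (with the initialized restriction extended to them), yielding a finite stochastic game $M_\stg$ on which the quantitative reachability question is asked. You have spelled out considerably more detail than the paper does---in particular your analysis of why the deletion step is unaffected, the strategy-profile version of Lemma~\ref{lem:quant-main}, and the argument that the resulting finite game is solvable over $\mathbb{Q}(e^{-1/q})$ via memoryless optimal strategies---all of which the paper leaves implicit.
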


\section{Discussion}
\label{sec:discuss}
In this paper, we have  
refined the decidability boundaries for STGs as summarized in the table in Introduction. The significance of our undecidability results for quantitative reachability (via different two-counter machine reductions) lies in the fact that they introduce ideas which could potentially help in settling other open problems. We highlight these below:

\begin{itemize}
\item for $1\frac{1}{2}$ player games, the crux is to cleverly encode the error $\epsilon$ made by player $\Diamond$ in such a way that it reflects as $\frac{1}{2}-\epsilon^2$ in the resulting probability. This ensures that  the $\Diamond$ player can never cheat and the probability will be $< \frac{1}{2}$ as 
soon as there is an error (even when simulating a non-halting run of the two-counter machine). Indeed, this is why the reduction is from the non-recursively enumerable non-halting problem. 
\item  
  for $2\frac{1}{2}$ player games in the \emph{time-bounded setting}, we obtain undecidability by showing a reduction from halting problem for two-counter machines. This is surprising, as 
  time-boundedness restores decidability in several classical undecidable problems like the inclusion problem in timed automata~\cite{ORW09,OW10}. In the case of priced timed games \cite{concur14}, time-boundedness gives undecidability; however, this can be attributed to  
  the fact that price variables are not clocks, and can grow at different rates in different locations. Somehow, the combination of simple clocks and probabilities achieves the same. 
\end{itemize}
Combining these ideas would, e.g., allow us to improve Theorem \ref{thm:undec-two} by showing undecidability of time bounded, quantitative reachability in $1\frac{1}{2}$ player STGs with a larger number of clocks. 
The main intricacy is to replace $\Box$ player nodes by stochastic nodes, and adapt the gadgets in such a way that, within a global time bound, the probability of reaching a target is $\frac{1}{2}$ iff all simulations  are correct and the two-counter machine does not halt.  
As another example, if in the first item above, we obtain a probability of $1-\epsilon^2$ (rather than $\frac{1}{2}-\epsilon^2$), this would settle the (currently open) \emph{qualitative} reachability problem for $2\frac{1}{2}$ games~\cite{icalp09}.

Coming to decidability results, we have for the first time characterized a family of $1\frac{1}{2}$,$2\frac{1}{2}$ player STGs for whom the quantitative reachability is decidable.  The use of exponential distributions is mandatory to get a closed form expression for the probability.
It is unclear if this construction can be extended to some larger classes of STGs. Figure 9 in~\cite{journal-sta} shows an example of a two-clock $\frac{1}{2}$ player game for which the region abstraction fails to give any relevant information on the real ``probabilistic'' behaviour of the system (lack of so-called fairness); in particular it cannot be used for qualitative, and therefore quantitative, analysis of reachability properties. 
The decidability of qualitative reachability in $1\frac{1}{2}, 2\frac{1}{2}$, multi-clock STG seems then hard due to the same problem of unfair runs.  If one restricts to one clock, then the qualitative reachability of $1\frac{1}{2}$ STGs is decidable \cite{icalp09}.  We conjecture that this can be extended to $2\frac{1}{2}$ STGs in the single clock case.

\bibliographystyle{plain}
\bibliography{main}

\newpage
\appendix
\newpage
\centerline{\Large{\bf Appendix}}

\section{Counter Machines}
A two-counter machine $M$ is a tuple $(L, C)$ where ${L = \set{\ell_0,
    \ell_1, \ldots, \ell_n}}$ is the set of instructions---including a
distinguished terminal instruction $\ell_n$ called HALT---and ${C =
  \set{c_1, c_2}}$ is the set of two \emph{counters}.  The
instructions $L$ are one of the following types:
\begin{enumerate}
\item (increment $c$) $\ell_i : c := c+1$;  goto  $\ell_k$,
\item (decrement $c$) $\ell_i : c := c-1$;  goto  $\ell_k$,
\item (zero-check $c$) $\ell_i$ : if $(c >0)$ then goto $\ell_k$
  else goto $\ell_m$,
\item (Halt) $\ell_n:$ HALT.
\end{enumerate}
where $c \in C$, $\ell_i, \ell_k, \ell_m \in L$.
A configuration of a two-counter machine is a tuple $(l, c, d)$ where
$l \in L$ is an instruction, and $c, d$ are natural numbers that specify the value
of counters $c_1$ and $c_2$, respectively.
The initial configuration is $(\ell_0, 0, 0)$.
A run of a two-counter machine is a (finite or infinite) sequence of
configurations $\seq{k_0, k_1, \ldots}$ where $k_0$ is the initial
configuration, and the relation between subsequent configurations is
governed by transitions between respective instructions.
The run is a finite sequence if and only if the last configuration is
the terminal instruction $\ell_n$.
Note that a two-counter  machine has exactly one run starting from the initial
configuration. 
The \emph{halting problem} for a two-counter machine asks whether 
its unique run ends at the terminal instruction $\ell_n$.
It is well known~(\cite{Min67}) that the halting problem for
two-counter machines is undecidable.

\section{Undecidability of Quantitative Reachability for $1\frac{1}{2}$ STGs}
\label{undec-one-full}
We complete the proof of the undecidability for qualitative reachability in  
$1\frac{1}{2}$ STGs. The simulation of an increment instruction was described in section \ref{one}.
Here we describe the gadgets simulating  decrement and zero test instructions.
Figure \ref{zero-test} describes the gadget simulating the instruction 
$\ell_i$ : If $C_1>0$, then goto $\ell_j$, else goto $\ell_k$.
  It can be seen that with probability $\frac{1}{2}$, the next instruction 
  is simulated, while with probability $\frac{1}{2}$, we reach a target location.

\begin{figure} [th]                                                                                                                                                          
   \begin{center}                                                                                                                                                         
   \scalebox{0.7}{                                                                                                                                                       
   \begin{tikzpicture}[->,thick]                                                                                                                                          
     \node[initial, dia, initial text ={$x_1=\frac{1}{2^{c_1}},x_4=0$}] at (-5,0)  (A) {$\ell_i$} ;                         
        \node[cir,label=above:$x_4\equal 0$] at (-3,2) (B1) {B1}; 
    \node[cir,label=below:$x_4\equal 0$] at (-3,-2) (B2) {B2};
    \node[cir, accepting] at (-3,-4) (C2){T};
    \node[cir, accepting] at (-3,4) (C1){T};
    \node[dia] at (-1,-2) (D2) {$\ell_j$};
    \node[dia] at (-1,2) (D1) {$\ell_k$};
\path (A) edge node [left]{$x_1=1$} (B1);
    \path (A) edge node [left]{$x_1<1$} (B2);
     \path(B2) edge node{} (C2);
    \path(B1) edge node {}(C1);
        \path(B2) edge node {}(D2);
    \path(B1) edge node {}(D1);
   
        \end{tikzpicture}
}
\caption{Zero Test  Instruction}
\label{zero-test}
       \end{center}
    \end{figure}
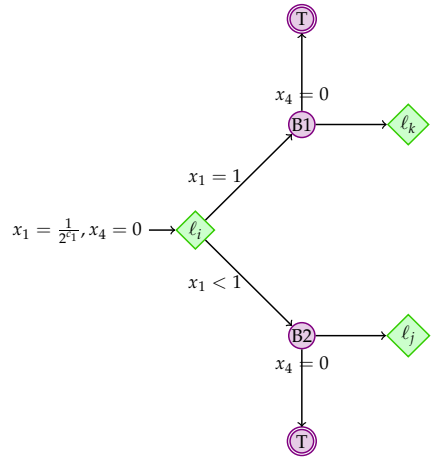

Next, let us see the simulation of a decrement instruction 
$\ell_i$: decrement $C_1$, goto $\ell_j$. Figure \ref{dec-c1} 
depicts this.

\begin{figure} [h]                                                                                                                                                          
   \begin{center}                                                                                                                                                         
   \scalebox{0.7}{                                                                                                                                                       
   \begin{tikzpicture}[->,thick]                                                                                                                                          
                                                                                                                                                                        
    \node[initial, dia, initial text ={$x_1=\frac{1}{2^{c_1}},x_3=0$}] at (-5,0)  (A) {$\ell_i$} ;                                                                                                                      
    \node[cir,label=right:$x_4\equal 0$] at (0,0) (B) {B};
    \node[dia] at (0,2) (D) {D};
\node[dia] at (0,4) (F) {$\ell_j$};
\node[dia] at (0,-2) (F1) {C};
\node[dashed,rectangle,draw=red,fill=yellow] at (0,-4)(E){GetProb}; 
         \path (A) edge node [above]{$0<x_1,x_3<1$} (B);
    \path (A) edge node [below]{$\{x_4\}$} (B);
    \path (B) edge node[left]{$\{x_1\}$} (D);
    \path (B) edge node[left]{$\{x_2\}$} (F1);
    \path(F1) edge node[left] {$x_1=1$} (E);
    \path(F1) edge node[right] {$\{x_1,x_4\}$} (E);
    \path (D) edge[loop left] node [left]{$x_2=1,\{x_2\}$} (D);
   \path (D) edge node [left]{$x_3=1$} (F);
   \path (D) edge node [right]{$\{x_3,x_4\}$} (F);
                   \node[cir,initial,initial where=right, initial text={}, label=left:$x_4 \leq 2$] at (6,0) (E0) {E0};
                   \node[cir,accepting] at (10,-9) (T1) {};
                   \node[cir,accepting] at (2,-9) (T2) {};
                   \node[cir,accepting] at (10,9) (T3) {};
                   \node[cir,accepting] at (2,9) (T4) {};
                   \node[cir] at (8,11) (T30) {};
                   \node[cir] at (3,11) (T40) {};
                   
                   \node[cir] at (7,-10) (T10) {};
                   \node[cir] at (4,-10) (T20) {};

             \node[dia] at (8,-2) (E1) {E1};
                \node[dia] at (4,-2) (E2) {E2};
  \node[dia] at (8,2) (E3) {E3};
                \node[dia] at (4,2) (E4) {E4};
   \node[dia] at (4,-4) (G) {G};
   \node[dia] at (8,-4) (H) {H};
 \node[dia] at (4,-6) (G1) {G1};
   \node[dia] at (8,-6) (H1) {H1};

       \node[dia] at (8,4) (I) {I};
   \node[dia] at (4,4) (J) {J};
       \node[dia] at (8,6) (I1) {I1};
   \node[dia] at (4,6) (J1) {J1};
    \node[cir,label=below:$x_4 \leq 2$] at (6,9) (P1) {P1};
    \node[cir,label=above:$x_4 \leq 2$] at (6,-9) (P2) {P2};                                                 
                
   \path (E0) edge node [right]{$x_3 \geq 1 \wedge x_4 \leq 1$} (E1);
   \path (E0) edge node [left]{$x_2 \geq 2 \wedge x_4 \leq 2$} (E2);
  \path (E0) edge node [right]{$x_3 \leq  1$} (E3);
   \path (E0) edge node [left]{$x_4 \geq 1 \wedge x_2 \leq 2$} (E4);
  \path (E2) edge node [left]{$x_4=2$} (G);
  \path (E2) edge node [right]{$\{x_1,x_4\}$} (G);
  \path (E1) edge node [left]{$x_4=2$} (H);
  \path (E1) edge node [right]{$\{x_1,x_4\}$} (H);
  
  \path (E4) edge node [left]{$x_4=2$} (J);
  \path (E4) edge node [right]{$\{x_1,x_4\}$} (J);
  \path (E3) edge node [left]{$x_4=2$} (I);
  \path (E3) edge node [right]{$\{x_1,x_4\}$} (I);
    \path (G) edge[loop left] node [left]{$x_3=3,\{x_3\}$} (G);                                                                                                                            
                                                                                                                                                                                                                                                                                                                                                                                            \path (H) edge[loop right] node [right]{$x_3=3,\{x_3\}$} (H);
                                                                                                                                                                                                                                                                                                                                                                                             \path (J) edge[loop left] node [left]{$x_3=3,\{x_3\}$} (J);                                                                                                                            
                                                                                                                                                                                                                                                                                                                                                                                            \path (I) edge[loop right] node [right]{$x_3=3,\{x_3\}$} (I);
                                                                                                                                                                                                                                                                                                                                                                                            
                                                                                                                                                                                                                                                                                                                                                                                             \path (G) edge[left] node [left]{$x_2=3$} (G1);                                                                                                                            
                                                                                                                                                                                                                                                                                                                                                                                           \path (G) edge[right] node [right]{$\{x_4,x_2\}$} (G1);

                                                                                                                                                                                                                                                                                                                                                                                              \path (H) edge[left] node [left]{$x_2=3$} (H1);                                                                                                                            
                                                                                                                                                                                                                                                                                                                                                                                           \path (H) edge[right] node [right]{$\{x_4,x_2\}$} (H1);
                                                                                                                                                                                                                                                                                                                                                                                           
                                                                                                                                                                                                                                                                                                                                                                                               \path (I) edge[left] node [left]{$x_2=3$} (I1);                                                                                                                            
                                                                                                                                                                                                                                                                                                                                                                                           \path (I) edge[right] node [right]{$\{x_4,x_2\}$} (I1);
                                                                                                                                                                                                                                                                                                                                                                                           
                                                                                                                                                                                                                                                                                                                                                                                           \path (J) edge[left] node [left]{$x_2=3$} (J1);                                                                                                                            
                                                                                                                                                                                                                                                                                                                                                                                           \path (J) edge[right] node [right]{$\{x_4,x_2\}$} (J1);
                                                                                                                                                                                                                                                                                                                                                                                           
                                                                                                                                                                                                                                                                                                                                                                                           \path (J1) edge[bend left] node [left]{$x_1=1$} (P1);
\path (J1) edge[bend left] node [right]{$\{x_1,x_4\}$} (P1);                                                                                                                                                                                                                                                                                                                                                                                           
                                                                                                                                                                                                                                                                                                                                                                                            \path (I1) edge[bend right] node [left]{$x_1=1$} (P1);
\path (I1) edge[bend right] node [right]{$\{x_1,x_4\}$} (P1);

                                                                                                                                                                                                                                                                                                                                                                                             \path (G1) edge[bend right] node [left]{$x_1=1$} (P2);
\path (G1) edge[bend right] node [right]{$\{x_1,x_4\}$} (P2);                                                                                                                                                                                                                                                                                                                                                                                           
                                                                                                                                                                                                                                                                                                                                                                                            \path (H1) edge[bend left] node [left]{$x_1=1$} (P2);
\path (H1) edge[bend left] node [right]{$\{x_1,x_4\}$} (P2);
                                                                                                                                                                                                                                                                                                                                                                                            
                                                                                                                                                                                                                                                                                                                                                                                            \path(P2) edge node[above]{$x_3 \leq  1$}(T1);
\path(P2) edge node[above]{$x_4 \geq 1 \wedge x_2 \leq 2$}(T2);                                                                                                                                                                                                                                                                                                                                                                                            

\path(P2) edge node[right]{$x_3 \geq 1 \wedge x_4 \leq 1$}(T10);
\path(P2) edge node[left]{$x_2 \geq 2 \wedge x_4 \leq 2$}(T20);

                                                                                                                                                                                                                                                                                                                                                                                             \path(P1) edge node[below]{$x_3 \geq 1 \wedge x_4 \leq 1$}(T3);
                                                                                                                                                                                                                                                                                                                                                                                               \path(P1) edge node[below]{                                                                                                                                                                                                                                                                                                                                          $x_2 \geq 2 \wedge x_4 \leq 2$}(T4);

\path(P1) edge node[right]{$x_3 \leq 1$}(T30);
                                                                                                                                                                                                                                                                                                                                                                                               \path(P1) edge node[left]{                                                                                                                                                                                                                                                                                                                                          $x_4 \geq 1 \wedge x_2 \leq 2$}(T40);
                                                                                                                                                                                                                                                                                                                                                                                            
  \end{tikzpicture}                                                                                                                                                      
  }                                                                                                                                                                     
  \caption{The decrement $c_1$ module on the left and the GetProb gadget on the right}                                                                                                                                       
   \label{dec-c1}                                                                                                                                                        
  \end{center}                                                                                                                                                         
\end{figure}
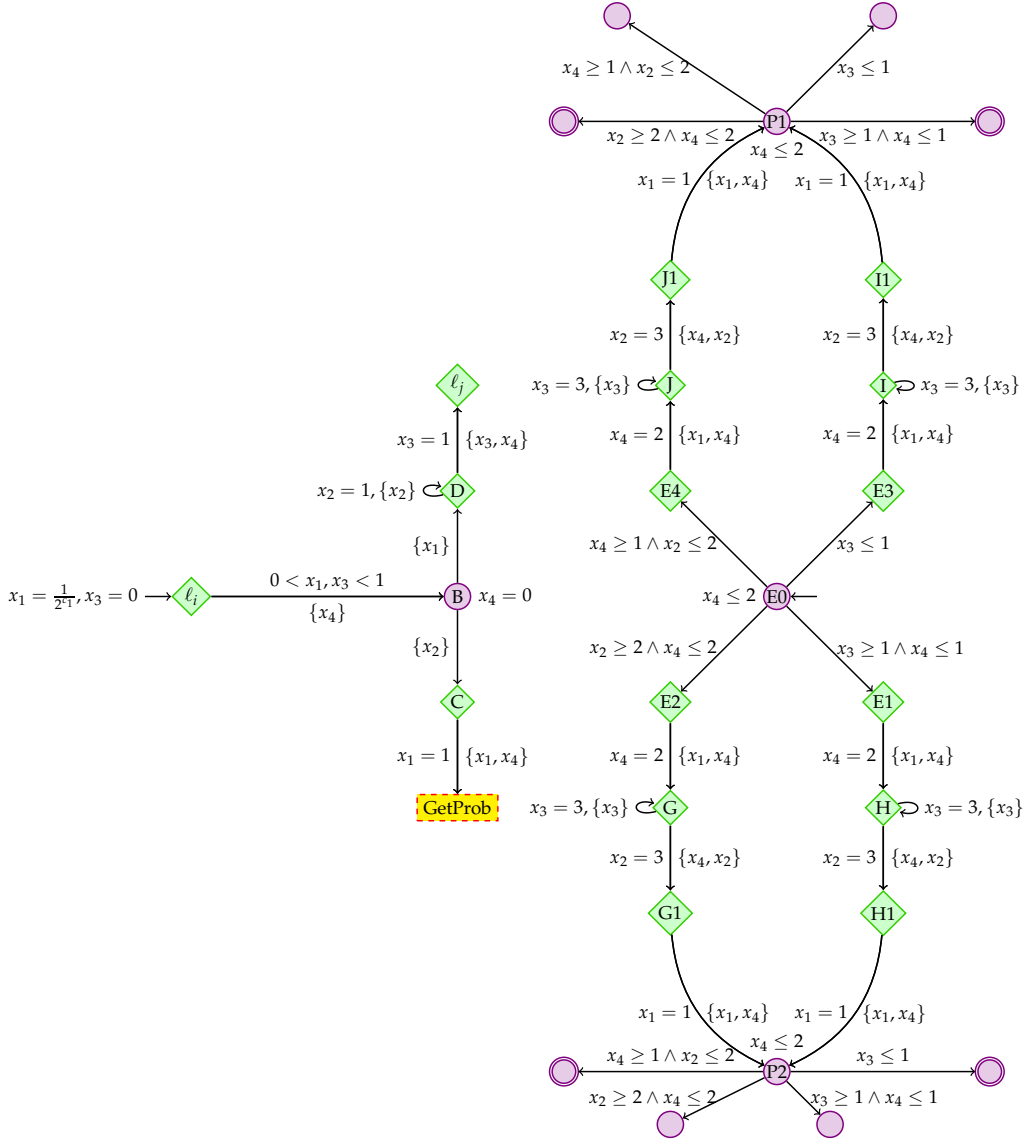 

The decrement module has as its initial location $\ell_i$, which 
is entered with values $x_1=\frac{1}{2^{c_1}}, x_2=\frac{1}{2^{c_2}}, x_3=x_4=0$. 
A non-deterministic time $t$ is spent at $\ell_i$. Ideally, $t=1-\frac{1}{2^{c_1-1}}$. 
At the stochastic node $B$, no time is spent. 
The simulation continues from the location $D$ : $D$ is entered resetting $x_1$.
At $D$ we thus have $x_1=0,x_2=  \frac{1}{2^{c_2}}+t, x_3=t,x_4=0$.
At $D$, a time $1-t$ is spent, reaching $\ell_j$ with values $x_1=t, x_2=\frac{1}{2^{c_2}}, x_3=x_4=0$.

Assume that the time spent at $\ell_i$, $t=1-\frac{1}{2^{c_1-1}}+\epsilon$. 
Now consider the case of going to the location $C$ from $B$ resetting $x_2$.
At $C$, we have $x_1=\frac{1}{2^{c_1}}+t=1-\frac{1}{2^{c_1}}+\epsilon$, 
$x_2=0, x_3=1-\frac{1}{2^{c_1-1}}+\epsilon, x_4=0$.
The gadget $GetProb$ is entered with values $x_1=0, x_2=\frac{1}{2^{c_1}}-\epsilon, 
x_3=1-\frac{1}{2^{c_1}}, x_4=0$. The initial location of $GetProb$ is $E0$.

 A total of 2 units of time can be spent at $E0$. It can be seen that 
  the time intervals $[0, \frac{1}{2^{c_1}}]$ and $[1,2-\frac{1}{2^{c_1}}+\epsilon]$ respectively 
     are enabled to reach $E3$ and $E4$. Similarly, the time intervals 
     $[\frac{1}{2^{c_1}},1]$ and $[2-\frac{1}{2^{c_1}}+\epsilon,2]$ respectively are enabled
  to reach $E1$ and $E2$. The probabiltiy 
  of reaching  $E3$ or $E4$ is thus $\frac{1}{2}(1+\epsilon)$ and 
  the probability of reaching $E1$ or $E2$ is thus 
  $\frac{1}{2}(1-\epsilon)$. The locations $P1, P2$ are reached with 
  $x_1=0, x_2=\frac{1}{2^{c_1}}-\epsilon, x_3=1-\frac{1}{2^{c_1}}, x_4=0$.   
  The probabilty of reaching a target location through $P1$ (from $E0$) is 
  hence $\frac{1}{2}(1+\epsilon)\frac{1}{2}(1-\epsilon)=\frac{1}{4}(1-\epsilon^2)$, while 
  the probability of reaching a target location through $P2$ (from $E0$) is $\frac{1}{2}(1+\epsilon)\frac{1}{2}(1-\epsilon)=\frac{1}{4}(1-\epsilon^2)$. The probability of reaching a target location in $GetProb$ is thus, $\frac{1}{2}(1-2\epsilon^2)$. 
  Note that if we start with $t=1-\frac{1}{2^{c_1-1}}-\epsilon$, we obtain exactly the same probability. 
  Thus, the probabilty to reach a target location in $GetProb$ is $\frac{1}{2}$ iff $\epsilon=0$. 

\section{Time-bounded quantitative reachability for $2\frac{1}{2}$ STGs}
\label{sec:time-bounded-app}
The details of the zero check (and the proof that it can be done in bounded time), which were missing in the main paper, due to lack of space, are given below.
Let us consider (wlog) the case when the $(k+1)^{th}$ instruction checks whether counter $C_1$ is zero. Assume that after $k$ instructions, we have $x_1=\frac{1}{2^{k+c_1}3^{k+c_2}}, x_2=0, z=1-\frac{1}{2^k}$ and $a=b=0$. The main module, given in Figure \ref{zer-chk}, can be divided into two parts. 
\begin{figure} [!h]                                                                                                                                                          
   \begin{center}                                                                                                                                                         
   \scalebox{0.9}{                                                                                                                                                       
   \begin{tikzpicture}[->,thick]                                                                                                                                          
    \node[initial, dia,initial text={$a,b,x_2=0$}] at (0,-1) (A) 
    {$\ell_{k+1}$} ;  
       \node[dia] at (2,-1) (B) {B};
    \node[box,label=below:{$b= 0$}] at (4,-1) (C) {Check};
       \node[dia,label=right:{$b= 0$}] at (7,-1) (D) {$D$};
    \node[box,label=above:{$b= 0$}] at (8,1) (D1) {$=0$};
    \node[box,label=below:{$b= 0$}] at (8,-3) (D2) {$>0$};
    \node[dia] at (12,1) (E1) {$\ell_{k+2}$};
    \node[dia] at (12,-3) (E2) {$\ell'_{k+2}$};
    
       \node[rounded rectangle,fill=gray!20!white] at (11,-.5) (W00) {Rem$^{x_1}_k$};
     \node[rounded rectangle,fill=gray!20!white] at (11,-1.5) (W01) {Rem$^{x_1}_k$};
   
       \node[rounded rectangle,fill=gray!20!white] at (13.5,-.5) (W1) {Wid$_{=0}$};
   \node[rounded rectangle,fill=gray!20!white] at (13.5,-1.5) (W2) {Wid$_{>0}$};

             \node[rounded rectangle,fill=gray!20!white] at (2,-3) (chkz) {Check $z$};
         \node[rounded rectangle,fill=gray!20!white] at (6,-3) (chkx) {Check $x$};
    \path (A) edge node [midway,above] {$a<1$}(B);      
    \path (A) edge node[midway,below] {$x_2:=0$}(B);
    \path (B) edge node[midway,above] {$a<1$}(C);
    \path (B) edge node[midway,below] {$b:=0$}(C);
    \path (C) edge node[] {}(chkz);
    \path (C) edge node[midway,below] {$x_1:=0$}(D);
       \path (C) edge node[] {}(chkx);
    \path (D) edge node[] {}(D1);
    \path (D) edge node[] {}(D2);
    \path (D1) edge node[midway,above] {$a:=0$}(E1);
    \path (D2) edge node[midway,below] {$a:=0$}(E2);
    \path (D1) edge node[midway,right] {$z:=0$}(W00);
    \path (D2) edge node[midway,right] {$z:=0$}(W01);
    \path (W00) edge node[midway,above] {$a=1$}(W1);
    \path (W00) edge node[midway,below] {$a:=0$}(W1);
    
    \path (W01) edge node[midway,above] {$a=1$}(W2);
    \path (W01) edge node[midway,below] {$a:=0$}(W2);

   \end{tikzpicture}                                                                                                                                                      
  }                                                                                                                                                                     
  \caption{Zero Check $C_1(x_1)$. $x_1$ holds the value $\frac{1}{2^{c_1+k}3^{c_2+k}}$ on entering the module.}                                                                                                                             
   \label{zer-chk}                                                                                                                                                        
  \end{center}                                                                                                                                                           
\end{figure}
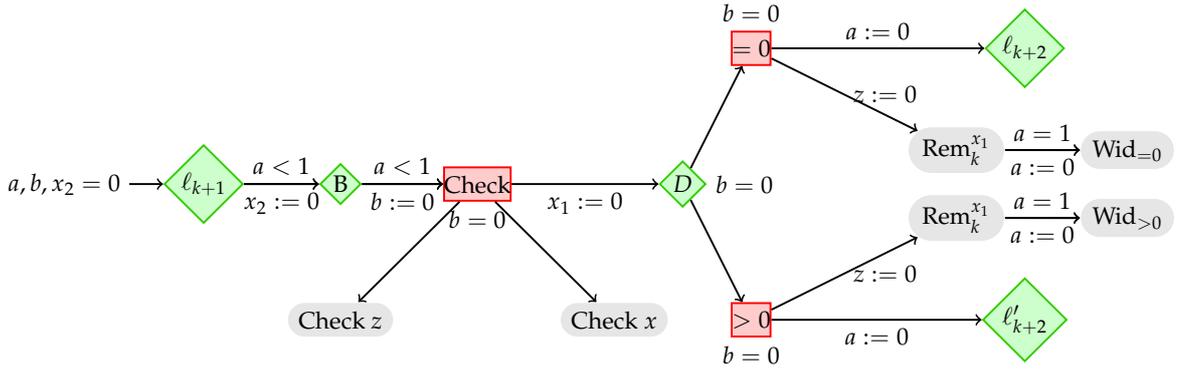

\begin{enumerate}
\item First, we make sure that the instruction counter, i.e., Clock $z$ is updated correctly: we spend times $t_1,t_2$ at locations $\ell_{k+1}, B$ respectively, and check that $t_1+t_2=\frac{1}{2^{k+1}}$ and $t_2=\frac{1}{2^{k+1+c_1}3^{k+1+c_2}}$. For this it suffices to check that at location Check we have $x_2=\frac{1}{6}.\frac{1}{2^{k+c_1}3^{k+c_2}}$, $x_1=\frac{1}{2^{k+c_1}3^{k+c_2}}+t_1+t_2, z=1-\frac{1}{2^{k+1}}$ $a=t_1+t_2$ and $b=0$. This is done, as before, by the Player $\Box$ using widgets Check $z$ (given in Figure~\ref{check1}) and Check $x$ similar to the widget Check $x_2$ in Figure \ref{check1}, where one simply changes the weights on edges of $F1$ to $C1$ and $C2$ to 
1 and 5 respectively. Then, we proceed to $D$. 

\item At $D$, player $\Diamond$ guesses whether $C_1=0$ or not, by choosing an appropriate $\Box$ location. From these, player $\Box$ can either allow the simulation to continue, or check the correctness of $\Diamond$'s guess. This check is done in three steps:
\begin{enumerate} 
\item First, we eliminate $k$ from $\frac{1}{2^{c_1+k}3^{c_2+k}}$ by multiplying by 6 for $k$ times, and from $a=\frac{1}{2^k}$ obtaining $a=1$. Each time multiplication by 6 happens, the clocks $x_1, x_2$ alternate. The widgets Rem$_k^{x_1}$ and   Rem$_k^{x_2}$ (Figure \ref{wid-chk}) are used alternately as long as $a<1$, and $x_1, x_2$ alternately store values $\frac{1}{2^{c_1+k}3^{c_2+k}}$, $\frac{1}{2^{c_1+k-1}3^{c_2+k-1}}$ till $\frac{1}{2^{c_1}3^{c_2}}$ is obtained in one of $x_1, x_2$. 
\item Once  $\frac{1}{2^{c_1}3^{c_2}}$ is obtained in $x_1$ or $x_2$, we further multiply by 3 for $c_2$ times to obtain $\frac{1}{2^{c_1}}$. This is done as represented in widgets Wid$_{=0}$, Wid$_{>0}$. 
\item Finally, to check if player $\Diamond$'s guess is correct or not, we only need to check if $x_1$ or $x_2$ is 1 which corresponds to $c_1=0$. 
\end{enumerate}
\end{enumerate}
It can be seen that a target location is reached with probability $\frac{1}{2}$ from Figure \ref{zer-chk} iff 
(1) the $(k+1)$th instruction (zero check) is accounted for correctly, at locations $\ell_{k+1}$ and $B$ 
in figure \ref{zer-chk}. The widgets Check $z$ and Check $x$ check this. (2) Player $\Diamond$ guesses correctly
whether $C_1$ is zero or not. If player $\Box$ goes in for further checks, then 
player $\Diamond$ must be faithful in the widgets  Rem$^{x_1}_k$ and  Rem$^{x_2}_k$, and 
also in widgets Wid$_{=0}$ and Wid$_{>0}$.

\begin{figure} [!h]                                                                                                                                                          
   \begin{center}                                                                                                                                                         
   \scalebox{0.9}{                                                                                                                                                       
   \begin{tikzpicture}[->,thick]       
                                                                                                                                      
    \node[initial, dia,label=above:{$z= 0$},initial text={$a=\frac{1}{2^{k+1}},x_2=\frac{1}{2^{k+c_1+1} 3^{k+c_2+1}},z,b,x_1=0$}] at (0,-1) (A) 
    {$A2$} ;
     \node[dia] at (2,-1) (A0) 
    {$A0$} ;  
    \node[cir,label=below:{$z= 0$}] at (0,-3) (V0) 
    {T} ; 
    \node[cir,accepting] at (-2,-3) (V1) 
    {} ; 
    
    \node[cir] at (2,-3) (V2) 
    {} ;

          \node[dia] at (4,-1) (B) {B2};
    \node[box,label=above:{$b= 0$}] at (6,-1) (C) {Check};
    \node[rounded rectangle,fill=gray!20!white] at (6,-3) (C0) {Wid$_{=0}$ OR Wid$_{>0}$};
                     \node[rounded rectangle,fill=gray!20!white] at (9,-1) (chk) {Rem$^{x_2}_{k}$};
                 
                 \node[rounded rectangle,fill=gray!20!white] at (3,-3) (mula) {Mul $a$};
         \node[rounded rectangle,fill=gray!20!white] at (9,-3) (mulx) {Mul $x_2$};

    \path (A) edge node [midway,above] {$a < \frac{1}{2}$}(A0);
    \path (A) edge node [midway,right] {$a=\frac{1}{2},x_2=\frac{1}{6}$}(V0);
            
    \path (A0) edge node [midway,above] {$x_1\leq 1$}(B);      
    \path (A0) edge node[midway,below] {$x_1:=0$}(B);
    \path (B) edge node[midway,above] {$x_1 \leq 1$}(C);
    \path (B) edge node[midway,below] {$b:=0$}(C);
        \path (C) edge node[midway,above] {$a<1$}(chk);
        \path (C) edge node[midway,below] {$z,x_2:=0$}(chk);
        \path (C) edge node[midway,left] {$a=1$}(C0);
        \path (C) edge node[midway,right] {$a:=0$}(C0);
        \path (V0) edge node[midway,right] {}(V1);
        \path (V0) edge node[midway,right] {}(V2);
        
        \path (C) edge node[] {}(mulx);
    \path (C) edge node[] {}(mula);
        \end{tikzpicture}                                                                                                                                                      
  }                                                                                                                                                                     
  \caption{Rem$^{x_1}_{k}$: Times $t_1,t_2$ spent at $A0,B2$ such that $t_1+t_2=\frac{1}{2^k}$ and $t_2=\frac{1}{2^{c_1+k}3^{c_2+k}}$.
  Note that $k \geq 2$. Mul $a$ checks on $t_1+t_2$ while Mul $x_2$ checks on $t_2$.
  Note that if Rem$^{x_1}_k$ is entered from the $>0$ $\Box$ location of Figure \ref{zer-chk}, then the target in Rem$^{x_1}_k$ is not reached with probability $\frac{1}{2}$  when $a=\frac{1}{2}$ and $x_2 = \frac{1}{6}$ as this corresponds to the scenario where  $C_1 = C_2 =0$ implying an incorrect guess by Player $\Diamond$ that $C_1 >0$.}                                                                                                                    
   \label{wid-chk}                                                                                                                                                        
  \end{center}                                                                                                                                                           
\end{figure}
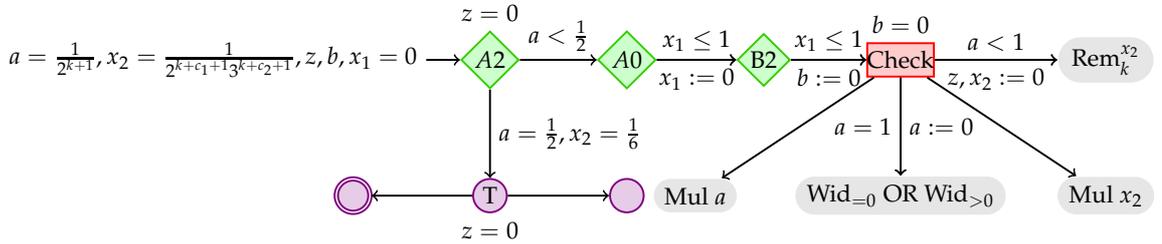

\begin{figure} [!h]                                                                                                                                                          
   \begin{center}                                                                                                                                                         
   \scalebox{0.9}{                                                                                                                                                       
   \begin{tikzpicture}[->,thick]                                                                                                                                          
    \node[initial, cir,label=right:{$b=0$},initial text={}] at (-4,-1) (A) 
    {$A5$} ;  
       \node[cir,label=below:{$b\leq 1$}] at (-2,0) (B) {B3};
       \node[cir,accepting] at (0,0) (G) {};
       \node[cir] at (-3,1) (H) {};

       \node[dia] at (-3,-2) (C) {C3};
       \node[dia] at (-1,-2) (D) {D3};
       \node[cir,label=above:{$b\leq 1$}] at (1,-2) (E) {E3};
       \node[cir,accepting] at (2.5,-2) (I) {};
       \node[cir] at (1,-4) (J) {};
           \path (A) edge node [midway,above] {}(B);
           \path (A) edge node [midway,above] {}(C);
            \path (B) edge node [midway,above] {$z > 1$}(G);
            \path (B) edge node [midway,left] {$z \leq 1$}(H);
            \path (C) edge node[midway,above] {$z=1$}(D);
            \path (C) edge node[midway,below] {$z:=0$}(D);
            \path (D) edge node[midway,above] {$a=2$}(E);
            \path (D) edge node[midway,below] {$b:=0$}(E);
            \path (E) edge node[midway,above] {$z >1$}(I);
            \path (E) edge node[midway,left] {$z \leq 1$}(J);

           \node[initial, cir,label=above:{$b\equal 0$},initial text={}] at (5,-1) (A1) {A4} ;
    \node[dia] at (6,-1) (B1) {B4};
             \node[cir,label=below:{$z\leq 1$}] at (8,-1) (B11) {C4};
    \path (A1) edge node[midway,above] {}(B1);
    \node[cir,accepting] at (9.5,-1) (B3) {};
    \node[cir] at (8,0.5) (B4) {};
\path (B1) edge node[midway,above] {$z=1$}(B11);
\path (B1) edge node[midway,below] {$z:=0$}(B11);
\path (B11) edge node[midway,above] {$x_2\leq 2$}(B3);
    \path (B11) edge node[midway,left] {$x_2>2$}(B4);
\node[cir,label=left:{$b\leq 1$}] at (5,-3) (C4) {E4};
   \node[cir,accepting] at (7,-3) (C5) {};
\node[cir,label=left:{$b=0$}] at (5,-2) (H) {};
\node[dia] at (7,-2) (H1) {};
\node[dia] at (5,-4.5) (C6) {};
\path (H) edge node[midway,above] {5}(H1);
\path (C4) edge node[midway,above] {$x_1 \geq 1$}(C5);
\path (C4) edge node[midway,left] {$x_1<1$}(C6);
 \path (A1) edge node[midway,above] {}(H);
 \path (H) edge node[midway,left] {1}(C4);

        \end{tikzpicture}                                                                                                                                                      
  }                                                                                                                                                                     
  \caption{Mul $a$ and Mul $x_2$. On entry, $x_1=t_2, b=0, x_2=n+t_1+t_2, z=t_1+t_2, a=\frac{1}{2^{k+1}}+t_1+t_2$ for $n=\frac{1}{2^{k+c_1+1} 3^{k+c_2+1}}$.
  Probabilty to reach a target in Mul $a$ is $\frac{1}{2}(t_1+t_2)+\frac{1}{2}(1-\frac{1}{2^{k+1}})$, while that in Mul $x_2$ is 
  $\frac{1}{2}(1-n)+\frac{1}{2}\frac{1}{6}t_2$. Thus, a target is reached in Mul $a$ with probability $\frac{1}{2}$ iff $t_1+t_2=\frac{1}{2^{k+1}}$. 
  That makes $a=\frac{1}{2^k}$ at the end. Likewise, to get a probability $\frac{1}{2}$ in Mul $x_2$, we need 
  $t_2=6n$.
  }                                                                                                                             
   \label{mul-a}                                                                                                                                                        
  \end{center}                                                                                                                                                           
\end{figure}
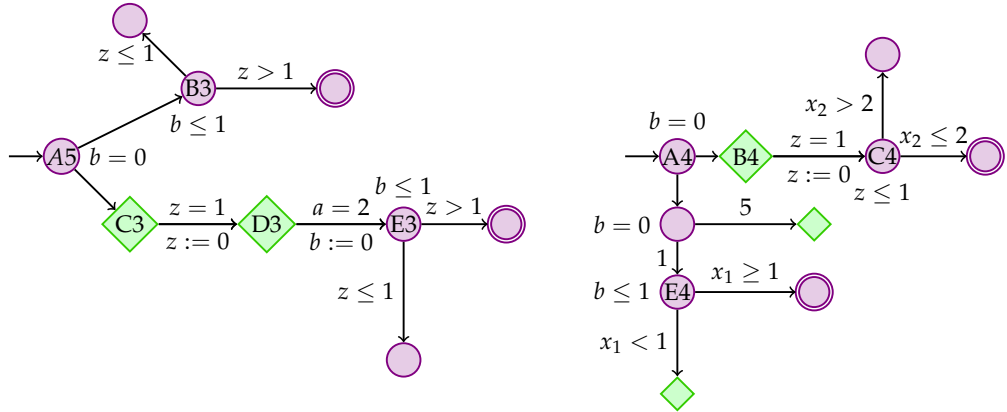

\begin{figure} [!h]                                                                                                                                                          
   \begin{center}                                                                                                                                                         
   \scalebox{0.9}{                                                                                                                                                       
   \begin{tikzpicture}[->,thick]                                                                                                                                          
    \node[initial,dia,initial text={$a=0$}] at (-4,-1) (A) 
    {$A5$} ;  
           \node[box,label=below:{$b=0$}] at (0,-1) (B) {B5};
       \node[cir,label=above:{$b=0$}] at (2,-1) (C) {J5};
       \node[cir,label=above:{$b=0$}] at (4,-1) (L) {K5};
       \node[cir] at (4,1) (O1) {};
       \node[cir,label=below:{$b\leq 1$}] at (6,-1) (M) {L5};
       \node[cir,accepting] at (8,-1) (N) {};
       \node[cir] at (6,1) (O) {};
       \path (C) edge node [midway,above] {}(L);
       \path (L) edge node [midway,above] {}(M);
       \path (L) edge node [midway,above] {}(O1);
       \path (M) edge node [midway,above] {$a \leq 1$}(N);
       \path (M) edge node [midway,right] {$a>1$}(O);
       
       \node[dia] at (2,-3) (P) {P5};
       \node[dia] at (4,-3) (Q) {Q5};
       \node[cir,label=above:{$b \leq 1$}] at (6,-3) (R) {R5};
       \node[cir,accepting] at (8,-3) (S) {};
       \node[cir] at (6,-5) (T) {};
       \path (C) edge node [midway,above] {}(P);
       \path (P) edge node [midway,above] {$a=1$} node [midway,below] {$a:=0$}(Q);
       \path (Q) edge node [midway,above] {$x_1=2$}  node [midway,below] {$b:=0$} (R);
       \path (R) edge node [midway,above] {$a \leq 1$}(S);
       \path (R) edge node [midway,left] {$a > 1$}(T);
       
        \path (A) edge node [midway,above] {$x_1\leq 1$}(B);
        \path (A) edge node [midway,below] {$b:=0$}(B);
        \path (B) edge[bend right=70] node [midway,above] {$x_1 \leq 1$}(A);
        \path (B) edge[bend right=70] node [midway,below] {$a:=0$}(A);
        \path (B) edge node [midway,below] {}(C);
    \node[dia] at (-7,-3) (F){} ; 
        
    \node[box,label=left:{$b=0$}] at (-4,-3) (E){$E5$} ; 
     \path (A) edge node [midway,left] {$x_1\leq 1$}(E);
     \path (A) edge node [midway,right] {$b:=0$}(E);
     \path (A) edge node [midway,left] {$x_1>1$}(F);
        
    \node[cir,label=above:{$b=0$}] at (-2,-3) (G){$G5$} ; 
    \node[cir,accepting] at (0,-3) (H){$H5$} ;
    \node[cir] at (-2,-5) (I){$I5$} ;
     \path (E) edge node [midway,above] {$x_1=1$}(G);
     \path (G) edge node [midway,left] {}(H);
     \path (G) edge node [midway,left] {}(I);
     
    \node[dia] at (-4,-5) (K){$R5$} ; 
     \path (E) edge node [midway,left] {$x_1<1$}(K);

        \end{tikzpicture}                                                                                                                                                      
  }                                                                                                                                                                     
  \caption{Wid$_{=0}$. $A5$ is the start node. $J5$ is entered with $a=t$, $x_1=\frac{1}{2^{c_1}3^{c_2}}+t$ and $b=0$, $t$ is the time spent at $A5$.
  Probability to reach a target location from $J5$ is $\frac{1}{4}(1-t)+\frac{1}{2}\frac{1}{2^{c_1}3^{c_2}}$ which is $\frac{1}{2}$ iff $t = 2 * \frac{1}{2^{c_1}3^{c_2}}$. This verifies that value $\frac{1}{2^{c_1}3^{c_2}}$ in $x_1$ is multiplied by 3 each time the loop is taken, 
  since $x_1$ becomes $\frac{1}{2^{c_1}3^{c_2}}+t=\frac{1}{2^{c_1}3^{c_2-1}}$. 
  Wid$_{>0}$ is obtained simply  having a multiply by 2 module at $R5$.
  }                                                                                                                             
   \label{wid-0}                                                                                                                                                        
  \end{center}                                                                                                                                                           
\end{figure}
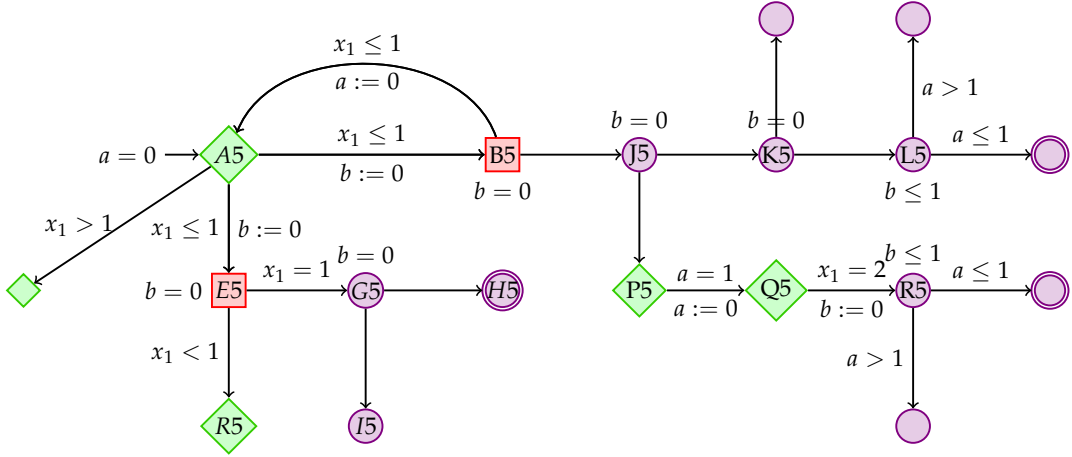

\paragraph*{Time Elapse for Zero Check}
Let us start looking at the main module for zero check in Figure \ref{zer-chk}.
 Assume that this is the $(k+1)$th instruction. A time $t_1+t_2=\frac{1}{2^{k+1}}$ is spent at locations 
 $\ell_{k+1}, B$ in Figure \ref{zer-chk}. Following this, if player $\Box$ goes in for a check in widgets 
 Check $z$ or Check $x$, the time elapse in these widgets is $<4$ as seen in the Increment section. 
 If not, control reaches one of the player $\Box$ locations $=0$ or $>0$. Here again, player $\Box$ can either 
 go ahead, or enter the Rem$_k^x$ widget.
 
 The Rem$_k^{x_1}$ widget is entered with $a=\frac{1}{2^{k+1}}$, $x_2=\frac{1}{2^{k+1+c_1} 3^{k+1+c_2}}$, 
 $z,x_1,b=0$. The time spent at $A0,B2$ is $\frac{1}{2^{k+1}}$. When control comes to the $\Box$ node, 
 there are two possibilities: (1) player $\Box$ continues with the Rem$_k^{x_2}$ widget, in which case 
 a time $\frac{1}{2^{k+1}}$ is elapsed. This can continue till $a=\frac{1}{2}$, a time 
 $\frac{1}{2^{k+1}}+\frac{1}{2^{k}}+\dots + \frac{1}{2^2}$ is elapsed after which, 
 the target $T$ is reached with probability $\frac{1}{2}$, or continues  
  till $a=1$ with time elapse $\frac{1}{2^{k+1}}+\frac{1}{2^{k}}+\dots + \frac{1}{2}$ and 
  control goes into the widget Wid$_{=0}$ or Mul $a$ or Mul $x_2$. The time elapse in the Mul $a$, 
  Mul $x_2$ widgets is $<4$. In the case of Wid$_{=0}$,  if the loop $B5-A5$ is taken till 
  $x_1=1$, a time $\frac{1}{2^{c_1}3^{c_2-1}}+\frac{1}{2^{c_1}3^{c_2-2}} + \dots \frac{1}{2^{c_1	}}$ is spent 
  till target $H5$  is reached. However, if player $\Box$ reaches out to the part from node $J5$, the time elapse is atmost 2 
  to reach a target. Thus, summing up, the time elapse is 
  \begin{enumerate}
\item Assume that the zero check is  the $(k+1)$th instruction. The time that has elapsed till the start of this instruction is 
$\frac{1}{2}+\dots +\frac{1}{2^{k}}$. 
\item The time elapse in the main module for zero check is  $\frac{1}{2^{k+1}}$. If player $\Box$ continues with the simulation, 
we are done.
\item If player $\Box$ enters any of the widgets (Check $x$, Check $z$, Rem$_k^{x_1}$, Rem$_k^{x_2}$, Wid$_{=0}$
Wid$_{>0}$), the time elapse is $<4$ till a target is reached.
\item The total time elapse till completion of $(k+1)$ instructions is thus
$<\frac{1}{2}+\dots +\frac{1}{2^{k}}+\frac{1}{2^{k+1}}+4$. 
\end{enumerate}

\noindent{\bf Halting and Correctness of construction}
 The gadget corresponding to the halt instruction is as follows:
 Once we reach the halt instruction, we go to a stochastic node $A$ with no time delay. 
 $A$ has two outgoing edges, one which leads to a target node, and the other one to a non-target.
 With no delay at $A$, the target is reached with probability $\frac{1}{2}$.  
We quickly give an intuition behind the proof of correctness of this construction:
Assume that the two counter machine halts. If Player $\Diamond$ simulates all the instructions correctly,  there are two possibilities:
 \begin{enumerate}
 \item Player $\Box$ allows simulation of the next instruction without
   entering any of the check gadgets. Then
   we will reach the halt location from where the probability to reach the target  is
   indeed $\frac{1}{2}$. \item Player $\Box$ enters any of the check
   gadgets during the simulation of some instruction. As can be seen
   from
   our earlier detailed
   analysis, 
   it is indeed the case that the probability to reach a target
   location is $\frac{1}{2}$.
   \end{enumerate} 
   
   Assume now that the two counter machine does not halt. If Player
   $\Diamond$ indeed simulates all the instructions correctly once
   again, then the only way to reach any target location is only by
   invoking a check gadget by Player $\Box$. As said above, clearly,
   this probability will be $\frac{1}{2}$ due to the correct
   simulation of Player $\Diamond$. Again, note that the times spent
   during increment/decrement of the $(k+1)$th instruction is
   $\frac{1}{2^{k+1}}$. This fact can be verified by the gadget
   $Check\_z$. In case of non-halting, therefore, the total time taken
   will converge to $1$. Thus, the time taken to reach any target
   location is $\leq 1$ in case of non-halting and correct simulation
   by Player $\Diamond$. Ofcourse, if Player $\Box$ never chooses to
   enter any of the check gadgets, then Player $\Diamond$ can never
   reach a target location, and hence cannot win.
 The total elapse in case Player $\Box$ enters a check gadget in the $(k+1)$th instruction is 
 $<4+\frac{1}{2}+\dots+\frac{1}{2^{k+1}}<5$.

 In both cases, if Player $\Diamond$ does not simulate correctly the
 instruction, Player $\Box$ can decide to check and the probablity to
 reach a target location will be $< \frac{1}{2}$.
Hence, Player $\Diamond$ has a winning strategy to ensure probability
$\frac{1}{2}$ for reaching a target location within $\Delta=5$ time
units iff the two-counter
machine halts.

\section{Details for Section \ref{sec:quant-dec}}
\label{app:quant-dec}

\subsection{Timed Region Graph}
We begin with a formal definition of the timed region graph. 
Given a $1\frac{1}{2}$ STG $\Gg=(\mathcal{A},L, \omega, \mu)$, we define the timed region graph 
$\GR=(\mathcal{R}(\mathcal{A}), L \times \mathcal{R}, \omega^{\mathcal{R}}, \mu^{\mathcal{R}})$ where 
$\mathcal{R}(\mathcal{A})$ has as its locations ordered pairs $(\ell, R)$ where
$\ell \in L$  and $R$ is a classical region. 
The transitions of $\GR$ are defined as follows.
We have a transition $(\ell, R) \stackrel{guard(R''),e,Y}{\rightarrow}(\ell', R')$
iff there exists an edge $e=\ell \stackrel{g,Y}{\rightarrow}\ell'$ in $\mathcal{A}$ such that there exists $\nu \in R, t \in \mathbb{R}$ with 
$(\ell,\nu)\stackrel{t,e}{\rightarrow}(\ell', \nu+t)$, $\nu+t \in R''$, 
and $\nu'=\nu''[Y \leftarrow 0] \in R'$. Here, $guard(R'')$ represents the minimal guard that captures region $R''$.
For instance, if region $R''$ is $(0,1)$ then $guard(R'')$ is $0 < x < 1$. 
Also, $Y$ is either the emptyset, or the single clock $\{x\}$.  
The standard region automaton (Alur-Dill) can be recovered by labelling transitions of $\GR$ 
with only $e$ rather than with  $guard(R''), e, Y$.

For every state $s=(\ell, \nu)$ in $\mathcal{A}$, there is a mapping $\imath(s)$
which maps it to $(\ell, R)$ such that $\nu \in R$. The 
probability measure for $\GR$ is defined such that $\mu^{\mathcal{R}}_{\imath(s)}=\mu_s$ and 
the weights of edges are also preserved. That is 
$\omega^{\mathcal{R}}(f)=\omega(e)$ where $f=guard(R''),e,Y$ is the edge corresponding to $e$, obtained 
from the map between states.  For brevity, we decorate the transitions 
in Figure \ref{ex1} with only $e_i$ rather than 
$guard(R''), e_i, Y$.

A strategy $\sigma$ of $\Diamond$ in $\Gg$ is a function that maps a finite run
$\rho= (l_0, \nu_0) \xrightarrow{d_0,e_0} (l_1, \nu_1)
\xrightarrow{d_1,e_1} \dots (l_n, \nu_n)$ to a transition $(d,e)$ where $d \in \mathbb{R}^+$ and $e$ is an edge, such
that $(l_n, \nu_n) \xrightarrow{d,e} (l', \nu')$ for some $(l',
\nu')$, whenever $l_n \in L_{\Diamond}$. For each such strategy $\sigma$ in $\Gg$, 
we have a corresponding strategy $\imath(\sigma)$ in $\GR$
that maps the finite run $\imath(\rho)=(l_0, R_0) \xrightarrow{f_0} (l_1, R_1)
\xrightarrow{f_1} \dots (l_n, R_n)$ to a transition $f$ such
that $(l_n, R_n) \xrightarrow{f} (l', R')$ for some $(l',
R')$, whenever $l_n \in L_{\Diamond}$. Here, $f_i$ stands for 
$guard(R''_i), e_i, Y$ such that $\nu_i+d_i \in guard(R''_i)$. 
Moreover,  $(l_i, R_i)=\imath(l_i, \nu_i)$ and 
 $\nu_i \in R_i$
for all $i$.  
For every finite path $\pi((l, \nu),e_1 \dots e_n)$ in $\Gg$, we have a finite set  
of paths $\pi( ((l,R), \nu), f_1 \dots f_n)$ in $\GR$, each one corresponding to a choice of regions passed. 
If $\rho$ is a run in $\Gg$, $\imath(\rho)$ stands for the unique image of the run in $\GR$.

\begin{lemma}[Strategy Mapping between $\Gg$ and $\GR$]
Let $\Gg$ be a $1 \frac{1}{2}$ player STG. Then 
player $\Diamond$ has a strategy $\sigma$ in $\Gg$ to reach  $(l, \nu)$ in $\Gg$ with probability 
$\sim c$ iff $\Diamond$ has a strategy $\imath(\sigma)$ in $\GR$ 
to reach $\imath(l, \nu)$ with the same probability. 
\end{lemma}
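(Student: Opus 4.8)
The plan is to reduce the statement to a single quantitative claim about finite symbolic paths, prove that claim by induction, and then lift it to the generated $\sigma$-algebra. Concretely, I would first show that for every strategy $\sigma$ of $\Diamond$ in $\Gg$, every finite prefix $\rho$ ending in a state $s=(\ell,\nu)$, and every edge sequence $e_1\cdots e_n$,
\[
\mathcal{P}_{\sigma}(\pi_\sigma(\rho,e_1\cdots e_n))=\sum_{f_1\cdots f_n}\mathcal{P}_{\imath(\sigma)}(\pi_{\imath(\sigma)}(\imath(\rho),f_1\cdots f_n)),
\]
where the (finite) sum ranges over the region-refined symbolic paths $f_1\cdots f_n$ of $\GR$ that project onto $e_1\cdots e_n$. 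This is the heart of the argument, and I would prove it by induction on $n$ following the two-case inductive definition of $\mathcal{P}$ from the preliminaries, the base case $n=0$ being trivial since both sides are $1$.

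For the inductive step there are two cases according to the type of $\ell$. If $\ell\in L_\Diamond$ and $\sigma(\rho)=(t,e)$, then by construction $\imath(\sigma)$ selects the unique region-edge $f_1=(guard(R''),e,Y)$ with $\nu+t\in R''$; both sides vanish unless $e_1=e$, and otherwise reduce to the probability of the shorter path after the $\Diamond$-move, to which the induction hypothesis applies. If $\ell\in L_{\bigcirc}$, then $\mathcal{P}_\sigma$ is the integral $\int_{I(s,e_1)}p(s+t)(e_1)\,\mathcal{P}_\sigma(\pi(\rho\xrightarrow{t,e_1}s',e_2\cdots e_n))\,d\mu(s)(t)$, and I would partition $I(s,e_1)$ into the finitely many pieces $\{\,t: \nu+t\in R''\,\}$, one per region $R''$ crossed by the delay. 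The single-clock, region-refined construction of $\GR$ guarantees that all guards have integer constants, so on each such piece the set of enabled edges, and hence the discrete jump probability $p(s+t)(e_1)$, is constant, and each piece corresponds to exactly one region-edge $f_1$ of $\GR$ with $\mu(s)$ restricted to it giving precisely the weight of that edge. Splitting the integral along this partition and applying the induction hypothesis inside each piece yields the claimed sum.

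With the finite-path identity in hand, I would extend it to cylinders and then to the generated $\sigma$-algebra: the assignment of a symbolic path to its finite family of region-refinements is compatible with the $\mathsf{Cyl}$ construction, so the two measures agree on the generating semiring of cylinders and therefore, by the same uniqueness/extension argument already invoked in the preliminaries (extending \cite{journal-sta}), on the whole $\sigma$-algebra. The event ``$\rho$ reaches the target'' in $\Gg$ corresponds under $\imath$ to the region-definable event of reaching some $(l,R)$ with $\nu\in R$, so the two reachability probabilities coincide; this establishes the direction $\Gg\Rightarrow\GR$ with equal value. For the converse, given a strategy $\tau$ of $\Diamond$ in $\GR$, I would construct $\sigma$ in $\Gg$ with $\imath(\sigma)=\tau$ by a measurable selection: whenever $\tau$ maps a region-run to $f=(guard(R''),e,Y)$, let $\sigma$ play $e$ together with any fixed measurable representative delay $t$ with $\nu+t\in R''$. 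The resulting $\sigma$ is measurable, satisfies $\imath(\sigma)=\tau$, and reaches the target with the same probability by the first part; since target reachability is region-definable, the particular representatives chosen are irrelevant.

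The main obstacle is the stochastic-node case of the induction, namely justifying the decomposition of the delay-integral along the region partition: one must verify that within each region the enabled-edge set and thus the jump distribution $p(s+t)(\cdot)$ are constant, that $t\mapsto\nu+t$ crosses only finitely many regions so that only finitely many $f_1$ arise, and that $\mu(s)$ apportions the sub-integrals exactly to the matching region-edges. This is precisely where the single clock and the region-refined definition of $\GR$ are essential. The only other point requiring care is the measurability of the lifted strategy in the converse direction, which is handled by a routine measurable-selection argument over the interval-shaped regions.
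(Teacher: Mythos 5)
Your proposal is correct and rests on the same key idea as the paper's proof: with a single clock, the delay integral at a stochastic node decomposes into finitely many region pieces, on each of which the set of enabled edges, and hence $p(s+t)(e_1)$, is constant, so each piece contributes exactly the probability of one region-edge of $\GR$. The packaging differs, though, and yours is more complete. The paper inducts on the number of stochastic nodes along a run and matches the integrals informally (``by setting the limits of the integral''), whereas you induct on the length of the edge sequence and prove a precise invariant, namely that $\mathcal{P}_\sigma$ of a symbolic path of $\Gg$ equals the finite sum of $\mathcal{P}_{\imath(\sigma)}$ over its region refinements; this is exactly the statement the paper uses only implicitly when it associates to each path of $\Gg$ its finite set of refinements in $\GR$. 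You also carry out two steps the paper leaves untouched: the lift from finite symbolic paths to the generated $\sigma$-algebra via agreement on cylinders, and, more importantly, the backward implication of the ``iff'', which you obtain by a measurable selection of delays realizing a given region strategy $\tau$ so that $\imath(\sigma)=\tau$ (correctly treating the representative delay as a measurable function of the current valuation rather than a constant per region). The paper's proof only constructs $\imath(\sigma)$ from $\sigma$ and never addresses this converse, so your route buys a genuinely two-directional proof and a well-defined quantity to induct on, at the cost of more bookkeeping; the paper's version buys brevity.
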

\begin{proof}
The proof follows by construction of $\GR$ from $\Gg$. 
Fix a strategy $\sigma$ in $\Gg$. At each $(l, \nu)$ such that $l \in L_{\Diamond}$,
$\sigma$ chooses a time delay $d$ and an edge $e$ from $(l, \nu)$ 
based on the path $\rho$ seen so far, such that 
$(l, \nu)$ is the last state in $\rho$.  
Let $\rho=(l_0, \nu_0={\bf{0}}) \stackrel{d_0,e_0}{\rightarrow} (l_1, \nu_1) \stackrel{d_1,e_1}{\rightarrow} \dots \stackrel{d_{n-1},e_{n-1}}{\rightarrow} (l_n, \nu_n)=(l, \nu)$.

We induct on the number of stochastic nodes seen so far in $\rho$. Assume that in the path so far, we have witnessed exactly one stochastic node. 

\begin{enumerate}
\item Assume $|\rho|=1$ and $l_0$ is a stochastic node. 
In $\GR$, we start with $(l_0,R_0)$ where $R_0={\bf 0}$ is the initial region.  
 To satisfy the guard on edge $e_0$ in $\Gg$, we can choose any appropriate delay 
  $d_0$.  In $\GR$, 
 the guard chosen is the minimal region which contains $d_0$. 
  For each choice of $d_0$, we have an appropriate guard 
 which captures the correct interval which contains it.  This time interval determines the probability 
 for the edge $e_0$ chosen in both $\Gg$ as well as $\GR$ and is the same, by setting the limits of the integral. 
 
 If $l_0$ is not a stochastic node, then we simply continue 
 mapping locations in $\Gg$ with those in $\GR$, by mapping edges $e_i$ with $f_i$, until we reach a stochastic node. 
  The first time we reach a stochastic node with valuation $\nu_i$, $(l_i, \nu_i)$ in $\rho$, we will reach 
 in $\GR$, the node $(l_i, R_i)$. At this point, as seen above, 
 for a delay $d_i$ and an edge $e_i$ chosen in $\Gg$, we choose $f_i$ so that the minimal guard 
 captures the precise time interval in which $d_i+\nu_i$ lies in. Since the minimal time interval containing $d_i+\nu_i$ determines the probability 
 of $e_i$ in $\Gg$ and $f_i$ in $\GR$, we have matched the probabilities 
 till the first stochastic node.  
 
 \item Now assume that the probabilties are preserved till some $n-1$ stochastic nodes seen, and we are going to see 
 the $n$th stochastic node.  The same argument as above applied to the $n$th stochastic node 
  ensures that the probabilities incurred each time remain the same, and hence 
  the probability of reaching some $(l, \nu)$ in $\Gg$ is same as that of reaching $\imath(l, \nu)$
  in $\GR$.
 
\end{enumerate}

\end{proof}

\begin{lemma}
\label{lem:quant}
If $\Gg$ is a initialized 1 clock $1 \frac{1}{2}$ player STG, then $M_{\Gg}$ is a Markov decision process.
\end{lemma}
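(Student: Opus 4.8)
The goal is to verify that the collapsed object $M_{\Gg}$ meets the definition of a finite MDP: a finite state set partitioned into \emph{controlled} nodes (those of Player~$\Diamond$) and \emph{stochastic} nodes, where every controlled node carries a finite nonempty set of actions (its outgoing edges) and every stochastic node carries a genuine probability distribution over its outgoing edges. Recall that the surviving nodes are exactly the $\Diamond$-nodes of $\GR$ together with the stochastic nodes whose region is $\mathbf{0}$ or the unbounded region $\infty$, and that a surviving edge is obtained by contracting a maximal path that passes only through \emph{deletable} nodes (stochastic nodes with a bounded, nonzero region). There are thus three things to establish: (i) each surviving edge is labelled by \emph{finitely many} edge-sequences, so that $M_{\Gg}$ is finite; (ii) the action structure at $\Diamond$-nodes is well formed; and (iii) at each surviving stochastic node the probabilities attached to the outgoing edges sum to~$1$. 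I would treat these in turn, and the standing hypotheses $(\star)$ are used exactly once each: structural non-Zenoness for (i), initialization for (ii), and the exponential-delay assumption for (iii).

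For (i) I would first prove that the subgraph of $\GR$ induced by the deletable nodes is acyclic. Suppose towards a contradiction that it contains a cycle. Every node on the cycle is stochastic with a bounded nonzero region, so every edge of the cycle carries a nontrivial upper bound; moreover the cycle cannot contain a clock reset, since a reset would lead to a node with region $\mathbf{0}$, which is not deletable. Hence the projection of this cycle to $\mathcal{A}$ is a bounded cycle containing no zero-region location, contradicting item~1 of $(\star)$. Consequently the deletable subgraph is a finite DAG, so between any two surviving nodes there are only finitely many paths through deletable nodes, each surviving edge carries a finite set of edge-sequences, and $M_{\Gg}$ is a finite structure with well-defined labels.

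For (ii) I would invoke initialization (item~3 of $(\star)$). Every edge leaving a $\Diamond$-node goes either to another $\Diamond$-node, which survives, or to a stochastic node, in which case the edge resets the clock and so enters at region $\mathbf{0}$, which also survives. Thus no deletable node is an immediate successor of a $\Diamond$-node, and the contraction step never alters an outgoing edge of a controlled node. The nondeterministic choices available to Player~$\Diamond$ are therefore carried over verbatim into $M_{\Gg}$, endowing each controlled node with a finite nonempty action set.

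For (iii) I would view the deletable nodes as the transient states and the surviving nodes as the absorbing states of a finite embedded chain. At any stochastic node, item~2 of $(\star)$ gives $I(s)=\Rplus$ with an atomless exponential delay density; composing this density with the weighted discrete choice of an enabled edge yields a genuine distribution over the immediate successor region-nodes that sums to~$1$, and singleton regions receive probability~$0$ because the density has no atoms. Since the deletable subgraph is acyclic by (i), every run leaves it after a bounded number of steps, so absorption into a surviving node occurs with probability~$1$; the absorption probabilities are precisely the labels on the surviving edges out of each surviving stochastic node, are finite sums of products of such integrals, and sum to~$1$, completing the verification that $M_{\Gg}$ is an MDP. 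The step I expect to be the main obstacle is exactly this one: one must justify carefully that collapsing a continuous-time, integral-defined probabilistic transition through the DAG of deletable nodes yields a single discrete distribution of total mass~$1$, handling the atomless measures and checking that the finitely many path-integrals assemble correctly. The acyclicity argument of (i) is the load-bearing fact that makes this finite assembly possible, and getting the non-Zeno application precisely right---linking a cycle of bounded regions to a bounded cycle with no zero region---is the delicate point there.
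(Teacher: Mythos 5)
Your proof is correct and rests on the same underlying construction as the paper's (contracting away the deletable nodes of $\GR$ and relabelling surviving edges by edge-sequences), but it is organized around a different key lemma, so the routes are worth contrasting. The paper argues iteratively: it defines a single-node operation $remove(\ell,R)$ that composes each incoming edge of a deletable node with each of its outgoing edges (concatenating labels), checks that this operation is well defined --- in particular that a deletable node cannot carry a self-loop, by structural non-Zenoness --- and then proves as an invariant that after each individual removal the outgoing probabilities of every remaining stochastic node still sum to $1$, via a local case analysis and linearity of the Lebesgue integral; termination is immediate since each step deletes one node, and preservation of path probabilities falls out of the same invariant. You instead prove up front that the subgraph induced by the deletable nodes is acyclic (the same non-Zeno argument, but applied to arbitrary cycles rather than only self-loops: no resets can occur inside such a cycle, all guards are bounded, contradiction), and then collapse everything in one shot, obtaining total mass $1$ from an absorption argument on a finite DAG. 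Each route buys something. Your acyclicity lemma makes explicit a fact the paper leaves largely implicit --- its iterative removal would manufacture self-loops out of longer all-deletable cycles if such cycles existed, so their impossibility is silently needed there as well --- and it directly yields finiteness of the path labels on surviving edges. The paper's one-removal-at-a-time invariant, in exchange, reduces the measure-theoretic bookkeeping (which you yourself flag as the delicate point of your absorption step) to a single integral-composition check per step, rather than a simultaneous assembly of all path integrals. Your use of initialization --- no deletable node is an immediate successor of a $\Diamond$-node, so the contraction never alters player $\Diamond$'s outgoing edges --- matches the paper's use of that hypothesis exactly.
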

\begin{proof}[Proof sketch]
Observe that since $\Diamond$ to $\bigcirc$  edges always reset the clock, we can compute the probability values 
 on $\bigcirc$ nodes. We need to show that from any $\bigcirc$ node,
 the probability of the outgoing paths (and edges to $\Diamond,
 \bigcirc$ nodes) adds up to 1.  

First observe that if $N=0$, i.e, if $\GR$ has no stochastic nodes $(l,\alpha)$ s.t. $\alpha\not \in \{0,\infty\}$, then $\GR$ already defines an MDP, obtained by computing the discrete probability on the edges (follows from the definition of an initialized STG: the absence of zero and unbounded regions in the stochastic nodes implies the absence of cycles in the STG).

Then, we recursively, remove all deletable nodes $\GR$ to obtain new region graph STG $G$ (with a new path-labeling alphabet on its edges), where the probabilities of any paths between nodes of $G$ are the same as the probability of that path in $\GR$. Thus, the sum of all probabilities of outgoing paths add to 1. Now, as all deletable nodes are removed, this gives an MDP. 

We now elaborate on the construction of $M_{\Gg}$ given the STG $\Gg$. Let $\Gg=(\mathcal{A},(L_{\Diamond},L_{\bigcirc}),\omega,\mu)$ be an 1 clock $1 \frac{1}{2}$ player STG.  Let us look at the region graph $\GR$ corresponding to it. 
 Further let $(l,R)$ be a deletable node in $\GR$. Then we define $remove(l,R)$ which modifies the region graph $\GR$ by 
\begin{itemize}
\item removing this node and all edges incoming to and outgoing from this node.
\item for each incoming edge $e_1$ from, say, $(l_1,R_1)$ to $(l,R)$ and each outgoing edge $e_2$ from $(l,R)$ to, say $(l_2,R_2)$, 
we add a new direct edge from $(l_1,R_1)$ to $(l_2,R_2)$ with the new label $e_1 e_2$.
\end{itemize}
Note that this operation is well-defined since, for every deletable node, there must exist an incoming edge (since the region is non-zero). Further,  there must also exist an outgoing edge, since it is a stochastic node and hence the sum of probabilities on outgoing edges of stochastic nodes in $\Gg$ sums to 1. If there is a self-loop, then it must be reset (by the structural non-Zeno assumption) and then this node will not be deletable. 

Let $G_1$ be the resulting structure obtained after the remove operation. The probability of these new edges labeled by paths in $G_1$ is the probability of the respective paths in $\Gg$.

\begin{lemma}
Suppose $G_1$ is obtained from $\GR$ by performing $remove (l,R)$. Then, for each stochastic node in $G_1$ the sum of outgoing probabilities is 1.
\end{lemma}
\begin{proof}
Consider any node $(l',R')$ in $G_1$ such that $l'\in L_\bigcirc$. There are two cases:
\begin{itemize}
\item there is no edge in $\GR$ from $(l',R')$ to $(l,R)$. Then the outgoing probabilities of $(l',R')$ do not change in $G_1$. As they summed to 1 in $\Gg$, they will continue to do so in $G_1$.
\item there is an edge $e$ in $\Gg$ from $(l',R')$ to $(l,R)$. Then consider all outgoing edges from $(l,R)$ in $\Gg$, call them $e_1,\ldots e_k$. By stochasticity of $\Gg$, $\sum_{i=1}^k\mathcal{P} (\pi((l,R),e_i))=1$. Then in $G_1$ from $(l',R')$, we have exactly $k$ outgoing edges labeled $ee_1,ee_2,\ldots e e_k$. Now if $E$ is the set of all other ($\neq e$) edges outgoing from $(l',R')$, then the sum of probabilities of all outgoing edges from $(l',R')$ is given by $\sum_{i=1}^k\mathcal{P} (\pi((l',R'), ee_i))+\sum_{e'\in E} \mathcal{P}(\pi((l',R'), e'))$ which is 
\begin{align*}
&=\mathcal{P}(\pi((l',R'),e))\cdot \sum_{i=1}^k\mathcal{P}(\pi((l,R),e_i))+ \sum_{e'\in E} \mathcal{P}(\pi((l',R'),e'))\\
&=\mathcal{P}(\pi((l',R'),e))+\sum_{e'\in E} \mathcal{P}(\pi((l',R'),e'))=1
\end{align*}
This follows by linearity of the Lebesgue integral and stochasticity of $\Gg$.
\end{itemize}
\end{proof}
Thus, $G_1$ is an (extended) STG in which edges are labeled by paths instead of edges and the probability of paths are computed as before. Thus by now repeatedly applying the remove operation on all deletable edges we obtain (after finitely many steps) an (extended) STG $G_n$ in which there are no deletable edges. This implies that $G_n$ is an MDP. Note that as an immediate consequence of the above lemma we also obtain that the probability of all paths are preserved. 
\end{proof}

\begin{lemma}[Strategy Mapping between $\GR$ and $M_{\Gg}$]
Let $\GR$ be the timed region graph corersponding to a  $1 \frac{1}{2}$ player STG $\Gg$. Then 
player $\Diamond$ has a strategy $\sigma$ in $\GR$ to reach  $(l,R)$ in $\GR$ with probability 
$\sim c$ iff $\Diamond$ has a strategy $g(\sigma)$ in $M_{\Gg}$ 
to reach $(l,R)$ with the same probability. 
\end{lemma}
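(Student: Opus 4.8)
The plan is to exploit the fact that the deletable nodes removed in passing from $\GR$ to $M_\Gg$ are all \emph{stochastic} nodes, so Player $\Diamond$ never makes a choice at any of them. Consequently every $\Diamond$ node of $\GR$ survives in $M_\Gg$, and the two models have exactly the same decision points for Player $\Diamond$; moreover, any compressed edge leaving a $\Diamond$ node is a single $\GR$-edge, since only stochastic nodes are collapsed. The whole argument is then to make this correspondence precise at the level of histories and to transport probabilities using the path-preservation property already established in the proof of Lemma~\ref{lem:quant}.

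First I would set up a history correspondence. Each surviving edge of $M_\Gg$ is labelled by a finite sequence $e_1\cdots e_m$ of edges of $\GR$, and, since $\GR$ is a region graph, such a sequence together with its source determines a unique path through deletable nodes. This gives an \emph{expansion} map sending a finite run of $M_\Gg$ ending in a retained node $(l,R)$ to the unique finite run of $\GR$ ending in $(l,R)$ obtained by replacing each compressed edge by the path it abbreviates; conversely, every finite run of $\GR$ ending in a retained node \emph{compresses} to a unique run of $M_\Gg$ by deleting its internal deletable nodes and merging the corresponding edges. Because $M_\Gg$ is built precisely by compressing \emph{all} deletable segments of $\GR$, these two maps are mutually inverse on runs ending in retained nodes, and in particular they give a bijection between the histories at which Player $\Diamond$ must act in the two models.

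I would then define the strategy translation and verify probability preservation. Given a strategy $\sigma$ in $\GR$, set $g(\sigma)(h)=\sigma(\tilde h)$, where $\tilde h$ is the expansion of the $M_\Gg$-history $h$; this returns an edge enabled at the current $\Diamond$ node, which is common to both graphs, and the inverse translation is defined symmetrically by applying the $M_\Gg$-strategy to the compressed history. Measurability is immediate, since $M_\Gg$ is a finite MDP with purely discrete branching. Since target locations belong to $L_\Diamond$, every target node is retained and is never internal to a compressed edge, so the event ``reach $(l,R)$'' is carried bijectively between the two models by the run correspondence. Fixing corresponding strategies, the reachability probability is the countable sum of the probabilities of the symbolic paths that first hit $(l,R)$, and by the sub-lemma inside the proof of Lemma~\ref{lem:quant} each compressed symbolic path in $M_\Gg$ has exactly the probability of the path it abbreviates in $\GR$. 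Summing over all such paths yields the claimed equality of probabilities, and hence the biconditional in both directions.

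The main obstacle I anticipate is book-keeping rather than conceptual. One must check carefully that expansion and compression are genuinely inverse on the relevant histories, in particular that no target or $\Diamond$ node is ever hidden inside a compressed edge; this uses precisely the initialized and structurally non-Zeno hypotheses of $(\star)$, which guarantee that every deletable segment is finite and consists only of stochastic, non-target nodes (a self-loop on a deletable node would force a reset by non-Zenoness, contradicting deletability). One must also justify the termwise rearrangement of the countable sum defining the reachability probability, which is harmless here by non-negativity of the summands.
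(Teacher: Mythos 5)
Your proposal is correct and follows essentially the same route as the paper's proof: both establish the compression/expansion correspondence between histories in $\GR$ and $M_{\Gg}$ (relying, as you do, on initializedness to ensure no $\Diamond$ or target node is involved in a collapsed segment), define $g(\sigma)$ by applying $\sigma$ to the corresponding expanded history, and transfer reachability probabilities path-by-path using the probability preservation already established in the proof that $M_{\Gg}$ is an MDP. Your write-up is somewhat more explicit about the bijectivity of the history maps, measurability, and the rearrangement of the countable sum, but these are refinements of the paper's argument rather than a different proof.
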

\begin{proof}
There are two parts to the proof.
\begin{enumerate}
\item[(a)] Let $(l, R)$ and $(l', R')$ be two nodes in $M_{\Gg}$. Then for every path 
$\pi$ between $(l, R)$ and $(l', R')$ in $\GR$, we have a path $\pi'$  
in $M_{\Gg}$ and conversely. The probabilities of $\pi, \pi'$ are same 
in $\GR$ and $M_{\Gg}$. 
\item[(b)]  Show that for every strategy $\sigma$ in $\GR$, there exists a strategy $g(\sigma)$ in $M_{\Gg}$ 
that preserves probabilities.
\end{enumerate}
We can prove (a) and (b) together. 
 Consider $(l, R)$ and $(l', R')$ in $M_{\Gg}$ such that 
$l \in L_{\Diamond}$. Let $(l, R) \stackrel{f_1}{\rightarrow} (l_1, R_1) \stackrel{f_2}{\rightarrow} \dots \stackrel{f_n}{\rightarrow} (l_n, R_n) \stackrel{f}{\rightarrow}
 (l', R')$ be a path $\pi$ in $\GR$, according to a strategy $\sigma$ in $\GR$.  Lets see what happens to this path 
 in $M_{\Gg}$.  The operation of $remove(l_i, R_i)$  might remove some of the 
 intermediate nodes of $\pi$ (excluding the first and last, since by assumption they are in $M_{\Gg}$). 
 Let $(l_i, R_i)$ be the first such node to be deleted. Then in $M_{\Gg}$, we have all the nodes 
 from $(l, R)$ to $(l_{i-1}, R_{i-1})$. According to strategy $\sigma$, $(l_j, R_j)$ 
 has been selected based on the prefix till $(l_{j-1}, R_{j-1})$ whenever $l_{j-1} \in L_{\Diamond}$. 
 Clearly, in $M_{\Gg}$, if all nodes until $(l_i, R_i)$ are carried forward, 
 then the strategy chosen at all nodes $(l_j, R_j)$, $l_j \in  L_{\Diamond}$, 
 $j < i$ is the same as $\sigma$.
 
  If $(l_i, R_i)$ is deleted, clearly, $l_i \in L_{\bigcirc}$, 
 and $R_i \neq {\bf 0, \infty}$. Then, $l_{i-1} \notin L_{\Diamond}$, by definition 
 of initialized STG. Let   $(l_k, R_k)$, $k < i$ be the last node from 
 $L_{\Diamond}$ before $(l_i, R_i)$. By the delete operation, we obtain the path 
 $(l, R) \stackrel{f_1}{\rightarrow} (l_1, R_1) \stackrel{f_2}{\rightarrow} \dots (l_k, R_k) \stackrel{f_{k+1}}{\rightarrow} \dots 
 (l_{i-1}, R_{i-1}) \stackrel{f_if_{i+1}}{\rightarrow} 
 (l_{i+1}, R_{i+1})$  till $(l_{i+1}, R_{i+1})$ in $M_{\Gg}$. Continuing this, when we finish removing all deletable nodes, we 
 obtain the path $\pi'$ in $M_{\Gg}$ such that if nodes $(l_i, R_i),(l_{i+1}, R_{i+1}), \dots, 
 (l_s, R_s)$ are deleted, then we obtain the edge $(l_{i-1}, R_{i-1}) \stackrel{f_if_{i+1} \dots f_{s+1}}{\rightarrow} (l_{s+1}, R_{s+1})$ in $M_{\Gg}$. 
 For any path $\pi$ in $\Gg$, we obtain a unique path $\pi'$ in $M_{\Gg}$. 
 The strategy    $g(\sigma)$ in $M_{\Gg}$ is defined from strategy $\sigma$ in $\Gg$ as follows: 
   
  \begin{itemize}
 \item 
   If $\sigma$ maps $(l_h, R_h)$ to $(l'_h,R'_h)$ choosing edge $f$ based  
  on a path $\pi$ such that  $(l_h, R_h)$  is the last node of $\pi$, then in $M_{\Gg}$, 
  $g(\sigma)$ maps $(l_h, R_h)$ to $(l'_h,R'_h)$ choosing edge $f$ based  
  on the unique path $\pi'$ corresponding to $\pi$.  Note here that the only change 
  in the strategy $g(\sigma)$ as compared to $\sigma$ is the path $\pi'$ seen so far, obtained by deleting some nodes from 
  $\pi$.  
  \end{itemize}
 
 Given a path $\pi$ in $\GR$ as above, the probability 
 of the path is obtained from the edges $f_1f_2 \dots f_n f$. Since the sequence of labels 
 on the path $\pi'$ are exactly same as $f_1f_2 \dots f_n f$, the probability of $\pi$ and 
 $\pi'$ are the same. Since this is true about all paths $\pi$ in $\GR$, we have 
 the probability of reaching $(l', R')$ from $(l, R)$ 
 in $\Gg$ is same as the probability of reaching  
 $(l', R')$ from $(l, R)$ in $M_{\Gg}$, for any two nodes $(l', R')$, $(l, R)$
 in $M_{\Gg}$. 
 \end{proof}

\section{Example of a 2-clock STA with unfair runs}
\label{app:example}

This example has been taken from \cite{journal-sta} to help the reader get an intuition 
of why two clocks or the uninitialized condition  creates problems even in qualitative reachability. 
Our assumptions of 1-clock and initialized-ness circumvent these problems even for quantitative reachability.

  \begin{figure}[h]
   \scalebox{.8}{
   \begin{tikzpicture}[->,thick]                                                                                                                                          
    \node[cir,label=below:$y < 1$] at (-14,0) (A1) {E} ;   
     \node[cir] at (-11,0) (B) {C}; 
        \node[cir,label=below:$y \leq 2$,initial, initial text ={}, initial where=above] at (-7,0) (C) {B};  
   \node[cir] at (-4,0) (D) {F};
    \node[cir] at (-2,0) (E) {G};
    \path (B) edge node [above]{$e_5, y=2$}(A1);
    \path (B) edge node [below]{$y:=0$}(A1);
\path (C) edge node [above]{$e_1, y=2$}(D);
    
    \path (D) edge node [above]{$e_2, y=1$}(E);
    \path (D) edge node [below]{$y:=0$}(E);
    \path (E) edge[bend left] node [below]{$e_3, x>1, x:=0$}(C);
    
    \path (C) edge node [above]{$e_4, 1 < y <  2$}(B);
        \path (A1) edge[bend right] node [above]{$e_6, x>2$}(C);
        \path (A1) edge[bend right] node [below]{$x:=0$}(C);

 \end{tikzpicture}                                                                                                                                                      
}
 \end{figure}

In this example, one does not reach location $G$ almost surely, even though 
thats what one would conclude by working on the region graph. 
Every fair run using edges of non-zero probability indeed visits $G$ infinitely often. 
However, the problem is that the run $(e_4e_5e_6)^{\omega}$ has a non-zero probability. Thus, 
there is an unfair run in the automaton with a non-zero probability, and hence 
one cannot reach $G$ almost surely.

The interplay of the clocks $x,y$ is very useful here. In fact, if one starts in  node $B$ 
with $x=0,y=t_0$, then one reaches $E$ with $(2-t_0, 0)$. The enabled interval for edge 
$e_4$ is $(1-t_0, 2-t_0)$, while that for $e_6$ is $t_1 \in (t_0,1)$. Again, 
$e_4$ is enabled with time interval $(1-t_1, 2-t_1)$,
 while 
$e_6$ is enabled with $t_2 \in (t_1, 1)$ and so on. 

\begin{align*}
\mathcal{P}(\pi((B,(0,t_0)), (e_4e_5e_6))) &=\frac{1}{2-t_0}\int_{t=1-t_0}^{2-t_0}\frac{1}{1-t_0} \int_{t_1=t_0}^{1} dt_1 dt\\
&=\frac{1}{2-t_0}.\frac{1}{1-t_0} \int_{t_1=t_0}^{1}dt_1dt 
\end{align*}

In particular, it can be shown that 
\begin{align*}
\mathcal{P}(\pi((B,(0,t_0)), (e_4e_5e_6)^n)) &=\frac{1}{2-t_0}\int_{t=1-t_0}^{2-t_0}\frac{1}{1-t_0} \int_{t_1=t_0}^{1} \mathcal{P}(\pi((B,(0,t_1)), (e_4e_5e_6)^{n-1}))dt_1 dt\\
&=\frac{1}{2-t_0}.\frac{1}{1-t_0} \int_{t_1=t_0}^{1}\mathcal{P}(\pi((B,(0,t_1)), (e_4e_5e_6)^{n-1}))dt_1 dt
\end{align*}

By an inductive argument, \cite{journal-sta} shows that 
$\mathcal{P}(\pi((B,(0,t_0)), (e_4e_5e_6)^n))=\frac{t_0}{2-t_0}>0$, and 
$\mathcal{P}(\pi((B,(0,t_0)), (e_4e_5e_6)^{\omega}))>0$. 

Note that this example is an uninitialized STA with 2 clocks. If  one makes this example initialized, by resetting both $x,y$ on a transition 
(on $e_3, e_6$), then again it can be seen that the resulting automaton (Figures 10,11 in \cite{journal-sta}) also has unfair runs of non-zero probability.

\end{document}